%% file: main_combined.tex
\RequirePackage[T1]{fontenc}
\documentclass[journal]{IEEEtran}
\usepackage{iftex}
\ifPDFTeX
  \usepackage[utf8]{inputenc}
  
\fi
\usepackage{amsmath}
\usepackage{balance}
\newcommand{\TightDisplaySpacing}{%
  \setlength{\textfloatsep}{2pt plus 1pt minus 1pt}%
  \setlength{\abovecaptionskip}{1pt}%
  \setlength{\belowcaptionskip}{1pt}%
  \setlength{\abovedisplayskip}{2pt plus 1pt minus 1pt}%
  \setlength{\belowdisplayskip}{2pt plus 1pt minus 1pt}%
  \setlength{\abovedisplayshortskip}{1pt plus 1pt minus 1pt}%
  \setlength{\belowdisplayshortskip}{1pt plus 1pt minus 1pt}%
  \setlength{\jot}{1pt}%
}
\AtBeginDocument{\TightDisplaySpacing}
\usepackage{array}
\usepackage[table]{xcolor}
\IfFileExists{circledsteps.sty}{\usepackage{circledsteps}}{}
\usepackage{amssymb}
\usepackage{amsthm}
\usepackage{graphicx}
\usepackage{cite}
\usepackage{bm}
\usepackage{comment}
\usepackage{epsfig,psfrag}
\usepackage{tabularx}
\usepackage{acronym}
\usepackage[yyyymmdd,hhmmss]{datetime}
\usepackage{bbm}
\usepackage{notation}
\usepackage{textgreek}
\usepackage{mathrsfs}
\usepackage{multirow}
\usepackage{colortbl,pgfplotstable}
\usepackage{booktabs}
\usepackage{pgfplots}
\usepackage{tikz}
\usetikzlibrary{arrows,arrows.meta,shapes.arrows,calc,backgrounds,intersections}

\usepackage[caption=false,font=footnotesize]{subfig}
\usepackage{placeins}

\usepackage{algpseudocode}
\let\algpseudoForAll\ForAll
\let\algpseudoEndFor\EndFor
\let\algpseudoIf\If
\let\algpseudoElse\Else
\let\algpseudoEndIf\EndIf
\let\algpseudoState\State
\let\algpseudoRequire\Require
\let\algpseudoEnsure\Ensure
\let\algpseudoComment\Comment
\usepackage{algorithm2e}
\newcommand{\usealgpseudocodecommands}{%
  \let\ForAll\algpseudoForAll
  \let\EndFor\algpseudoEndFor
  \let\If\algpseudoIf
  \let\Else\algpseudoElse
  \let\EndIf\algpseudoEndIf
  \let\State\algpseudoState
  \let\Require\algpseudoRequire
  \let\Ensure\algpseudoEnsure
  \let\Comment\algpseudoComment
}
\usepackage{xstring}
\usepackage{bibunits}
\usepackage{hyperref}

\makeatletter
\let\CombinedNewLabel\newlabel
\renewcommand{\newlabel}[2]{%
  \CombinedNewLabel{#1}{#2}%
  \IfBeginWith{#1}{acro:}{}{%
    \CombinedNewLabel{main-#1}{#2}%
    \CombinedNewLabel{supp-#1}{#2}%
  }%
}
\renewcommand{\@bibunitname}{\jobname-bib\the\@bibunitauxcnt}
\AtBeginDocument{%
  \renewcommand{\bibcite}[2]{%
    \global\@namedef{b@#1\@extra@b@citeb}{%
      \hyper@@link[cite]{}{cite.#1\@extra@b@citeb}{#2}%
    }%
  }%
}
\makeatother
\hypersetup{
    colorlinks=true,
    urlcolor=black,
    linkcolor=black,
    citecolor=black,
    filecolor=black
}

\allowdisplaybreaks
\newcommand{\linkedref}[2]{\hyperref[#2]{#1~\ref*{#2}}}
\newcommand{\linkedrefpair}[3]{\hyperref[#2]{#1~\ref*{#2}} and~\hyperref[#3]{\ref*{#3}}}
\newcommand{\linkedrefrange}[3]{\hyperref[#2]{#1~\ref*{#2}}--\hyperref[#3]{\ref*{#3}}}
\newcommand{\secref}[1]{\linkedref{Section}{#1}}

\newcommand{\figref}[1]{\linkedref{Fig.}{#1}}

\newcommand{\tabref}[1]{\linkedref{Table}{#1}}
\newcommand{\Tabref}[1]{\tabref{#1}}
\newtheorem{definition}{Definition}
\newtheorem{proposition}{Proposition}
\newtheorem{assumption}{Assumption}
\newtheorem{remark}{Remark}

\newcolumntype{L}[1]{>{\raggedright\arraybackslash}p{#1}}
\newcolumntype{C}[1]{>{\centering\arraybackslash}p{#1}}
\newcolumntype{R}[1]{>{\raggedleft\arraybackslash}p{#1}}
\definecolor{BLUE}{rgb}{0,0,1}

\definecolor{niceblue}{RGB}{0,154,255}
\definecolor{plotC}{HTML}{1974D2} 
\definecolor{plotCO}{HTML}{005A9C} 
\definecolor{plotCP}{HTML}{0B3C5D} 
\definecolor{plotCL}{HTML}{17BECF} 
\definecolor{plotD}{HTML}{FF7F0E} 
\definecolor{plotDO}{HTML}{B35C00} 
\definecolor{plotETT}{HTML}{7F7F7F} 
\definecolor{plotHM}{HTML}{2E8B57} 
\definecolor{plotHWM}{HTML}{CC0000} 
\definecolor{plotHP}{HTML}{8C564B} 
\definecolor{plotHL}{HTML}{9467BD} 
\colorlet{plotHMO}{plotHWM} 
\definecolor{plotHWTM}{HTML}{00B050} 

\newcommand{\DocumentLegendFont}{\fontsize{7.2}{7.8}\selectfont}

\newcommand{\R}{\mathbb{R}}

\providecommand{\bl}[1]{#1}
\renewcommand{\bl}[1]{#1}
\newcommand{\blsuppref}[2]{\hyperref[supp-#1]{#2}}
\newcommand{\bleqref}[1]{\hyperref[supp-#1]{(\ref*{supp-#1})}}

\providecommand{\rev}[1]{#1}
\newenvironment{revision}{}{}
\definecolor{forestgreen}{RGB}{34,139,34}

\DeclareRobustCommand{\revHyowon}[1]{{#1}}
\DeclareRobustCommand{\revYu}[1]{{#1}}
\DeclareRobustCommand{\revGrammar}[1]{{#1}}
\DeclareRobustCommand{\revGrBP}[1]{{#1}}
\DeclareRobustCommand{\grbp}{\revGrBP{\acs{GrBP}}}
\DeclareRobustCommand{\grbpv}[1]{\revGrBP{\mbox{\acs{GrBP}-#1}}}
\input{acronyms}

\pgfplotsset{compat=1.14}
\pgfplotsset{
  every axis legend/.append style={
    font=\DocumentLegendFont
  }
}
\pgfplotsset{
  start from first ho/.style={
    row predicate/.code={%
      \ifnum\pdfstrcmp{\thisrow{frame}}{summary}=0\relax
        \pgfplotstableuserowfalse
      \else
        \ifx\FirstNonZeroSeen\relax
          \pgfmathparse{\thisrow{ho_count}>0}
          \ifnum\pgfmathresult=1
            \pgfplotstableuserowtrue
            \def\FirstNonZeroSeen{1}
          \else
            \pgfplotstableuserowfalse
          \fi
        \else
          \pgfplotstableuserowtrue
        \fi
      \fi
    }
  }
}
\pgfplotsset{
  start from first active track/.style={
    row predicate/.code={%
      \ifnum\pdfstrcmp{\thisrow{frame}}{summary}=0\relax
        \pgfplotstableuserowfalse
      \else
        \ifx\FirstActiveTrackSeen\relax
          \pgfmathparse{\thisrow{track_count_avg}>0}%
          \ifnum\pgfmathresult=1
            \pgfplotstableuserowtrue
            \def\FirstActiveTrackSeen{1}%
          \else
            \pgfplotstableuserowfalse
          \fi
        \else
          \pgfplotstableuserowtrue
        \fi
      \fi
    }
  }
}
\pgfplotsset{
  start from first active monte carlo track/.style={
    row predicate/.code={%
      \ifnum\pdfstrcmp{\thisrow{frame}}{summary}=0\relax
        \pgfplotstableuserowfalse
      \else
        \pgfmathparse{\thisrow{track_count_contrib}>1}%
        \ifnum\pgfmathresult=1
          \ifx\FirstActiveTrackSeen\relax
            \pgfmathparse{\thisrow{track_count_avg}>0}%
            \ifnum\pgfmathresult=1
              \pgfplotstableuserowtrue
              \def\FirstActiveTrackSeen{1}%
            \else
              \pgfplotstableuserowfalse
            \fi
          \else
            \pgfplotstableuserowtrue
          \fi
        \else
          \pgfplotstableuserowfalse
        \fi
      \fi
    }
  }
}

\begin{document}

\begin{bibunit}[IEEEtran]
\makeatletter
\def\@extra@b@citeb{.main}
\@startbibunitorrelax
\makeatother
\bstctlcite[@bibunitaux]{IEEEexample:BSTcontrol}

\title{Scalable Extended-Target Handover in Distributed Integrated Sensing and Communication}

\author{Liping Bai,~\IEEEmembership{Student Member,~IEEE}, Yu Ge,~\IEEEmembership{Member,~IEEE}, Ali Al Khansa, \\Hyowon Kim,~\IEEEmembership{Member,~IEEE}, Erik Leitinger,~\IEEEmembership{Member,~IEEE}, and Henk Wymeersch,~\IEEEmembership{Fellow,~IEEE}%
\thanks{L. Bai and H. Wymeersch are with the Department of Electrical Engineering, Chalmers University of Technology, 412
96 Gothenburg, Sweden\revGrammar{.}}%
\thanks{Y. Ge is with the Research Laboratory of Electronics, Massachusetts Institute of Technology, 02139--4307 Massachusetts, United States of America.}
\thanks{A. Al Khansa is with Orange Labs, 35510 Rennes, France.}
\thanks{H. Kim is with Chungnam National University, Daejeon 34134, South Korea.}%
\thanks{E. Leitinger is with the Institute of
Communication Networks and Satellite Communications, Graz University of
Technology, Graz 8010, Austria.}
\thanks{This work was supported by the Wallenberg Foundations through the Wallenberg AI, Autonomous Systems and Software Program and by the SNS JU project 6G-DISAC under the EU's Horizon Europe research and innovation program under Grant Agreement No. 101139130.}%
\thanks{Part of this work has been presented in the conference paper~\cite{bai2025beliefpropagationbasedtargethandover}.}
\thanks{OpenAI Codex was used for editorial refinement of the manuscript and for suggesting refactorings and implementation optimizations to the code. All suggestions were critically reviewed, edited where appropriate, and validated by the authors. The authors retain full responsibility for the manuscript, software, experimental design, results, and conclusions.}}

\maketitle

\begin{abstract}
\Ac{DISAC} networks require
multi-target tracking methods whose per-node computation and inter-base-station
communication remain bounded as both the network and target populations grow.
Existing multisensor fusion methods provide \revGrammar{a} strong theoretical foundation, but
scalability is seldom treated as the primary design objective. \revGrBP{We develop a grouped-measurement belief-propagation \ac{MTT} method and combine it with a proposed event-triggered target-handover protocol.} When a tracked target is predicted to become observable at another
base station, the owner transfers a predicted track message while retaining its
local copy. Under bounded local workload and
neighbor degree, we show that all variants of the proposed handover methods have network-size-independent
per-node loads. Simulations demonstrate that handover reduces boundary track
loss relative to uncoordinated processing and approaches coordinated-processing
accuracy with lower communication overhead.
\end{abstract}

\begin{IEEEkeywords}
\ac{DISAC}, extended-target tracking, target handover, belief propagation, 6G.
\end{IEEEkeywords}

\input{sections/sec_introduction}
\input{sections/sec_system_model}
\input{sections/sec_detection_density}

\input{sections/sec_reference_methods}
\input{sections/sec_handover}

\input{sections/sec_numerical_results}
\input{sections/sec_conclusion}
\appendices
\begingroup
\setcounter{equation}{0}%
\renewcommand{\theequation}{\thesection.\arabic{equation}}%
\renewcommand{\theHequation}{appendix\thesection.\arabic{equation}}%
\input{sections/sec_appendix_proofs}
\endgroup

\clearpage
\end{bibunit}

\begin{bibunit}[IEEEtran]
\makeatletter
\def\@extra@b@citeb{.supplement}
\@startbibunitorrelax
\makeatother
\setcounter{section}{0}
\setcounter{subsection}{0}
\setcounter{subsubsection}{0}
\setcounter{equation}{0}
\setcounter{figure}{0}
\setcounter{footnote}{0}
\renewcommand{\theHsection}{supplement.\arabic{section}}
\renewcommand{\theHfootnote}{supplement.\arabic{footnote}}
\renewcommand{\thefigure}{S\arabic{figure}}
\renewcommand{\theHfigure}{supplement.\arabic{figure}}
\renewcommand{\theequation}{S\text{-}A.\arabic{equation}}
\renewcommand{\theHequation}{supplement.SA.\arabic{equation}}
\renewcommand{\thesubsection}{S-A.\Alph{subsection}}
\renewcommand{\theHsubsection}{supplement.SA.\Alph{subsection}}
\renewcommand{\theHsubsubsection}{supplement.SA.\Alph{subsection}.\arabic{subsubsection}}

\bstctlcite[@bibunitaux]{IEEEexample:BSTcontrol}
\title{\LARGE Supplementary Material\\[0.4em]
\Large Scalable Extended-Target Handover in Distributed\\
Integrated Sensing and Communication}
\author{Liping Bai,~\IEEEmembership{Student Member,~IEEE}, Yu Ge,~\IEEEmembership{Member,~IEEE}, Ali Al Khansa, \\
Hyowon Kim,~\IEEEmembership{Member,~IEEE}, Erik Leitinger,~\IEEEmembership{Member,~IEEE}, and Henk Wymeersch,~\IEEEmembership{Fellow,~IEEE}%
\thanks{L. Bai and H. Wymeersch are with the Department of Electrical Engineering, Chalmers University of Technology, 412
96 Gothenburg, Sweden.}%
\thanks{Y. Ge is with the Research Laboratory of Electronics, Massachusetts Institute of Technology, 02139--4307 Massachusetts, United States of America.}
\thanks{A. Al Khansa is with Orange Labs, 35510 Rennes, France.}
\thanks{H. Kim is with Chungnam National University, Daejeon 34134, South Korea.}%
\thanks{E. Leitinger is with the Institute of
Communication Networks and Satellite Communications, Graz University of
Technology, Graz 8010, Austria.}}
\maketitle

\noindent\textit{Section and equation references without an ``S'' prefix refer to the main article.}

\input{sections/sec_appendixI}


\clearpage
\end{bibunit}
\end{document}

%% file: acronyms.tex
\makeatletter
\newcommand{\AcronymList}{}
\newcommand{\DeclareAcronym}[3]{%
  \acrodef{#1}[#2]{#3}%
  \g@addto@macro\AcronymList{%
    \acro{#1}[#2]{#3}%
  }%
}
\makeatother

\DeclareAcronym{AoA}{AoA}{angle of arrival}
\DeclareAcronym{ADMM}{ADMM}{alternating direction method of multipliers}
\DeclareAcronym{BP}{BP}{belief propagation}
\DeclareAcronym{BS}{BS}{base station}
\DeclareAcronym{CPHD}{CPHD}{cardinalized probability hypothesis density}
\DeclareAcronym{CTRV}{CTRV}{constant turn rate and constant velocity}
\DeclareAcronym{DBSCAN}{DBSCAN}{Density-Based Spatial Clustering of Applications with Noise}
\DeclareAcronym{DISAC}{DISAC}{distributed integrated sensing and communication}
\DeclareAcronym{DMTT}{DMTT}{distributed multi-target tracking}
\DeclareAcronym{DSN}{DSN}{distributed sensor network}
\DeclareAcronym{EKF}{EKF}{extended Kalman filter}
\DeclareAcronym{ET}{ET}{\revGrammar{extended target}}
\DeclareAcronym{ETT}{ETT}{extended-target tracking}
\DeclareAcronym{FOV}{FoV}{field of view}
\DeclareAcronym{GCI}{GCI}{generalized covariance intersection}
\DeclareAcronym{GLMB}{GLMB}{generalized labeled multi-Bernoulli}
\DeclareAcronym{GGIW}{GGIW}{gamma Gaussian inverse Wishart}
\DeclareAcronym{GOSPA}{GOSPA}{generalized optimal subpattern assignment}
\DeclareAcronym{GPU}{GPU}{graphics processing unit}
\DeclareAcronym{ISAC}{ISAC}{integrated sensing and communication}
\DeclareAcronym{LMB}{LMB}{labeled multi-Bernoulli}
\DeclareAcronym{MIMO}{MIMO}{multiple-input multiple-output}
\DeclareAcronym{MOT}{MOT}{multi-object tracking}
\DeclareAcronym{OFDM}{OFDM}{orthogonal frequency-division multiplexing}
\DeclareAcronym{PMF}{PMF}{probability mass function}
\DeclareAcronym{MTT}{MTT}{multi-target tracking}
\DeclareAcronym{PDF}{PDF}{probability density function}
\DeclareAcronym{PHD}{PHD}{probability hypothesis density}
\DeclareAcronym{GrBP}{GrBP}{grouped-measurement belief propagation}
\DeclareAcronym{PMB}{PMB}{Poisson multi-Bernoulli}
\DeclareAcronym{PMBM}{PMBM}{Poisson multi-Bernoulli mixture}
\DeclareAcronym{OMP}{OMP}{orthogonal matching pursuit}
\DeclareAcronym{PT}{PT}{potential target}
\DeclareAcronym{PPP}{PPP}{Poisson point process}
\DeclareAcronym{RFS}{RFS}{random finite sets}
\DeclareAcronym{RMSE}{RMSE}{root mean square error}
\DeclareAcronym{ToA}{ToA}{time of arrival}
\DeclareAcronym{UID}{UID}{unique identifier}

\acrodefplural{BS}[BSs]{base stations}
\acrodefplural{ET}[ETs]{\revGrammar{extended targets}}
\acrodefplural{FOV}[FoVs]{fields of view}
\acrodefplural{PT}[PTs]{potential targets}

%% file: sections/sec_introduction.tex
\acresetall
\vspace{-5mm}
\section{Introduction}
\label{sec:introduction}
\Ac{ISAC} \cite{ISAC} uses a common wireless infrastructure for communication and monostatic, bistatic, or multistatic sensing \cite{meng2024cooperativesensing,meng2025cooperativeisacscaling}. Large deployments lead to \ac{DISAC} networks of \acp{BS} with partially overlapping \acp{FOV} \cite{strinati2024distributedintelligentintegratedsensing,strinati2024disacapproach} (Fig.~\ref{fig:target-handover-overview}). As infrastructure, monitored area, and target population grow, computation at every processing node and communication on every backhaul link must remain bounded with network size. Scalability is therefore a primary design objective.
\begin{figure}
    \centering
    \includegraphics[width=0.8\linewidth]{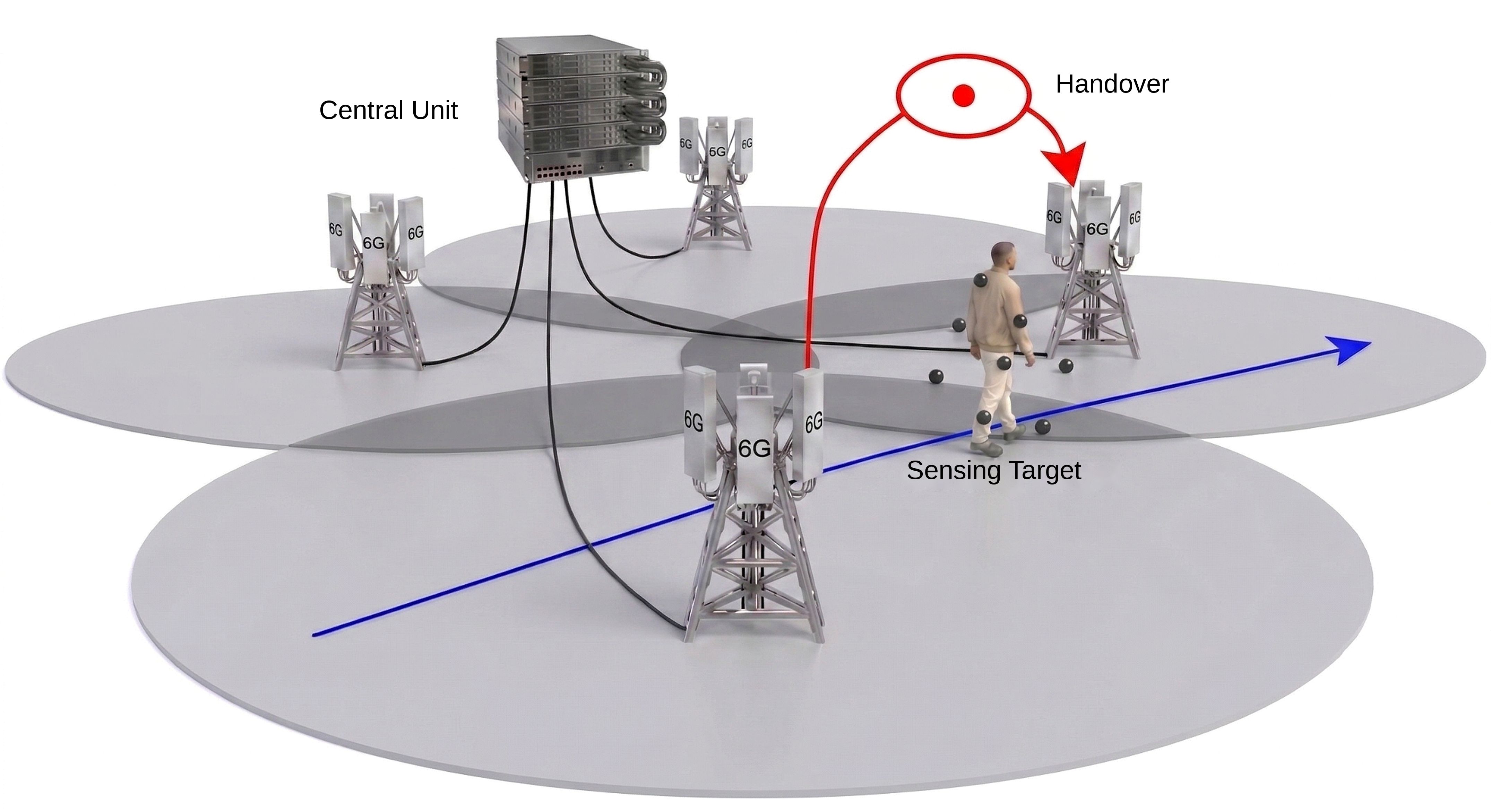}
    \caption{Extended-target handover in a \ac{DISAC} network. Gray discs
denote the \acp{FOV}, the blue curve is the target trajectory, black dots are
\revYu{incidence points}, and the red segment highlights handover between two \acp{BS}.\revGrammar{ The} central unit is for the coordinated baseline.}
    \label{fig:target-handover-overview}
\end{figure}

Scalability is required first at the local-filter level. The point-target model
used in most of the multisensor \ac{MTT} methods is inadequate when one physical target can
generate several spatially distributed detections in a scan. Such targets are
modeled as \acp{ET} \cite{granstrom2017extendedobjecttrackingintroduction}. In addition to
kinematic-state and data-association uncertainty, \revYu{\ac{ET} tracking} must account
for \emph{partition uncertainty}: the measurements are divided into groups,
and the resulting groups are associated with existing or newborn
targets. Existing methods marginalize over selected partitions
\cite{granstrom2012gmphd,lundquist2013cphd,granstrom2012phd,beard2016labeledeot},
retain multiple partition--association hypotheses in a multi-object posterior
\cite{granstrom2020pmbm,xia2022pmbapprox}, sample the joint partition and
association space \cite{sampling_method,xia2023trajectorybp}, or instantiate a
potential new target from each measurement
\cite{meyer2020scalable,meyer2021scalable}. These methods provide sophisticated
\revYu{\ac{ET} tracking} solutions, but they do not treat scalability as a
primary design objective.

Scalability is also required at the multisensor-fusion level. The established
multisensor \ac{MTT} literature provides a sound theoretical foundation for
fusion of \acp{PDF} \cite{fusion}.
Centralized multisensor filters combine measurements or likelihoods through
multi-scan estimation \cite{moratuwage2022multi}, \ac{CPHD} filtering
\cite{nannuru2016multisensor}, \ac{GLMB} filtering
\cite{vo2019multisensorglmb}, or \ac{PMB} filtering
\cite{frohle2019multisensorpmb}.
Centralized sum-product inference also supports joint cooperative localization
and point-target \ac{MTT}
\cite{BraGagSolRicGabLepNicWilBraWin:JSP2022}.
Distributed methods fuse local \ac{PHD} intensities using exponential-mixture
rules \cite{uney2013distributed}, \ac{CPHD} intensities using consensus
\cite{battistelli2013consensus_cphd}, or multi-view \ac{CPHD} intensities using
\ac{GCI} or arithmetic averaging \cite{li2021distributedmultiviewcphd}.
{Labeled multi-Bernoulli densities are fused using Kullback--Leibler averaging
\cite{fantacci2018distributed}}, while $\delta$-\ac{GLMB} densities are fused
using cross-entropy rules {\cite{saucan2018distributed}}.
For extended targets, decentralized \ac{PMB} fusion exchanges posterior
parameters describing existence, kinematics, and extent and combines the
corresponding densities through Kullback--Leibler averaging
\cite{frohle2020decentralizedpmb}. \ac{PMB} posteriors are also fused using
\ac{GCI} \cite{garcia2025distributed}.
Further extensions explicitly address limited or heterogeneous \acp{FOV}
\cite{van2021distributed,chen2025distributed,xiong2024distributedlimitedfov}.
For joint cooperative localization and point-target \ac{MTT}, decentralized
\ac{BP} exchanges agent-state beliefs and likelihood parameters while keeping target measurements local \cite{SharmaTSP2019}.
These methods form strong coordinated-processing baselines, but many assume
a fixed sensor network, repeated consensus, a global label structure, or a
fusion node whose load grows with the network. They therefore do not, in
general, establish the bounded per-node scalability required by \ac{DISAC}
networks.

Target handover offers a local alternative to continuous network-wide fusion.
Related ideas appear in multicamera \ac{MTT}, where tracks are transferred
across limited or non-overlapping views using predicted motion and appearance cues \cite{amosa2023multicamera,wang2022camerahandoff,tesfaye2019multitarget}.
Radio sensing cannot rely on visual appearance. Therefore, what \revGrammar{is} to be \revGrammar{handed over} in a \ac{DISAC} network needs to be studied. Recent
\ac{DISAC} and networked-radar studies have considered trajectory-\ac{PMB}
handover \cite{ge2024targethandoverdistributedintegrated}, \ac{BP}-based
point-target handover \cite{bai2025beliefpropagationbasedtargethandover}, target
assignment with seamless radar handovers \cite{kim2019optimal}, and local
handover protocols \cite{fan_handover}. These formulations either use
point-target models, rely on global track labels, or do not establish bounded
per-node computation and communication for growing networks. The closest prior works are summarized in \Tabref{tab:related_work}.

\begin{table}[!t]
\centering
\caption{Positioning of the proposed method relative to the closest handover
and scalable-sensing studies. \revYu{``Obj.'' identifies
the target model. ``Labels'' distinguishes local from global label management.
``Exch.'' identifies the exchanged prior (P), measurements (M), track-level
belief (Q), assignment (Assign.), or sensing configuration (Cfg.). ``OT''
denotes explicit ownership transfer. ``Scal.'' records whether bounded
per-node load as \(N\to\infty\) is established. N/R means not reported.}}
\label{tab:related_work}
{\footnotesize
\setlength{\tabcolsep}{3.5pt}
\begin{tabular}{@{}lccccc@{}}
\toprule
Work & Obj. & Labels & Exch. & OT & Scal.\\
\midrule
\cite{kim2019optimal} & {Point} & -- & Assign. & Yes & No\\
\cite{ribeiro2025mobilitymanagementintegratedsensing} & {Point} & -- & Cfg. & -- & N/R\\
\cite{ge2024targethandoverdistributedintegrated} & {Point} & Global & P & -- & No\\
\cite{bai2025beliefpropagationbasedtargethandover} & {Point} & Local & P,\,M & No & N/R\\
\cite{fan_handover} & {Point} & Local & P,\,M & Yes & N/R\\
\cite{liesegang2026scalableisac} & {Point} & -- & -- & -- & Yes\\
\textbf{Proposed} & {Extended} & Local & P\,/\,M\,/\,Q & Yes & Yes\\
\bottomrule
\end{tabular}}
\end{table}

\revGrBP{We propose a grouped-measurement \ac{BP} \ac{MTT} method, referred to as \grbp{}, together with event-triggered handover for \ac{DISAC}. Throughout the paper, \grbp{} denotes the common local \ac{MTT} method, while suffixes distinguish coordinated, uncoordinated distributed, and handover deployments. The proposed prior-only handover scheme is denoted \grbpv{H}. \grbpv{HM}, \grbpv{HL}, and \grbpv{HP} additionally exchange measurements, likelihood messages, and posteriors, respectively.}\acused{GrBP} Each \ac{BS} conditions its local update on one externally generated
measurement partition. 
Across
\acp{BS}, a local owner--shadow label structure transfers information only to
sensing neighbors for which handover is triggered. This \revGrammar{extends} the \ac{BP}-based handover method proposed in 
\cite{bai2025beliefpropagationbasedtargethandover} by introducing the \ac{ET}
measurement and filtering model, owner--shadow track management, additional
belief-exchange variants, and a formal per-node scalability analysis. The main
contributions are:
\begin{itemize}
\item \textbf{A range-dependent \ac{ET} detection scatter density model for \ac{DISAC} networks.} The model links the expected target-originated detection density to the radar resolution cell and the target-\ac{BS} range. 

    \item \textbf{\revGrBP{GrBP} based on grouped measurements under a single partition.} To deal with a potentially large number of measurements, the measurements at each scan are first converted into disjoint groups by an external clustering step, and \revGrBP{GrBP} then performs data association between legacy/newborn targets and these measurement groups. 
    
    \item \textbf{Proposed event-triggered handover for \revGrBP{GrBP}.} When the
confidence-gated expected detection count at a neighboring \ac{BS} exceeds a
threshold, the owner transfers the predicted track once (\grbpv{H}). The extensions exchange a measurement group (\grbpv{HM}), likelihood message (\grbpv{HL}), or posterior belief (\grbpv{HP}) during local co-observation.
\item \textbf{Scalability analysis.} We define a scalability criterion for \ac{DISAC} \ac{MTT}. Under the stated bounded-degree, bounded-local-workload, and finite-representation assumptions, centralized coordinated processing is not scalable, whereas the proposed handover variants and uncoordinated distributed processing have network-size-independent expected per-node loads.
\end{itemize}
\revGrBP{We assess information exchange and compare uncoordinated distributed GrBP with the state-of-the-art \ac{BP}-based \ac{ETT} method \cite{meyer2021scalable} in accuracy and runtime.} Code and detailed results are at \url{https://github.com/BaiLiping/EO_Target_Handover}. \secref{sec:system_model} gives the system model, and \secref{sec:detection_rate} gives the measurement model. \secref{sec:processing_architectures} presents \revGrBP{GrBP} and baseline deployments, and \secref{sec:handover} presents the proposed handover and scalability analysis. \secref{sec:numerical_results} presents numerical results and limitations, and \secref{sec:conclusion} concludes.

{\footnotesize\bl{\textit{Notation and Terminology:}
Bold lowercase and uppercase letters denote vectors and matrices. \((\cdot)^\T\),
\(\|\cdot\|_2\), \(|\cdot|\), and \(\ind\{\cdot\}\) denote transpose, Euclidean
norm, cardinality (or area for a planar region), and the indicator function.
\(\Set{N}(\cdot\mathpunct{,}\V m,\M P)\) denotes a Gaussian density. \revHyowon{\emph{Prior}} and \revHyowon{\emph{posterior}}
refer to \acp{PDF}. \revHyowon{Indices} are added as needed.}}

%% file: sections/sec_system_model.tex
\vspace{-1mm}
\section{System Model}
\label{sec:system_model}
\begingroup
\setlength{\abovedisplayskip}{3pt plus 1pt minus 1pt}
\setlength{\belowdisplayskip}{3pt plus 1pt minus 1pt}
\setlength{\abovedisplayshortskip}{2pt plus 1pt minus 1pt}
\setlength{\belowdisplayshortskip}{2pt plus 1pt minus 1pt}
\setlength{\jot}{2pt}

We consider a \ac{DISAC} network with a large number \(N\) of time-synchronized monostatic sensing\footnote{This work focuses exclusively on monostatic sensing in DISAC networks. User communication is not explicitly accounted for.} \acp{BS} that observe an unknown and time-varying number of extended targets over partially overlapping \acp{FOV}. New targets may enter the collective surveillance region, and existing targets may leave it. This section defines the network geometry, target dynamics, generic measurement model, and the scalability criterion.
\vspace{-1mm}
\subsection{Connectivity Graph}

\bl{Let \(\Set{B}\triangleq\{1,\ldots,N\}\). \ac{BS}~\(s\) is located at
\(\V{b}_s\in\mathbb R^2\), has \revGrammar{a} bounded sensing region
\(\Set{S}_s\subset\mathbb R^2\), and the collective surveillance region is
\(\Set{S}\triangleq\bigcup_{s\in\Set{B}}\Set{S}_s\). For distinct \acp{BS}
\(s,s'\), define their overlap and its area as
\(\Set{O}_{s,s'}\triangleq\Set{S}_s\cap\Set{S}_{s'}\) and
\(A_{s,s'}\triangleq\operatorname{area}(\Set{O}_{s,s'})\). The undirected
sensing-connectivity graph is \(G=(\Set{B},\Set{E})\), where
\(\Set{E}\triangleq\{\{s,s'\}:s,s'\in\Set{B},\ s<s',\ A_{s,s'}>0\}\).
\(s<s'\) lists each edge once. The sensing neighbors of \(s\) are
\(\Set{B}_s\triangleq\{s'\in\Set{B}:\{s,s'\}\in\Set{E}\}\).}
Each \ac{BS} knows its own sensing region, the sensing regions of its neighbors,
and the relative geometry required to evaluate pairwise overlap. Every sensing
edge is assumed to have a direct, reliable backhaul link within a sensing
frame. These idealized frame-level synchronization and delivery assumptions let
us isolate algorithmic computation and communication scaling. Delay, packet
loss, reordering, and clock offsets are left for future studies.

\subsection{Dynamics Model}

\bl{At time \(k\), target \(i\) has position \(\V{p}_k^i\in\mathbb R^2\),
kinematic state
\(\V{x}_k^i=[(\V{p}_k^i)^\T,(\V{\chi}_k^i)^\T]^\T\), where \(\V{\chi}_k^i\)
collects any additional model-dependent variables, and extent
\(\M{E}_k^i\in\mathbb S_{++}^2\). Its state is the kinematic--extent pair
\((\V{x}_k^i,\M{E}_k^i)\), and targets evolve independently.}
%
The single-object transition density is
\input{Equations/eq_state_transition}\bl{where the second equality assumes that,
given the current kinematic state, the next kinematic state is independent of
the current extent, and that the next extent is independent of the next
kinematic state given the current kinematic--extent pair.}
\vspace{-1mm}
\subsection{Measurement Model}

\bl{We next define the local measurement vector and single-scatter model.}

\subsubsection{{Measurements and Clutter}}

{At time step \(k\), \ac{BS}~\(s\) acquires \(m_{s,k}\) measurements, collected in the local measurement set}
$\V{Z}_{s,k}\triangleq
\{\V{z}_{s,k}^{1},\dots,\V{z}_{s,k}^{m_{s,k}}\}$.
{Each measurement can originate from either a target or clutter. Clutter is modeled by a \ac{PPP} with intensity \(c(\V{z})\), where the dependence on the receiving \ac{BS} is kept implicit unless needed.}

\subsubsection{{Single-Measurement Model}}

\bl{If \(\V{z}_{s,k}^j\) originates from target \(i\), then, following
\cite{meyer2021scalable},
\(\V{z}_{s,k}^j=\V{h}_s(\V{p}_k^i,\RV{o}_{s,k}^j)+\V{\nu}_{s,k}^j\), where
\(\V{h}_s\) is the nonlinear measurement function and \(\RV{o}_{s,k}^j\) is
the scattering-point offset. The models
\(\V{\nu}_{s,k}^j\sim\Set{N}(\V{0},\M{R}_s)\) and
\(\RV{o}_{s,k}^j\sim\Set{N}(\V{0},\zeta\M{E}_k^i)\) use measurement covariance
\(\M{R}_s\) and extent-scaling factor \(\zeta>0\)
\cite[Eq.~(13)]{granstrom2017extendedobjecttrackingintroduction}. Numerically,
\(\M{R}_s=\operatorname{diag}(\sigma_r^2,\sigma_\theta^2)\). Measurements may
contain channel parameters such as range and angle, obtained using standard
estimators \cite{venugopal2017channel,jiang2021beamspace}.}
{Separately, choose a footprint scale \(\kappa>0\) for the bounded extent
used by the resolution-cell count and visibility model, and define}
\(
{\mathcal{A}_\kappa(\V{x}_k^i, \M{E}_k^i)}
\triangleq
\left\{
\V{q}\in\mathbb{R}^2 :
(\V{q}-\V{p}_k^i)^\T
(\M{E}_k^i)^{-1}
(\V{q}-\V{p}_k^i)
\le \kappa
\right\},
\)
{with area}
\(
A_\kappa(\M{E}_k^i)=\operatorname{area}(\mathcal{A}_\kappa(\V{x}_k^i, \M{E}_k^i))=\pi\kappa\sqrt{\det\M{E}_k^i} .
\)
{The parameters \(\zeta\) and \(\kappa\) therefore serve different
purposes. \(\kappa\) is not interpreted as a Gaussian confidence probability.}

{The state-dependent Poisson measurement-count mean and the marginalized
single-measurement likelihood, including finite-\ac{FOV} visibility, are specified
together in \secref{sec:detection_rate}.}
\vspace{-1mm}
\subsubsection{Measurement Grouping}

All \grbp{}-based architectures operate on grouped measurements obtained from an external clustering algorithm (e.g., \ac{DBSCAN} \cite{ester1996density}), rather than associating individual detections directly. This grouping step belongs to \revGrBP{GrBP}. \revGrBP{The BP-based ETT method instead associates individual measurements (\secref{sec:numerical_results}).} At each scan, the external clustering algorithm partitions the local measurement set \(\V{Z}_{s,k}\) into \(n_{s,k}^{g}\) disjoint groups,
$\V{Z}_{s,k}^{G} \triangleq
\{\V{Z}_{s,k}^{1},\dots,\V{Z}_{s,k}^{n_{s,k}^{g}}\}$,
where \(\V{Z}_{s,k}^{j}\) denotes the \(j\)th group. Each group is treated as originating from at most one target. 

\subsection{Scalability Criterion and Assumptions}
\label{subsec:problem_formulation}

{We study scalability in the regime where the number of \acp{BS} \(N\) and the number of targets grow unboundedly. Following \cite[Definition 1]{liesegang2026scalableisac}, we assume bounded sensing regions {and a deployment with uniformly bounded local \ac{BS} density}. {We separately require} a bounded number of sensing neighbors, with maximum sensing-graph degree \(d_{\max}\triangleq\sup_{N}\max_{s\in\Set{B}}|\Set{B}_s|\) independent of \(N\). A scalable \ac{MTT} management scheme must keep both local processing and inter-\ac{BS} communication bounded independently of \(N\). Since these two resources are affected by different mechanisms, we state separate computation and communication conditions.}
The scalability analysis rests on three regularity assumptions.

\begin{assumption}[Local interaction]
\label{as:degree}
The sensing-graph degree is bounded, \(|\Set{B}_s|\le d_{\max}<\infty\)
uniformly in \(s\) and \(N\). 
\end{assumption}

To characterize the local workload, let $n^t_{s,k}$ denote the
number of retained legacy targets at \ac{BS}~\(s\) that enter the
frame-$k$ local update. Let
$n^g_{s,k}\triangleq |\V{Z}_{s,k}^{G}|$ denote the number of
local measurement groups. Thus, the local update processes
$n^t_{s,k}$ legacy targets and introduces at most one newborn
component per measurement group.

\begin{assumption}[Stable local workload]
\label{as:workload}
The numbers of legacy targets and measurement groups have uniformly
bounded second moments: there exist finite constants
$\bar n^t$ and $\bar n^g$, independent of $N$, such that $\sup_{N,s,k}\mathbb{E}\![(n^t_{s,k})^2]
\leq \bar n^t$ and $\sup_{N,s,k}\mathbb{E}\![(n^g_{s,k})^2]
\leq \bar n^g$.
\end{assumption}

Assumption \ref{as:workload} holds under a standard stationary marked-\ac{PPP} target model with bounded local sensing footprints and count moments, \ac{PPP} clutter of bounded local intensity, partition-based grouping, and stable track management with finite unsupported-track lifetime.

\begin{assumption}[Finite representation]
\label{as:representation}
Every belief and track-level message in the MTT filter is represented with at most a fixed, \(N\)-independent number of parameters (e.g., a bounded \ac{EKF}/Gaussian--inverse-Wishart mixture order), the single-\ac{BS} update runs a bounded number of message-passing iterations, and each handover criterion is evaluated at \(N\)-independent cost. {Local grouped-measurement likelihoods are represented by fixed-dimensional sufficient statistics. Event/control and track-level packet sizes are bounded by \(N\)-independent constants. For measurement handover, let \(B^{\max}_{Z,s,k}\) denote the largest encoded byte size of a local raw measurement group in frame \(k\). We assume \(\sup_{N,s,k}\mathbb E[(B^{\max}_{Z,s,k})^2]<\infty\), rather than a deterministic size cap.}
\end{assumption}


A \ac{DISAC} management scheme $\mathcal{M}$ specifies the set of processing nodes $\Set{V}(\mathcal{M})$ (or $\Set{V}_{N}(\mathcal{M})$ to make $N$ explicit), with \(\Set{V}(\mathcal M)=\Set{B}\) for fully distributed
schemes, and \(\Set{B}\cup\{0\}\) when a central fusion node \(0\) is used. Let
\(C^{\mathrm{comp}}_{v,k}(\mathcal M)\) and \(C^{\mathrm{comm}}_{v,k}(\mathcal M)\)
denote the computation and the total data sent and received at node \(v\) in
frame~\(k\) under management scheme $\mathcal{M}$. 

\begin{definition}[{Scalable \ac{DISAC} \ac{MTT}}]
\label{def:scalable}
A management scheme \(\mathcal M\) is scalable if there exist finite constants \(\bar C^{\mathrm{comp}}\) and \(\bar C^{\mathrm{comm}}\), independent of \(N\), such that
\begin{align}
\sup_{N}\max_{v\in\Set{V}_N(\mathcal M)}\sup_{k}
\mathbb E\!\left[
C^{\mathrm{comp}}_{v,k}(\mathcal M)
\right]
&\leq \bar C^{\mathrm{comp}},
\label{eq:scalable_comp}
\\
\sup_{N}\max_{v\in\Set{V}_N(\mathcal M)}
\limsup_{K\to\infty}\frac{1}{K}
\sum_{k=1}^{K}
\mathbb E\!\left[
C^{\mathrm{comm}}_{v,k}(\mathcal M)
\right]
&\leq \bar C^{\mathrm{comm}}.
\label{eq:scalable_comm}
\end{align}
\end{definition}

Condition~\eqref{eq:scalable_comp} bounds the per-frame computation at every node, including any fusion node, so a scheme is non-scalable as soon as some node must process a population that grows with \(N\). Condition~\eqref{eq:scalable_comm} bounds each node's long-run communication, counting both transmitted and received data. 
By accounting for the fusion node and counting incoming as well as outgoing data, the definition captures architectures in which the per-\ac{BS} load is small but a central node becomes the bottleneck. The objective is therefore a scheme that satisfies Definition~\ref{def:scalable} while allowing useful neighbor-to-neighbor exchange.
\endgroup

%% file: Equations/eq_state_transition.tex
\begingroup
\begin{align}
&f(\V{x}_{k+1}^i,\M{E}_{k+1}^i
  \mid \V{x}_{k}^i,\M{E}_{k}^i)
  \nonumber \\
&\quad =
f(\V{x}_{k+1}^i\mid\V{x}_k^i,\M{E}_{k}^i)
f(\M{E}_{k+1}^i\mid\V{x}_{k+1}^i,\V{x}_k^i,\M{E}_{k}^i)
\nonumber \\
&\quad =
f(\V{x}_{k+1}^i\mid\V{x}_k^i)
f(\M{E}_{k+1}^i\mid\V{x}_k^i,\M{E}_{k}^i).
\label{eq:state_transition}
\end{align}
\endgroup

%% file: sections/sec_detection_density.tex
\vspace{-1mm}

\section{{Resolution- and Visibility-Aware Extended-Target Measurement Model}}
\label{sec:detection_rate}

\begingroup

This section refines the generic measurement model in \secref{sec:system_model}
by specifying the expected number of measurements generated by an extended
target. The model combines two effects: the number of resolvable scattering
cells decreases with distance through a scatter density \(\rho_s(\V{x})\), and
only part of the target extent may be effectively visible to \ac{BS}~\(s\)
through a \ac{BS}-side visibility response \(\eta_s(\V{q})\).
\vspace{-1mm}
\subsection{Resolution-Based Scatter Density}

Consider a monostatic sensing \ac{BS}~\(s\) located at \(\V{b}_s\). 
Let \(c_0\)
denote the speed of light, \(B\) the signal bandwidth, \(\lambda_{\mathrm c}\)
the carrier wavelength, and \(D_{\mathrm T}\) and \(D_{\mathrm R}\) the transmit
and receive array apertures, respectively. The virtual-aperture length is
\(
D_{\mathrm V}=D_{\mathrm T}+D_{\mathrm R}.
\)
For a target state \(\V{x}\) with position component \(\V{p}\), the
target-to-\ac{BS} distance is \(r_s(\V{x})\triangleq\|\V{p}-\V{b}_s\|_2\). Under
the standard matched-filter approximation, the range and cross-range resolution
widths are \(\Delta_{\mathrm r}=c_0/(2B)\) and
\(\Delta_{\mathrm{cr},s}(\V{x})=r_s(\V{x})\lambda_{\mathrm c}/D_{\mathrm V}\),
giving the two-dimensional resolution-cell area
\(A_{\mathrm{cell},s}(\V{x})\triangleq\Delta_{\mathrm r}\Delta_{\mathrm{cr},s}(\V{x})=c_0 r_s(\V{x})\lambda_{\mathrm c}/(2BD_{\mathrm V})\)
{\cite{richards2014fundamentals}}.
Ignoring residual resolution cells at the target boundary, the number of
resolution cells covered by the scaled extent is approximated by
\(N_{\mathrm{cell},s}(\V{x},\M{E})\approx A_\kappa(\M{E})/A_{\mathrm{cell},s}(\V{x})\), for a chosen value \({\kappa>0}\).
If each resolved cell produces at most one detected measurement with per-cell
detection probability \(p_{\mathrm d,s}^{\mathrm{cell}}\in[0,1]\), then the
expected number of measurements under full visibility can be written as
\(
\rho_s(\V{x})A_\kappa(\M{E})
\)
with
\begin{equation}
\rho_s(\V{x})
\triangleq
\frac{p_{\mathrm d,s}^{\mathrm{cell}}}{A_{\mathrm{cell},s}(\V{x})}
=
\frac{2BD_{\mathrm V}}{c_0\lambda_{\mathrm c}r_s(\V{x})}
\,p_{\mathrm d,s}^{\mathrm{cell}}.
\label{eq:rho_from_pd_cell_density}
\end{equation}
Thus, \(\rho_s(\V{x})\) is an effective spatial scatter density that includes
the resolution-cell geometry and the per-cell detection probability.
{It has units of expected detections per square metre. It is an intensity,
not a normalized probability density, and therefore need not integrate to one.}
Because \(\rho_s(\V{x})\propto 1/r_s(\V{x})\), the per-target detection mean
grows without bound only as a target approaches the \ac{BS}. Hence, scalability
Assumption~\ref{as:workload} requires a non-zero minimum target-to-\ac{BS}
range \(r_{\min}>0\). 

\begin{remark}[Two uses of the extent]
\bl{The bounded ellipse \(\mathcal A_\kappa\) is used to count resolvable
cells, whereas \(\Set{N}(\V{0},\zeta\M{E})\) models the locations of
target-generated measurements. Visibility therefore weights the expected count
by geometric area but conditions the measurement likelihood by Gaussian mass.
The two representations agree under full visibility but can differ near a hard
\ac{FOV} boundary.}
\end{remark}
\vspace{-1mm}
\subsection{Visibility-Weighted Extent Fraction}
\label{subsec:fov_visibility_model}

The scatter density \(\rho_s(\V{x})\) accounts for distance-dependent
resolution, but it does not describe partial visibility of an extended target.
We model \ac{BS}-side visibility by a response function
\(\eta_s(\V{q})\in[0,1]\), where \(\V{q}\in\mathbb{R}^2\) is a point on the target
extent. The function \(\eta_s\) \revYu{may take} antenna gain, blockage, boundary
loss, or any other spatial variation in sensing quality \revYu{into consideration}. {A hard-\ac{FOV}
model is the special case}
\(
{\eta_s^{\mathrm{hard}}(\V{q})=\ind\!\left\{\V{q}\in\Set{S}_s\right\}.}
\label{eq:hard_fov_response}
\)
For an extended target with state \((\V{x},\M{E})\), define the
visibility-weighted extent fraction
\begin{equation}
\delta_s(\V{x},\M{E})
\triangleq
\frac{1}{A_\kappa(\M{E})}
\int_{\mathbb{R}^2}
\eta_s(\V{q})
\ind_{\mathcal{A}_\kappa(\V{x},\M{E})}(\V{q})
\,\mathrm d\V{q}
\in[0,1].
\label{eq:system_visible_scattering_mass}
\end{equation}
When \(\eta_s(\V{q})=1\) over the whole scaled extent,
\(\delta_s(\V{x},\M{E})=1\). When a hard-\ac{FOV} response is used, \(\delta_s\) is the fraction of the
scaled extent lying inside the sensing region \(\Set{S}_s\). {In particular,
\(\delta_s=0\) when the scaled extent has no overlap with \(\Set S_s\), and
\(\delta_s=1\) when it lies wholly inside \(\Set S_s\).}
\vspace{-1mm}
\subsection{Measurement-Count Mean and Likelihood}

Combining the distance-dependent scatter density and the visibility-weighted
extent fraction gives the expected number of target-generated measurements at
\ac{BS}~\(s\):
\begin{equation}
\mu_{m,s}(\V{x},\M{E})
\triangleq
\rho_s(\V{x})
A_\kappa(\M{E})
\delta_s(\V{x},\M{E}).
\label{eq:system_measurement_count_mean}
\end{equation}
This expression reduces to the full-visibility resolution-cell model
\(\rho_s(\V{x})A_\kappa(\M{E})\) when \(\delta_s(\V{x},\M{E})=1\).
The target-generated count is modeled as Poisson with this mean. For \(r_s\ge r_{\min}\) the mean is finite and
\(N\)-independent, so the count has finite moments despite its unbounded support. 
{Given that a measurement was generated by the target, the unweighted
single-measurement likelihood follows \cite[Eq.~(6)]{meyer2021scalable} and
marginalizes the unknown scattering offset,}
\begin{equation}
{f_s(\V z\mid\V x,\M E)
=\int f_s(\V z\mid\V p,\V o)f(\V o\mid\M E)\,\mathrm d\V o.}
\label{eq:measurement_likelihood}
\end{equation}
The same visibility response can also be used to condition this likelihood on
the visible part of the target. Generalizing
\cite[Eq.~(6)]{meyer2021scalable}, we write
\begin{equation}
f_s^{\eta}(\V{z}\mid\V{x},\M{E})
=
\frac{
\int
\eta_s(\V{p}+\V{o})
f_s(\V{z}\mid\V{p},\V{o})
f(\V{o}\mid\M{E})
\,\mathrm d\V{o}
}{
\int
\eta_s(\V{p}+\V{o})
f(\V{o}\mid\M{E})
\,\mathrm d\V{o}
}.
\label{eq:visibility_conditioned_measurement_likelihood}
\end{equation}
The unweighted likelihood in \eqref{eq:measurement_likelihood} is recovered
when \(\eta_s(\V{q})=1\) over the target extent. With the hard-\ac{FOV}
response, \eqref{eq:visibility_conditioned_measurement_likelihood} conditions
the likelihood on scattering points that lie inside \(\Set{S}_s\). The
conditioned likelihood is used only when the denominator in
\eqref{eq:visibility_conditioned_measurement_likelihood} is positive. If the
{bounded-footprint overlap in \eqref{eq:system_visible_scattering_mass}
is zero, the count mean in \eqref{eq:system_measurement_count_mean} is set to
zero and no target-originated likelihood contribution is used, even though the
unbounded Gaussian offset model can assign an arbitrarily small tail mass inside
the \ac{FOV}.}

\endgroup

%% file: sections/sec_reference_methods.tex
\vspace{-1mm}
\section{\revGrBP{GrBP and Multi-BS Processing Architectures}}
\label{sec:processing_architectures}

This section presents \revGrBP{GrBP} and its uncoordinated and coordinated baseline deployments for the proposed handover methods (\secref{sec:handover}). A \emph{potential target} is a track component whose existence, kinematics, and extent are jointly inferred. Legacy components are retained from the previous frame and newborn components originate from current measurement groups. For local scalability, \revGrBP{GrBP} conditions on one partition (\secref{sec:system_model}) and infers group-to-target associations, unlike filters introducing a potential target per measurement \cite{meyer2020scalable,meyer2021scalable}. At a fixed frame \(k\), we describe each architecture's prior, measurement usage, and posterior input/output relations.
\vspace{-1mm}
\subsection{\revGrBP{GrBP and Uncoordinated Distributed Deployment}}
\label{sec:single_bs_method}

\revGrBP{GrBP} builds on the \ac{BP}-based \ac{MTT} formulation in \cite{meyer2018messagepassing}. Each potential extended target is represented by the augmented state
\(\V{y}_k^i=[\V{x}_k^i,\M{E}_k^i,r_k^i]^\T\) with kinematic state \(\V{x}_k^i\),
extent matrix \(\M{E}_k^i\), and existence indicator \(r_k^i\in\{0,1\}\). In this \revYu{section}, we drop
the qualifier ``potential'' and the \ac{BS} index. The target states are collected into \(\V{Y}_k=(\underline{\V{Y}}_k,\overline{\V{Y}}_k)\) and split into the
\(n_k^{t}\) legacy targets \(\underline{\V{Y}}_k= \{ \V{y}_k^p\}_{p=1}^{n_k^{t}}\) carried from the previous
frame and the \(n_k^{g}\) \rev{newborn targets} \(\overline{\V{Y}}_k= \{ \overline{y}_k^p\}_{p=1}^{n_k^{g}}\), one per
measurement group.

\subsubsection{Prior information}\label{sec:single_bs_prior} At the start of frame \(k\), the \ac{BS} holds the following information: each legacy target carries its frame-\((k{-}1)\) posterior
belief \(\Tilde{f}(\underline{\V{y}}_{k-1}^{p})\), which through the augmented
state jointly describes kinematics, extent, and existence. \begin{revision}Propagating these
through the transition model \eqref{eq:state_transition} gives the predicted
transition-factor messages \(\alpha_k^p(\underline{\V{y}}_k^{p})\), collected as
\(\mathcal F_k^-\triangleq\{\alpha_k^p(\underline{\V{y}}_k^{p})\}_{p=1}^{n_k^{t}}\)
for use at frame \(k\).\end{revision}

\subsubsection{Measurement usage} The grouped-measurement vector \(\V{Z}_k^{G}\) defined
in \secref{sec:system_model} enters the update through the single-scan joint
density function \bl{derived in }
\blsuppref{sec:appendix_I}{Supplementary Section~S-A}\bl{, which factorizes as}
\begin{align}
&f(\underline{\V{Y}}_k,\overline{\V{Y}}_k,\V{a}_k,\V{b}_k,\V{Z}_k^{G})
\notag\\[-0.2ex]
&\quad\propto \boldsymbol{\psi}(\V{a}_k,\V{b}_k)
\prod_{p=1}^{n_k^t}
  \rev{\alpha_k^p(\underline{\V{y}}_k^{p})}\,
  \underline{l}(\underline{\V{y}}_k^{p},a_k^{p}\mathpunct{,}\V{Z}_k^{G}) \notag \\
&\times \prod_{j=1}^{n_k^{g}}
  \overline{l}(\overline{\V{y}}_k^{j}, b_k^{j}\mathpunct{,}\V{Z}_k^{j}).\label{eq:single_bs_joint_pdf_factorization}
\end{align}
Here \rev{\(\alpha_k^p(\underline{\V{y}}_k^{p})\) is the predicted transition-factor
message (equivalently, the predicted marginal) of the \(p\)th legacy target}, \(\underline{l}(\cdot)\) and \(\overline{l}(\cdot)\) are the
legacy and \rev{newborn} likelihood factors \bl{defined in }
\bleqref{eq:legacy_l_function}\bl{~and~}\bleqref{eq:new_l_function}\bl{, and the association}
factor \(\boldsymbol{\psi}(\V{a}_k,\V{b}_k)\) enforces that each measurement
group is assigned to at most one target and vice versa
\cite[Fig.~4]{meyer2018messagepassing}. Conditioning on the single partition
\(\V{Z}_k^{G}\) confines inference to the group-to-target association variables
\((\V{a}_k,\V{b}_k)\). This is what keeps the update tractable and distinguishes
\revGrBP{GrBP} from formulations that instantiate a new target per measurement
\cite{meyer2020scalable,meyer2021scalable}.

\subsubsection{Posterior computation} The marginal posteriors are approximated by beliefs
from sum-product message passing in the single-\ac{BS} update, using the schedule of
\cite[Sec.~IX.A]{meyer2018messagepassing}. \revHyowon{The belief associated with a variable node approximates the marginal \ac{PDF} of that variable in the factor graph.} \begin{revision}The association
weights, iterative association messages, marginal association beliefs,
legacy/newborn state beliefs, and existence beliefs are described in the references cited in
\blsuppref{sec:appendix_beliefs}{Supplementary Section~S-A.C}.\end{revision} For each legacy target, the
prior-factor message is the \rev{predicted transition-factor message
\(\alpha_k^i(\underline{\V{y}}_k^i)\)} and the likelihood-factor message
\(\gamma(\underline{\V{y}}_k^i)\) summarizes the grouped-measurement evidence, so
the posterior belief is
\(\Tilde{f}(\underline{\V{y}}_k^i)\propto
\rev{\alpha_k^i(\underline{\V{y}}_k^i)}\,\gamma(\underline{\V{y}}_k^i)\)
(and analogously for \rev{a newborn target}). As the augmented state includes
existence, this also yields the posterior existence probability. The likelihood
message \(\gamma(\cdot)\) is the only quantity needed by the multi-\ac{BS}
architectures below. The \rev{posterior-belief collection
\(\mathcal F_k^+\triangleq\{\Tilde{f}(\underline{\V{y}}_k^{p})\}_{p=1}^{n_k^t}\cup
\{\Tilde{f}(\overline{\V{y}}_k^{j})\}_{j=1}^{n_k^g}\) is propagated through the
transition model to form \(\mathcal F_{k+1}^-\), with each newborn target
treated as legacy}.

\subsubsection{Input/output relation} \leavevmode\revGrBP{GrBP} maps a local
prior and local grouped measurements to a local posterior, with no inter-\ac{BS}
exchange.

\subsubsection{Uncoordinated distributed processing} The simplest multi-\ac{BS}
architecture, denoted \grbpv{D}, runs an independent copy of \revGrBP{GrBP} at every \ac{BS}, with no information exchanged. A target in the overlap of several \acp{FOV}
is then represented by independent targets at the corresponding \acp{BS}, and a
target entering a new \ac{BS} must be reinitialized there even if a neighbor
already tracks it. The per-frame input/output is identical to the single-\ac{BS} case, with
zero inter-\ac{BS} communication.
\vspace{-1mm}
\subsection{Coordinated Multi-BS \grbp{} Processing}
\label{sec:coordinated_processing}

\revGrBP{Coordinated GrBP processing} combines observing \acp{BS}' information into one posterior per target, using sequential (\grbpv{CS}) or parallel (\grbpv{CP}) schedules of \revGrBP{GrBP}. Appending the \ac{BS} index \(s\), let \(\mathcal{U}_s\) denote the
single-\ac{BS} update of \secref{sec:single_bs_method} driven by
\(\V{Z}_{s,k}^{G}\): through its likelihood messages \(\gamma_s(\cdot)\) it refines the
beliefs of the legacy targets in the \ac{FOV} of \ac{BS}~\(s\) and leaves the rest
unchanged. All \acp{BS} share the same \rev{predicted legacy messages}
\begin{equation}
{\color{black}
\alpha_k(\underline{\V{Y}}_k)
\triangleq\prod_{p=1}^{n_k^t}\alpha_k^p(\underline{\V{y}}_k^p),
}
\label{eq:coordinated_prior}
\end{equation}
obtained as in \revHyowon{\secref{sec:single_bs_prior}}. The two schedules differ only in
how the per-\ac{BS} updates \(\{\mathcal{U}_s\}\) are combined, and each can be
realized with or without a central fusion node.

These are representative coordinated realizations. Sequential updates use the predecessor's belief and can refine its newborn targets, but are order-dependent and cannot run concurrently. Parallel updates share the same prediction and combine likelihoods afterwards. They are order-invariant and parallelizable. Our per-\ac{BS} newborn proposals can, however, duplicate targets unless an additional cross-\ac{BS} newborn association step is used.

\subsubsection{Sequential \grbp{} Processing (\grbpv{CS})}
\label{sec:sequential_processing}
Under sequential processing, the \acp{BS} are processed in a fixed
order \(1,\dots,N\).
 \paragraph*{Measurement usage and posterior} The posterior beliefs are obtained by composing the
per-\ac{BS} updates in the given order,
\begin{equation}
{\color{black}
\Tilde{f}_k(\underline{\V{Y}}_k)=
\big(\mathcal{U}_{N}\circ\cdots\circ\mathcal{U}_{1}\big)
\big[\alpha_k(\underline{\V{Y}}_k)\big].
}
\label{eq:sequential_update}
\end{equation}
Each update \(\mathcal{U}_s\) takes the beliefs produced by the preceding
\acp{BS} as its prior and refines only the targets in the \ac{FOV} of
\ac{BS}~\(s\). Targets outside that \ac{FOV} are passed through unchanged.

\paragraph*{Input/output relation} With a central fusion node, every \ac{BS} sends
\(\V{Z}_{s,k}^{G}\) to the node, which evaluates \eqref{eq:sequential_update}.
Without a central unit, the \acp{BS} share the \rev{predicted-message collection} and apply
\(\mathcal{U}_1,\dots,\mathcal{U}_N\) along a chain, with each \ac{BS} forwarding its updated
beliefs to the next \ac{BS}. Either way the update is a chain of \(N\) steps whose result
can depend on the processing order.

\subsubsection{Parallel \grbp{} Processing (\grbpv{CP})}
\label{sec:parallel_processing}
In contrast to sequential processing, here there is no predetermined order.
\paragraph*{Measurement usage and posterior} In parallel processing, every
\ac{BS} updates the shared prior \eqref{eq:coordinated_prior} independently, so
each observing \ac{BS} computes its likelihood message \(\gamma_s(\cdot)\) from
the \rev{same predicted transition-factor message}. For a target observed by the set
\(S_i\subseteq\Set{B}\) of \acp{BS}, which by the bounded sensing geometry of
\secref{subsec:problem_formulation} satisfies \(|S_i|\le d_{\max}+1\), the fused
posterior for a \emph{legacy} target is the product
\begin{equation}
{\color{black}
\Tilde{f}_k^i(\underline{\V{y}}_k^i)\;\propto\;
\alpha_k^i(\underline{\V{y}}_k^i)
\prod_{s\in S_i}\gamma_s(\underline{\V{y}}_k^i).
}
\label{eq:parallel_belief}
\end{equation}

\paragraph*{Input/output relation} The messages are computed independently and
combined in a single product, so there is no dependency chain and the result is
independent of the \ac{BS} ordering. In return, the likelihood messages
\(\{\gamma_s\}_{s\in S_i}\) of each target must be collected at the fusion node,
or at a designated combiner, before the posterior is formed.

\begin{figure}
    \centering
    \captionsetup[subfloat]{position=bottom}
    \subfloat[Sequential processing\label{fig:sequential-processing}]{%
        \resizebox{\columnwidth}{!}{\input{Drawings/sequential_processing_io}}%
    }\\[0.2em]
    \subfloat[Parallel processing\label{fig:parallel-processing}]{%
        \resizebox{\columnwidth}{!}{\input{Drawings/parallel_processing_io}}%
    }
    \vspace{0.45em}
    \caption{Coordinated \grbp{} processing schedules. Subfigures (a) and (b) show \grbpv{CS} and \grbpv{CP}, respectively, for the \(i\)th target, which lies within the overlapping sensing region of \(|S_i|\) \acp{BS}.}
    \label{fig:coordinated-processing}
\end{figure}

\subsection{Scalability Analysis}
\label{sec:reference_scalability}

Although \revGrBP{GrBP} has bounded local cost under the stated assumptions, deployment scalability depends on where information is processed and exchanged (Definition~\ref{def:scalable}).

\subsubsection{Uncoordinated distributed processing} All inference is performed
locally and independently at each \ac{BS}. Under
Assumptions~\ref{as:degree}--\ref{as:representation}, the numbers of legacy
targets and measurement groups \(n_{s,k}^{t}\) and \(n_{s,k}^{g}\) have bounded
moments independent of \(N\) (\secref{subsec:problem_formulation}), and each component is
updated at bounded cost, so the per-\ac{BS} cost
\begin{equation}
C^{\mathrm{comp}}_{s,k}=O\!\big(n_{s,k}^{t}\,n_{s,k}^{g}\big),
\qquad
\mathbb E\!\big[C^{\mathrm{comp}}_{s,k}\big]=O(1)
\label{eq:local_comp_cost}
\end{equation}
is bounded in expectation independently of \(N\), and no messages are
exchanged, \(C^{\mathrm{comm}}_{s,k}=0\). Here \(\Set{V}(\mathcal M)=\Set{B}\), as
there is no fusion node. \grbpv{D} therefore satisfies both conditions of Definition~\ref{def:scalable}.

\subsubsection{Coordinated processing} Fusing several \acp{BS}' measurements into one
posterior couples their data, so the joint update cannot be split into independent
bounded-size local updates. In the centralized realization the \acp{BS} merely
forward their measurements and all inference runs at the fusion node
\(0\in\Set{V}(\mathcal M)\), whose computation and communication grow with \(N\). Since
Definition~\ref{def:scalable} bounds every node, including node~\(0\), this is
the bottleneck, as formalized next. \footnote{For sequential processing, the central node can be removed, but that
does not restore scalability. Depending on how the sequential processing is implemented, the final node might need to process \revGrammar{an} infinite number of tracks.}

\begin{proposition}[Non-scalability of centralized coordinated processing]
\label{prop:centralized_nonscalable}
Consider the coordinated architectures of \secref{sec:coordinated_processing}
realized with a central fusion node \(0\), under the bounded assumptions of
\secref{subsec:problem_formulation}. Assume the expected number of measurements
per \ac{BS} per frame is at least \(\mu_{\min}>0\). Then the fusion node's expected
per-frame computational and communication loads, \(C^{\mathrm{comp}}_{0,k}\) and
\(C^{\mathrm{comm}}_{0,k}\), both grow at least linearly in \(N\). Consequently, the centralized coordinated
architectures are not scalable in the sense of Definition~\ref{def:scalable}.
\end{proposition}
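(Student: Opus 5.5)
The argument is a direct counting bound at the fusion node~\(0\), first for communication and then for computation. It uses only linearity of expectation and the minimum per-\ac{BS} measurement rate \(\mu_{\min}\).

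\emph{Communication.} In both centralized realizations (\grbpv{CS} with a fusion node, and \grbpv{CP}), every \ac{BS} \(s\in\Set{B}\) must deliver its local evidence to node~\(0\) in frame \(k\). This evidence is either the grouped measurements \(\V{Z}_{s,k}^{G}\) or the likelihood messages \(\gamma_s(\cdot)\) for the targets it observes. I will lower-bound the received data by a per-\ac{BS} quantity with positive mean. The measurements are partitioned into groups, so \(m_{s,k}\ge 1\) implies \(n^g_{s,k}\ge 1\). Each nonempty group must be encoded with at least some constant \(b_0>0\) bits: a measurement or a sufficient-statistic packet carries at least one fixed-dimensional parameter. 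Hence the received data at node~\(0\) satisfy
\begin{equation*}
C^{\mathrm{comm}}_{0,k}\ \ge\ b_0\sum_{s\in\Set{B}}\ind\{m_{s,k}\ge 1\}.
\end{equation*}
To turn \(\mathbb E[m_{s,k}]\ge\mu_{\min}\) into \(\Pr(m_{s,k}\ge1)\ge p_0>0\) uniformly in \(N\), I will use the second-moment bound of Assumption~\ref{as:workload}, transferred to measurements via the grouping, or more directly to the measurement count. Paley--Zygmund then gives \(\Pr(m_{s,k}>0)\ge \mu_{\min}^2/\mathbb E[m_{s,k}^2]\). Alternatively, if bytes are counted per measurement, \(\mathbb E[C^{\mathrm{comm}}_{0,k}]\ge b_0\sum_s\mathbb E[m_{s,k}]\ge b_0\mu_{\min}N\) follows immediately, and I would adopt this simpler accounting as the primary route. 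Either way, \(\mathbb E[C^{\mathrm{comm}}_{0,k}]\ge cN\) for all \(k\), so the Cesàro average in \eqref{eq:scalable_comm} is also at least \(cN\).

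\emph{Computation.} Node~\(0\) must at least read each received item, and it must run \(\mathcal U_s\) for every \(s\) (sequential case) or form the products \eqref{eq:parallel_belief} over all received \(\gamma_s\) (parallel case). Each \(\mathcal U_s\) with \(n^g_{s,k}\ge1\) creates at least one newborn component and evaluates at least one likelihood factor at cost \(\ge c_0>0\). Hence \(C^{\mathrm{comp}}_{0,k}\ge c_0\sum_s \ind\{m_{s,k}\ge1\}\) (or \(\ge c_0\sum_s m_{s,k}\) under per-measurement cost), and the same expectation argument gives linear growth. Since \(0\in\Set{V}_N(\mathcal M)\), the maxima in \eqref{eq:scalable_comp}--\eqref{eq:scalable_comm} are at least \(cN\to\infty\), so no \(N\)-independent constants exist and Definition~\ref{def:scalable} fails.

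The main obstacle is the modelling step that fixes the unit cost: showing that each \ac{BS}'s contribution costs a strictly positive constant in bits and operations. This also requires converting the mean bound \(\mu_{\min}\) into a uniform probability of a nonempty contribution. I would handle this by stating explicitly that costs are at least proportional to the number of measurements forwarded or processed, which makes the linearity argument immediate. Paley--Zygmund is kept as a fallback if only group-level costs are counted.
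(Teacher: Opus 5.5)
Your primary route, charging a fixed byte count and a fixed read cost to every forwarded detection, is exactly the paper's proof: summing these by linearity of expectation gives $\mathbb E[C^{\mathrm{comm}}_{0,k}]\ge b_z\mu_{\min}N$ and $\mathbb E[C^{\mathrm{comp}}_{0,k}]=\Omega(N)$, and $0\in\Set{V}_N(\mathcal M)$ then finishes the argument. Your remark that a per-frame bound holding for every $k$ carries over to the Ces\`aro average in \eqref{eq:scalable_comm} makes explicit a step the paper leaves implicit. The Paley--Zygmund fallback is not needed, and it would require a uniform bound on $\mathbb E[m_{s,k}^2]$; Assumption~\ref{as:workload} does not supply this, because $n^g_{s,k}\le m_{s,k}$ means the second-moment bound on groups does not transfer up to measurements.
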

\begin{proof}
    See Appendix~\ref{sec:appendix_scalability}.
\end{proof}


%% file: Drawings/sequential_processing_io.tex
\definecolor{seqBox}{HTML}{F7F7F7}
\definecolor{seqText}{HTML}{000000}

\begin{tikzpicture}[
  >=Stealth,
  font=\footnotesize,
  updatebox/.style={
    draw=black!75,
    rounded corners=2pt,
    fill=seqBox,
    line width=0.45pt,
    align=center,
    inner sep=2.5pt,
    minimum width=14mm,
    minimum height=8mm
  },
  flow/.style={->, line width=0.7pt, draw=black!75},
  data/.style={font=\scriptsize, align=center, text=seqText}
]

\node[data] (prior) at (0,0)
  {\textcolor{black}{\(\mathcal F_k^-\)}};

\node[updatebox] (u1) at (2.4,0) {\(\mathcal U_1\)};
\node[updatebox] (u2) at (4.9,0) {\(\mathcal U_2\)};
\node[data] (dots) at (6.75,0) {\(\cdots\)};
\node[updatebox] (uN) at (8.6,0) {\(\mathcal U_N\)};

\coordinate (out) at (10.45,0);

\draw[flow] (prior) -- (u1);
\draw[flow] (u1) -- node[data, above=1.0mm] {\textcolor{black}{\(\mathcal U_1[\mathcal F_k^-]\)}} (u2);
\draw[flow] (u2) -- (dots);
\draw[flow] (dots) -- (uN);
\draw[flow] (uN) -- (out);

\node[data, anchor=west] at (10.55,0)
  {\textcolor{black}{\(\mathcal F_k^+\)}};

\coordinate (m1) at (2.4,-1.15);
\coordinate (m2) at (4.9,-1.15);
\coordinate (mN) at (8.6,-1.15);

\draw[flow] (m1) -- (u1);
\draw[flow] (m2) -- (u2);
\draw[flow] (mN) -- (uN);

\node[data] at (2.4,-1.62) {\(\V{Z}_{1,k}^{G}\)};
\node[data] at (4.9,-1.62) {\(\V{Z}_{2,k}^{G}\)};
\node[data] at (8.6,-1.62) {\(\V{Z}_{N,k}^{G}\)};

\end{tikzpicture}

%% file: Drawings/parallel_processing_io.tex
\definecolor{parBox}{HTML}{F7F7F7}
\definecolor{parText}{HTML}{000000}

\begin{tikzpicture}[
  >=Stealth,
  font=\footnotesize,
  updatebox/.style={
    draw=black!75,
    rounded corners=2pt,
    fill=parBox,
    line width=0.45pt,
    align=center,
    inner sep=2.5pt,
    minimum width=14mm,
    minimum height=8mm
  },
  combiner/.style={
    draw=black!75,
    rounded corners=2pt,
    fill=parBox,
    line width=0.45pt,
    align=center,
    font=\scriptsize,
    inner sep=2.5pt,
    minimum width=17mm,
    minimum height=8mm
  },
  flow/.style={->, line width=0.7pt, draw=black!75},
  data/.style={font=\scriptsize, align=center, text=parText}
]

\node[data] (prior) at (0,0)
  {\textcolor{black}{\(\mathcal F_k^-\)}};

\node[updatebox] (u1) at (2.4,0) {\(\mathcal U_{s_1}\)};
\node[updatebox] (u2) at (4.8,0) {\(\mathcal U_{s_2}\)};
\node[data] (dots) at (6.55,0) {\(\cdots\)};
\node[updatebox] (uM) at (8.3,0) {\(\mathcal U_{s_{|S_i|}}\)};
\coordinate (p1) at ($(u1.south west)!0.30!(u1.south east)$);
\coordinate (p2) at ($(u2.south west)!0.30!(u2.south east)$);
\coordinate (pM) at ($(uM.south west)!0.30!(uM.south east)$);
\coordinate (z1in) at ($(u1.south west)!0.70!(u1.south east)$);
\coordinate (z2in) at ($(u2.south west)!0.70!(u2.south east)$);
\coordinate (zMin) at ($(uM.south west)!0.70!(uM.south east)$);

\node[combiner] (prod) at (10.55,0)
  {Fused};
\node[data, anchor=west] (post) at (11.95,0)
  {\textcolor{black}{\(\mathcal F_k^+\)}};
\coordinate (prodTopOne) at ($(prod.north west)!0.25!(prod.north east)$);
\coordinate (prodTopTwo) at ($(prod.north west)!0.55!(prod.north east)$);
\coordinate (priorBusStart) at (0.95,-1.85);
\coordinate (priorBusEnd) at (pM |- priorBusStart);

\draw[line width=0.7pt, draw=black!75] (prior) -- (0.95,0);
\draw[line width=0.7pt, draw=black!75] (0.95,0) -- (priorBusStart);
\draw[line width=0.7pt, draw=black!75] (priorBusStart) -- (priorBusEnd);
\draw[flow] (p1 |- priorBusStart) -- (p1);
\draw[flow] (p2 |- priorBusStart) -- (p2);
\draw[flow] (pM |- priorBusStart) -- (pM);

\draw[flow] (u1.north) -- ++(0,0.50) -| (prodTopOne);
\draw[flow] (u2.north) -- ++(0,0.36) -| (prodTopTwo);
\draw[flow] (uM.east) -- node[data, above=0.4mm] {\(\gamma_{s_{|S_i|}}\)} (prod.west);

\node[data] at (2.15,0.78) {\(\gamma_{s_1}\)};
\node[data] at (4.55,0.66) {\(\gamma_{s_2}\)};

\draw[flow] (prod) -- (post);

\coordinate (zBase) at (0,-1.38);
\node[data] (z1) at (z1in |- zBase) {\(\V{Z}_{s_1,k}^{G}\)};
\node[data] (z2) at (z2in |- zBase) {\(\V{Z}_{s_2,k}^{G}\)};
\node[data] (zM) at (zMin |- zBase) {\(\V{Z}_{s_{|S_i|},k}^{G}\)};

\draw[flow] (z1.north) -- (z1in);
\draw[flow] (z2.north) -- (z2in);
\draw[flow] (zM.north) -- (zMin);

\end{tikzpicture}

%% file: sections/sec_handover.tex
\vspace{-1mm}
\section{Proposed Event-Triggered \grbp{} Handover}
\label{sec:handover}

Each \ac{BS} runs \revGrBP{GrBP as described in} \secref{sec:single_bs_method}. Unlike coordinated \grbpv{CS} and \grbpv{CP}, the proposed handover family exchanges information only between neighboring \acp{BS} sharing a track. Its unique \emph{owner} has authoritative track-management and handover responsibility. A \emph{shadow holder} retains a replicated \emph{shadow track}. Prior-only \grbpv{H} transfers the predicted track once when triggered. During subsequent co-observation frames, \grbpv{HM} additionally forwards associated measurement groups for measurement-level fusion, \grbpv{HL} forwards shadow-holder likelihood messages for product fusion, and \grbpv{HP} forwards full posterior beliefs for conservative fusion accounting for the shared prior. All three extensions require prior handover.

\begin{figure}[!t]
  \centering
  \resizebox{0.95\columnwidth}{!}{\input{Drawings/handover}}
  \caption{Proposed owner-centered \grbp{} handover processing: prior handover from the owner to shadow holders, shadow-to-owner measurement/likelihood/posterior packets, and owner-to-shadow fused-posterior broadcasts. Each experiment uses one of \grbpv{H}, \grbpv{HM}, \grbpv{HL}, or \grbpv{HP}. The evidence-exchange variants are not mixed.}
  \label{fig:handover-processing}
\end{figure}

\vspace{-1mm}
\subsection{General Principles}
\label{sec:handover_general}

Per-frame time subscripts are omitted in this section. In the shorthand of
\secref{sec:single_bs_method}, the local update at \ac{BS}~\(s\) maps its
\rev{predicted-message collection \(\mathcal F_{s}^{-}\)} and grouped
measurements \(\V{Z}_{s}^{G}\) to a \rev{posterior-belief collection
\(\mathcal F_{s}^{+}\)}.
\figref{fig:handover-processing} illustrates the processing common to all four
variants: orange arrows denote prior packets, green links grouped-measurement
packets, red links likelihood packets, blue links posterior packets, and purple
links fused-posterior broadcasts.

\subsubsection{Track Replication and Shadow Holders}

Unlike conventional handover, a \ac{DISAC} network need not have the transmitting
\ac{BS} relinquish a target while it remains visible. A handover therefore
replicates rather than moves a track: the transmitter keeps the original while the
receiver (referred to as the \emph{shadow holder}) initializes a \emph{shadow track}. All four
variants share this and differ only in what is exchanged afterwards.
\subsubsection{Handover Score and Criterion}
\label{sec:handover_criteria}

All variants are gated by the same handover score, which measures the expected observability of a given track from the perspective of a neighboring \ac{BS}. For a predicted
component at the transmitting \ac{BS}~\(t\), let \(p_t\) denote its predicted
existence probability and \(f_t(\V{x},\M{E})\) its predicted density. The handover score
from the transmitting \ac{BS}~\(t\) to the receiving \ac{BS}~\(r\) is
\begin{align}
\Lambda_{t \to r}
&\triangleq{}
\ind\!\left\{p_t \ge p_{\mathrm{th}}\right\}
\int
\mu_{m,r}(\V{x},\M{E})\,
f_t(\V{x},\M{E})\,
\,\mathrm d\V{x}\,\mathrm d\M{E}. \label{eq:component_visibility_score}
\end{align}
Here, \(p_{\mathrm{th}}\) is the existence-probability threshold, and
\(\mu_{m,r}(\V{x},\M{E})\) is the mean number of target-generated detections at
the receiving \ac{BS}~\(r\), defined in~\eqref{eq:system_measurement_count_mean}.
The score \(\Lambda_{t\to r}\) is thus the expected number of detections the
component would generate at \ac{BS}~\(r\) once its existence probability clears the
gate \(p_{\mathrm{th}}\). It is zero for insufficiently confident tracks and large
only when the component also lies well inside the sensing region of \ac{BS}~\(r\). The
handover criterion for track \(\ell\) is
\begin{equation}
    \Lambda_{t \to r}(\ell) \ge \Lambda_{\mathrm{th}},
\label{eq:prior_handover_trigger}
\end{equation}
where \(\Lambda_{t \to r}(\ell)\) denotes \(\Lambda_{t \to r}\) evaluated for
the predicted existence probability and density of track \(\ell\), and
\(\Lambda_{\mathrm{th}}\) is a fixed threshold. Each variant uses this criterion as
its basic trigger, as detailed below.

\subsubsection{Local Label System}
\label{sec:ownership_transfer_prior_packets}

Handover requires explicit track management to prevent the same physical
target from being handed over or initialized more than once. Unlike the
multisensor \ac{MTT} literature, where label consistency is enforced through a
global label space, we use a fully local label system built on the current owner
\ac{BS} and that owner's local \ac{UID}.

\paragraph*{Owner--UID label} When a target is first established at a \ac{BS}, that
\ac{BS} becomes the owner and assigns a local \ac{UID}. The pair formed by the
current owner and that owner's \ac{UID} is the track label carried in prior,
measurement, likelihood, and posterior packets. Shadow holders store this
owner--UID label and use it to route optional information back to the owner.

\paragraph*{Ownership} Each track has exactly one owner \ac{BS}, which retains its
local \ac{UID} while it is predicted to observe the target. When the owner loses
visibility, ownership is transferred, via a two-message handshake over the
reliable, in-order backhaul of \secref{sec:system_model}, to the visible shadow
holder with the largest handover score \(\Lambda_{o\to r}\). The old owner keeps
authority until the acknowledgement arrives, so at most one owner is active at any
time, and the new owner \(o'\) is advertised through the updated owner--UID label
in subsequent packets. To preserve the neighbor-local structure, \(o'\) inherits
only its own sensing neighbors,
\(\mathcal{R}_{o'}(\ell)\leftarrow(\{o\}\cup\mathcal{R}_o(\ell))\cap\Set{B}_{o'}\),
and excluded holders are released until a fresh prior handover re-establishes them.
If no shadow holder is visible, the original owner is retained until the track is
recovered or pruned.

\paragraph*{Duplicate resolution} Because neighboring \acp{BS} may independently
initiate the same target before any prior is exchanged, an incoming prior is first
gated against the receiver's existing tracks: if it gates to an existing track, the
receiver merges the two (keeping the lexicographically smaller owner--UID pair as
the canonical label and notifying both owners by a bounded merge/alias
message) rather than appending a duplicate, so a single owner is preserved within
the neighbor-local handover cluster. As the merged priors may be correlated, the
canonical belief is kept or fused conservatively by \ac{GCI}, and a new shadow track
is created only when the prior gates to no existing track. 
\vspace{-1mm}
\subsection{Prior Handover (\grbpv{H})}
\label{sec:prior_handover}
For prior handover, the transmitting \ac{BS} is always the owner
of the track, so we denote it by \(o\) throughout this subsection.
\paragraph*{Trigger} A prior handover
from owner \(o\) to a neighbor \ac{BS}~\(r\) is triggered at the first frame in
which the criterion \eqref{eq:prior_handover_trigger} is met for that neighbor,
i.e., when the track is predicted to enter the sensing region of \ac{BS}~\(r\).
\begin{revision}A \emph{visibility episode} is a maximal run of consecutive
frames in which \(\Lambda_{o\to r}(\ell)\ge\Lambda_{\mathrm{th}}\). After the score
falls below the hysteresis release level, a later threshold upcrossing starts a
new episode and triggers a fresh prior handover.\end{revision}
It is event-triggered: one packet per (track, receiver) pair per visibility episode.

\paragraph*{Operation at the sending \ac{BS}} Only owners send priors. For each owned
track \(\ell\), the owner keeps a shadow-holder record
\(\mathcal{R}_o(\ell)\subseteq\Set{B}_o\) of the neighbors that already hold a
shadow track. Since each \ac{BS} knows its neighbors' sensing regions
(\secref{sec:system_model}), the owner evaluates the handover score
\(\Lambda_{o\to r}(\ell)\) of \eqref{eq:component_visibility_score} for every
neighbor \(r\in\Set{B}_o\) from its own predicted density, without feedback, and
treats \(r\) as observing \(\ell\) while \(\Lambda_{o\to r}(\ell)\ge\Lambda_{\mathrm{th}}\).
At each frame it forms the set of newly triggered receivers
\begin{equation}
\begin{aligned}
\mathcal{A}_{o}(\ell)
&\triangleq
\bigl\{
r\in \Set{B}_{o}\setminus\mathcal{R}_{o}(\ell):
\Lambda_{o\to r}(\ell)\ge\Lambda_{\mathrm{th}}
\bigr\},
\end{aligned}
\label{eq:shadow_holder_update}
\end{equation}
sends each \(r\in\mathcal{A}_o(\ell)\) a prior packet carrying the
{predicted message \(\alpha_o^\ell(\V{y}^{\ell})\)} and the owner--UID label (updated in place during an
ownership transfer), and refreshes the record\footnote{To avoid chattering near the
threshold, \(\Lambda_{\mathrm{th}}\) is applied with hysteresis, and a shadow holder
that prunes its local copy notifies the owner so that \(\mathcal{R}_o(\ell)\) does
not become stale.} to keep the still-observing holders and add the new ones,
\begin{equation}
\mathcal{R}_o(\ell)\ \leftarrow\
\bigl\{\,r\in\mathcal{R}_o(\ell):\Lambda_{o\to r}(\ell)\ge\Lambda_{\mathrm{th}}\,\bigr\}
\cup\mathcal{A}_o(\ell).
\label{eq:shadow_record_refresh}
\end{equation}

\paragraph*{Operation at the receiving \ac{BS}} A neighbor \(r\) initializes a new shadow
track from each received density and stores its label. Collecting the priors handed
to \(r\) this frame, the transferred predicted-message collection is
{\(\mathcal F^{-}_{o\to r}\triangleq
\{\alpha_o^\ell(\V{y}^{\ell}):\ell\ \text{owned by}\ o,\
r\in\mathcal{A}_o(\ell)\}\)}. The receiver appends these messages as additional
legacy targets, subject to the duplicate-resolution rule of
\secref{sec:ownership_transfer_prior_packets},
{\(\mathcal F_r^{-}\leftarrow\mathcal F_r^{-}\cup
\bigcup_{o\in\Set{B}_r}\mathcal F^{-}_{o\to r}\)},
before the local grouped-measurement update (orange arrows in
\figref{fig:handover-processing}), after which the single-\ac{BS} update proceeds
unchanged. The receiver needs nothing beyond the packets themselves.
\vspace{-1mm}
\subsection{Measurement Handover (\grbpv{HM})}
\label{sec:measurement_handover}

Prior-only handover exchanges information only at the triggering frame. As the
target then moves through the overlap region, no further information is shared.
Measurement handover adds per-frame measurement-level fusion among the \acp{BS} that
jointly observe a shared track \(\ell\), namely the owner \(o\) and its current
shadow holders, which form the co-observing set
$S_\ell\ \triangleq\ \{o\}\cup\mathcal{R}_o(\ell)$, 
i.e., the observing set \(S_i\) of \secref{sec:parallel_processing} specialized
to track \(\ell\). Fusion is organized as an \emph{owner-centered star}: the owner
is the per-track fusion hub, and each shadow holder communicates only with the
owner. Because every shadow holder received its prior over a sensing-neighbor edge
and the record refresh \eqref{eq:shadow_record_refresh} keeps
\(\mathcal{R}_o(\ell)\subseteq\Set{B}_o\), all owner--shadow links exist by
construction. 

\paragraph*{Trigger} Measurement handover is active in every frame in which a shadow
holder and the owner both still observe \(\ell\), i.e., while
\(r\in\mathcal{R}_o(\ell)\). Unlike prior handover, which fires once per visibility
episode, it recurs every frame.

\paragraph*{Operation at a shadow holder} After its local grouped-measurement update,
shadow holder \(r\in\mathcal{R}_o(\ell)\) identifies the measurement group most
strongly associated with \(\ell\),
\begin{equation}
j_r^{\star}(\ell)\ \triangleq\ \arg\max_{j}\ \Pr\!\bigl(a_r^{\ell}=j\bigr),
\label{eq:selected_group}
\end{equation}
where \(a_r^{\ell}\) assigns a measurement group, or the null outcome \(j=0\), to
track \(\ell\) at \ac{BS}~\(r\). The group \(\V{Z}^{G}_{r\to o}\triangleq\V{Z}_r^{\,j_r^{\star}(\ell)}\)
is sent to the owner if \(j_{\revGrammar{r}}^{\star}(\ell)\neq0\) (green links in
\figref{fig:handover-processing}). Otherwise, no measurement is forwarded.

\paragraph*{Operation at the owner} The owner collects the groups
\(\{\V{Z}^{G}_{r\to o}\}_{r\in\mathcal{R}_o(\ell)}\) and fuses them with its own
measurements at the measurement level, applying the grouped-measurement likelihood
of each originating \ac{BS} sequentially, as in the sequential coordinated
processing of \secref{sec:sequential_processing} but restricted to \(S_\ell\). Evaluating
 a received group requires the originating \ac{BS}'s measurement model,
which follows from the shared geometry. The owner then broadcasts the fused
posterior to the shadow holders, which adopt it as their local posterior for
\(\ell\). 
\vspace{-1mm}
\subsection{Likelihood-Message Handover (\grbpv{HL})}
\label{sec:likelihood_handover}

Likelihood-message handover is the track-level counterpart of measurement
handover: it uses the same owner-centered star over \(S_\ell\) and the same
per-frame trigger, but the shadow holders send {likelihood messages rather
than raw measurement groups}.

\paragraph*{Operation at a shadow holder} After its local update, shadow holder
\(u\in\mathcal{R}_o(\ell)\) sends the owner its likelihood message
{\(\gamma_u^\ell(\V{y})\propto
\Tilde{f}_u^\ell(\V{y})/\alpha^\ell(\V{y})\)}, i.e., the
\ac{BP} output of \secref{sec:single_bs_method} that summarizes its current-frame
evidence (red links in \figref{fig:handover-processing}). 

\paragraph*{Operation at the owner} The owner combines its own message with the
received ones using the parallel-processing rule of \secref{sec:parallel_processing},
retaining a single copy of the prior,
\begin{equation}
{\Tilde{f}^{\ell}(\V{y})\ \propto\ \alpha^{\ell}(\V{y})
\!\!\prod_{s\in S_\ell}\!\!\gamma_s^{\ell}(\V{y})},
\label{eq:posterior_fused}
\end{equation}
and broadcasts the fused posterior \(\Tilde{f}^{\ell}(\V{y})\) to the shadow
holders, which adopt it. Because all members of \(S_\ell\) then store the same
posterior and predict it with the same transition model, they share a common prior
at the next frame, so the prior divided out in \(\gamma_s\) is common by
construction. Equation~\eqref{eq:posterior_fused} is then exact within the
common-prior, fixed-track, factorized-message model of
\secref{sec:parallel_processing}. The underlying \ac{BP} messages remain
approximate, so it does not equal the centralized multi-object posterior.
\subsection{Posterior Handover (\grbpv{HP})}
\label{sec:posterior_handover}

Posterior handover uses the same owner-centered star and per-frame trigger as
likelihood-message handover, but each shadow holder forwards its full posterior
belief \(\Tilde{f}_r(\V{y}^{\ell})\) rather than the likelihood message
\(\gamma_r\). The owner then fuses the received posteriors with its own. Because
the forwarded posteriors share the common prior \(f(\V{y}^{\ell})\), multiplying
them as in \eqref{eq:posterior_fused} would multiply-count that prior. The owner
therefore fuses them conservatively by equal-weight \ac{GCI}
\cite{garcia2025distributed},
\begin{equation}
\Tilde{f}(\V{y}^{\ell})\ \propto\ \prod_{s\in S_\ell}\Tilde{f}_s(\V{y}^{\ell})^{1/|S_\ell|},
\label{eq:posterior_handover_gci}
\end{equation}
and broadcasts the fused posterior to the shadow holders, which adopt it.

\subsection{Scalability Analysis}
\label{sec:handover_scalability}

We assess the proposed \grbp{} handover variants \grbpv{H}, \grbpv{HM}, \grbpv{HL}, and \grbpv{HP} against Definition~\ref{def:scalable}. Recall the bounded sensing-graph degree
\(d_{\max}\) (Assumption~\ref{as:degree}), and let \(n^{t}_{s,k}\) and
\(n^{g}_{s,k}\) be the realized numbers of active legacy targets and local
measurement groups at \ac{BS}~\(s\), whose second moments are bounded by
Assumption~\ref{as:workload}. The handover records are local: the owner's
shadow-holder record \(\mathcal{R}_o(\ell)\) is a subset of its \(\le d_{\max}\)
sensing neighbors, so each shared track has at most \(d_{\max}\) active
owner--shadow edges.

We write \(b_{\mathrm{evt}}\) for a finite upper bound on the total event- and
control-byte load per track--neighbor pair per frame (i.e., the prior packet
together with any
ownership-transfer, acknowledgement, merge/alias, or release messages, which may
comprise several packets) and \(b_Z\), \(b_\gamma\), \(b_Q\) for finite upper bounds
on the bytes of one measurement group, one likelihood message, and one
fused-posterior packet, respectively.

\begin{proposition}[Scalability of the proposed handover]
\label{prop:handover_scalable}
Fix a \ac{BS}~\(s\) and frame \(k\). With finite \(b_{\mathrm{evt}},b_Z,b_\gamma,b_Q\),
the realized per-node loads of all four proposed \grbp{} handover variants satisfy
\begin{align}
C^{\mathrm{comp}}_{s,k}
&= O\!\big(n^{t}_{s,k}n^{g}_{s,k} + d_{\max}n^{t}_{s,k}\big),
\label{eq:handover_comp_bound}\\
C^{\mathrm{comm}}_{s,k}
&\le
\underbrace{2 d_{\max}n^{t}_{s,k} b_{\mathrm{evt}}}_{\text{event-triggered}}
+
\underbrace{d_{\max}n^{t}_{s,k}\,\beta}_{\text{optional, per active edge}},
\label{eq:handover_comm_bound}
\end{align}
with \(\beta=0\) for prior only, \(\beta=b_Z+b_Q\) for measurement,
\(\beta=b_\gamma+b_Q\) for likelihood-message, and \(\beta=2b_Q\) for posterior
handover, and hidden constants fixed by
Assumption~\ref{as:representation}. These loads depend only on the local counts and
the neighborhood, not on \(N\). Taking expectations and applying Cauchy--Schwarz,
\(\mathbb E[n^{t}_{s,k}n^{g}_{s,k}]\le(\mathbb E[(n^{t}_{s,k})^2]\,\mathbb E[(n^{g}_{s,k})^2])^{1/2}\),
Assumptions~\ref{as:degree}--\ref{as:representation} bound the expected per-frame
computation and the long-run per-node communication by \(N\)-independent constants.
Hence \grbpv{H}, \grbpv{HM}, \grbpv{HL}, and \grbpv{HP} are all scalable in the sense of Definition~\ref{def:scalable}.
\end{proposition}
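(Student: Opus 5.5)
The plan is a per-node accounting argument. I would first fix a \ac{BS}~\(s\) and frame \(k\) and bound the realized computation and communication pathwise in terms of \(n^t_{s,k}\), \(n^g_{s,k}\) and \(d_{\max}\). Only afterwards would I take expectations.

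\textbf{Computation.} I would split the work at \(s\) into three parts: the local \grbp{} update, the handover-criterion evaluations, and the owner-side fusion.

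For the local update, the argument is the same as for \grbpv{D} in \eqref{eq:local_comp_cost}. The factor graph has \(n^t_{s,k}\) legacy nodes and \(n^g_{s,k}\) newborn nodes, and the association factor couples every legacy target with every group. Assumption~\ref{as:representation} bounds the number of iterations and the parameter counts, so this part costs \(O(n^t_{s,k}n^g_{s,k})\). One point needs checking here: shadow tracks received from neighbors are already included in \(n^t_{s,k}\), since that count is defined as all retained legacy targets entering the update. Otherwise the count would have to be enlarged by incoming priors, which could add up to \(d_{\max}\) times neighbors' counts.

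For the criterion evaluations, each owned track \(\ell\) evaluates \(\Lambda_{s\to r}(\ell)\) for \(r\in\Set{B}_s\). There are at most \(d_{\max}\) such neighbors and each evaluation has \(N\)-independent cost, which gives \(O(d_{\max}n^t_{s,k})\).

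For the owner-side fusion, I would use \(|S_\ell|\le d_{\max}+1\). The fusion rule depends on the variant:
\begin{itemize}
\item \eqref{eq:posterior_fused} for \grbpv{HL};
\item \eqref{eq:posterior_handover_gci} for \grbpv{HP};
\item the sequential update over at most \(d_{\max}\) forwarded groups for \grbpv{HM}.
\end{itemize}
With finite representations, each of these costs \(O(d_{\max})\) per track, so summing over owned tracks gives \(O(d_{\max}n^t_{s,k})\).

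Two smaller items remain. Duplicate gating of incoming priors is charged to the receiving legacy tracks. I would bound it by the number of received priors, which are among \(s\)'s own legacy tracks, times the gating cost.

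\textbf{Communication.} The per-track edges are the owner--shadow pairs, and there are at most \(d_{\max}\) of them per track. I would charge each packet to the track at \(s\) and count sent and received traffic separately.

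Event traffic consists of priors, ownership handshake messages, merge/alias messages and releases. By definition this is at most \(b_{\mathrm{evt}}\) per track--neighbor pair per frame. Charging the sent side to \(s\)'s own tracks and the received side to shadow tracks that \(s\) holds gives the factor \(2d_{\max}n^t_{s,k}b_{\mathrm{evt}}\).

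Per-frame evidence traffic goes once per edge in each direction:
\begin{itemize}
\item one uplink packet, which is a measurement group (\(b_Z\)), a likelihood message (\(b_\gamma\)) or a posterior (\(b_Q\)), depending on the variant;
\item one downlink fused posterior (\(b_Q\)).
\end{itemize}
This yields \(\beta\) as stated, and \(\beta=0\) for \grbpv{H}.

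The main obstacle is the double-counting and attribution step. A received packet for a track that \(s\) does not yet hold has to be charged to a track that counts in \(n^t_{s,k}\). I would resolve this by noting that every received prior or fused posterior instantiates or updates a legacy track at \(s\) entering the update, either in the current frame or, for priors, the next one. The constant would be adjusted accordingly, and boundedness of the moments is unaffected.

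For \grbpv{HM}, \(b_Z\) is not deterministic. I would replace it by \(B^{\max}_{Z,s,k}\) and bound \(\mathbb E[n^t_{s,k}B^{\max}_{Z,r,k}]\) by Cauchy--Schwarz, using the second-moment assumptions in Assumptions~\ref{as:workload} and~\ref{as:representation}.

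\textbf{Expectations.} With the pathwise bounds in place, taking expectations finishes the proof. Cauchy--Schwarz bounds \(\mathbb E[n^tn^g]\le(\bar n^t\bar n^g)^{1/2}\), and Jensen gives \(\mathbb E[n^t]\le(\bar n^t)^{1/2}\). Both bounds are uniform in \(N\), \(s\) and \(k\), so the Cesàro averages in \eqref{eq:scalable_comm} are bounded by the same constant. Since \(\Set{V}=\Set{B}\) and there is no fusion node, Definition~\ref{def:scalable} holds.
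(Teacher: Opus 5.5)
Your proposal is correct and follows essentially the same pathwise per-node accounting as the paper's proof: the local update costs \(O(n^t_{s,k}n^g_{s,k})\); score evaluation, record maintenance and owner-side fusion over at most \(d_{\max}\) neighbors per track add \(O(d_{\max}n^t_{s,k})\); communication is bounded by at most \(d_{\max}n^t_{s,k}\) track--neighbor pairs, each with \(b_{\mathrm{evt}}\) event bytes plus one uplink and one downlink evidence packet per active edge; and Cauchy--Schwarz on the second-moment bounds then finishes the argument. Your explicit attribution of received priors and fused posteriors to tracks counted in \(n^t_{s,k}\), and your Cauchy--Schwarz treatment of the random group size \(B^{\max}_{Z,r,k}\) for measurement handover, make precise two points that the paper's proof leaves implicit.
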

\begin{proof}
    See Appendix~\ref{sec:app_handover_scalability_proofs}.
\end{proof}

The two terms in \eqref{eq:handover_comm_bound} separate event-triggered from per-frame traffic. Prior handover
contributes only the first term: a prior packet is sent to a neighbor at most once
per visibility episode (a threshold upcrossing of
\eqref{eq:prior_handover_trigger}) so on average this term is well below its
per-frame cap and is independent of how long targets dwell in overlaps. The
measurement, likelihood-message, and posterior variants add the second term,
incurred on every active owner--shadow edge, which therefore grows with the
overlap dwell time. All four variants are
scalable under the same conditions (Assumptions~\ref{as:degree}--\ref{as:representation}). Prior
handover has the smallest 
communication constant, while the three extensions add a
per-frame term of size \(b_Z+b_Q\), \(b_\gamma+b_Q\), and \(2b_Q\) per active edge,
respectively.



%% file: Drawings/handover.tex
\usetikzlibrary{calc,arrows.meta,matrix}

\begin{tikzpicture}[
  >=Latex,
  shnode/.style={
    circle,
    draw=black,
    line width=0.9pt,
    minimum size=16mm,
    inner sep=0pt,
    font=\Large,
    text=black
  },
  onode/.style={
    circle,
    draw=black,
    line width=0.9pt,
    minimum size=18mm,
    inner sep=0pt,
    font=\Large,
    text=black
  },
  arrow/.style={
    -{Latex[length=3mm,width=2.3mm]},
    line width=1.3pt,
    line cap=round
  },
  prior/.style={
    arrow,
    draw=orange!85!brown
  },
  fused/.style={
    arrow,
    draw=purple!80!black,
    dashed
  },
  meas/.style={
    arrow,
    draw=green!55!black,
    dash pattern=on 6pt off 2pt
  },
  like/.style={
    arrow,
    draw=red!85!black,
    dash pattern=on 3pt off 2pt on 1pt off 2pt
  },
  post/.style={
    arrow,
    draw=blue!65!black,
    dash pattern=on 1.2pt off 1.8pt
  },
  circnum/.style={
    circle,
    draw=black,
    fill=white,
    minimum size=6.4mm,
    inner sep=0pt,
    font=\normalsize,
    midway
  }
]

\def\R{4.45cm}      
\def\rO{9mm}        
\def\rSH{8mm}       
\def\sep{6.8mm}     

\node[onode]  (O)    at (0,0)       {O};

\node[shnode] (SH1)  at (135:\R)    {SH};   
\node[shnode] (SH2)  at (45:\R)     {SH};   
\node[shnode] (SH3)  at (-45:\R)    {SH};   
\node[shnode] (SH4)  at (-135:\R)   {SH};   

%
\newcommand{\tripleedge}[6]{%
  \coordinate (a) at ($(#1)!#3!(#2)$);%
  \coordinate (b) at ($(#2)!#4!(#1)$);%
  \ifnum#5=1
    \coordinate (pstart) at ($(a)!\sep!90:(b)$);
    \coordinate (pend)   at ($(b)!\sep!-90:(a)$);
    \coordinate (mstart) at (b);
    \coordinate (mend)   at (a);
    \coordinate (fstart) at ($(a)!\sep!-90:(b)$);
    \coordinate (fend)   at ($(b)!\sep!90:(a)$);
  \else
    \coordinate (pstart) at ($(a)!\sep!-90:(b)$);
    \coordinate (pend)   at ($(b)!\sep!90:(a)$);
    \coordinate (mstart) at (b);
    \coordinate (mend)   at (a);
    \coordinate (fstart) at ($(a)!\sep!90:(b)$);
    \coordinate (fend)   at ($(b)!\sep!-90:(a)$);
  \fi
  \draw[prior] (pstart) -- node[circnum] {1} (pend);
  \draw[#6]    (mstart) -- node[circnum] {2} (mend);
  \draw[fused] (fstart) -- node[circnum] {3} (fend);
}

%
\newcommand{\singleprioredge}[4]{%
  \coordinate (a) at ($(#1)!#3!(#2)$);%
  \coordinate (b) at ($(#2)!#4!(#1)$);%
  \draw[prior] (a) -- node[circnum] {1} (b);
}

\tripleedge{O}{SH1}{\rO}{\rSH}{-1}{meas}
\tripleedge{O}{SH2}{\rO}{\rSH}{1}{like}
\tripleedge{O}{SH3}{\rO}{\rSH}{1}{post}
\singleprioredge{O}{SH4}{\rO}{\rSH}

\newcommand{\legendline}[1]{%
  \tikz[baseline=-0.6ex]\draw[#1] (0,0) -- (0.95,0);%
}

\newcommand{\legendowner}{%
  \tikz[baseline=-0.6ex]\node[onode,minimum size=6mm,font=\scriptsize]{O};%
}

\newcommand{\legendsh}{%
  \tikz[baseline=-0.6ex]\node[shnode,minimum size=6mm,font=\scriptsize]{SH};%
}

\matrix[
  matrix of nodes,
  anchor=west,
  at={(4.85,-1.75)},
  row sep=1.8mm,
  column sep=3.8mm,
  nodes={anchor=west,font=\small,inner sep=1pt}
]{
  \legendowner       & Owner          \\
  \legendsh          & Shadow-holder  \\
  \legendline{prior} & Prior          \\
  \legendline{fused} & Fused          \\
  \legendline{meas}  & Measurement    \\
  \legendline{like}  & Likelihood     \\
  \legendline{post}  & Posterior      \\
};

\end{tikzpicture}

%% file: sections/sec_numerical_results.tex
\vspace{-1mm}
\section{Numerical Results}
\label{sec:numerical_results}
\revGrBP{We evaluate GrBP deployments in a seven-\ac{BS} \ac{DISAC} network, then compare uncoordinated distributed GrBP with the state-of-the-art BP-based ETT method.} Detailed parameters and per-\ac{BS} results are in the repository cited in \secref{sec:introduction}.
\vspace{-1mm}

\subsection{Simulation Setup}

\subsubsection{Compared Methods and Labels}
\label{subsec:compared_methods}
\revGrBP{GrBP denotes the common local \ac{MTT} method introduced in \secref{sec:introduction}. Suffixes identify the processing architecture, handover payload, or oracle-grouping condition.}

\revGrBP{The suffix O denotes oracle grouping using known measurement-source identities, not noiseless measurements or a different local method. All non-oracle GrBP variants use the clustering procedure below.}

\paragraph*{Coordinated \grbp{} references}
\grbpv{CS} and \grbpv{CP} use the sequential and parallel schedules of \secref{sec:coordinated_processing}. \grbpv{CO} uses \grbpv{CS} with oracle grouping.

\paragraph*{Uncoordinated distributed \grbp{} baseline}
\grbpv{D} runs \revGrBP{GrBP} independently at each \ac{BS}, without inter-\ac{BS} exchange. \grbpv{DO} adds oracle grouping.

\paragraph*{Proposed \grbp{} handover variants}
\grbpv{H} is the proposed event-triggered prior-only handover. \grbpv{HM}, \grbpv{HL}, and \grbpv{HP} add measurement, likelihood-message, and posterior exchange, respectively (\secref{sec:handover}). \grbpv{HMO} uses \grbpv{HM} with oracle grouping, not a different protocol.

\paragraph*{\revGrBP{State-of-the-Art BP-Based ETT Method}}
\label{subsec:ett_reference}
\revGrBP{The state-of-the-art \ac{BP}-based ETT method of \cite{meyer2021scalable}} runs independently at each \ac{BS}, without inter-\ac{BS} exchange. \revGrBP{ETT denotes this reference method, not the tracking task or a \grbp{} variant.} It associates individual measurements without \revGrBP{the GrBP grouping step}. We use the authors' reference code without its preprocessing stage\footnote{Reference implementation: \url{https://github.com/meyer-ucsd/EOT-TSP-21}.} and adapt its measurement-count calculation to \(\mu_{m,s}(\V{x},\M{E})\) in \eqref{eq:system_measurement_count_mean}, with the same resolution- and visibility-aware parameters as \revGrBP{GrBP}. This avoids a mismatched target-generated count mean. The implementations and timing boundaries still differ.

\subsubsection{Network Geometry}
The simulation geometry is shown in \figref{fig:simulation-scenarios}. The \ac{DISAC} network has one \ac{BS} at $(0,0,15)$\,\revYu{m} and the remaining \acp{BS} at \((\pm150,0,15)\)\,\revYu{m}, \((\pm75,130,15)\)\,\revYu{m}, and \((\pm75,-130,15)\)\,\revYu{m}. The backhaul graph connects \ac{BS}~0 to all outer \acp{BS}, and each outer \ac{BS} to \ac{BS}~0 and two neighboring outer \acp{BS}. Each base station has \revGrammar{a} sensing radius \(\gamma = 120 \mathrm{m}\).

\begin{figure}[!t]
  \centering
  \resizebox{0.9\linewidth}{!}{\input{Drawings/ScenarioData/scenario_plot.tikz}}
  \caption{Simulation scenario for the \ac{DISAC} network used in the numerical study. Light-blue discs indicate the \acp{FOV}, target-generated measurements are color-coded by \ac{BS}, and black crosses indicate clutter.}
  \label{fig:simulation-scenarios}
\end{figure}

\subsubsection{Generative Model} The detailed generative model can be accessed via the provided URL. Targets move in a 2D plane with a \ac{CTRV} motion model. The state is \(\V{x}_k^i=[x_k^i,y_k^i,\theta_k^i,v_k^i,\omega_k^i]^\T\), where \(x,y\) \revGrammar{are} the 2D position \revGrammar{coordinates} and \(\theta\) is the heading. The interval \revGrammar{is} \(\Delta t=1\)~s. The nominal speed is $v=7.0$\,m/s, the nominal turn rate is $\omega=0.01047$\,rad/s, and the generator adds per-scan perturbations with \(\sigma_v=0.1\)~m/s and \(\sigma_{\omega}=0.005\)~rad/s. Six targets enter from the boundary at the first scan, and one more boundary target is spawned every $5$ scans over a $100$-scan sequence. The initial target positions are uniformly distributed over designated regions.

All true targets use an elliptical extent with principal \revGrammar{axes} of $5.0$ and $2.0$, rotated with the target heading. Measurements are range--bearing with standard \revGrammar{deviations} \(\sigma_r=0.5\)~m and \(\sigma_{\theta}=0.5^\circ\)\revGrammar{,} respectively. The target-generated detection count follows the resolution- and visibility-aware mean in \eqref{eq:system_measurement_count_mean}, using \(B=100\)~MHz, \(\lambda_{\mathrm c}=0.01\)~m, \(D_{\mathrm T}=D_{\mathrm R}=0.5\)~m, \(p_{\mathrm d}^{\mathrm{cell}}=0.9\), \(\kappa=1\), and minimum target-to-\ac{BS} range \(r_{\min}=30\)~m. Clutter is a \ac{PPP} with mean \(\mu_{\mathrm{fa}}=15\) detections per \ac{BS} per scan, uniformly distributed over the local \ac{FOV}.

\subsubsection{\revGrBP{GrBP Parameters and Measurement Grouping}}

\revGrBP{GrBP} uses survival probability \(p_s=0.95\), newborn mean \(\mu_n=1\), newborn extent prior \(\mathrm{diag}(9,9)\), and inverse-Wishart initialization \revGrammar{degrees} of freedom \(\nu_0=10\). The birth location of a new target is heuristically sampled around the measurement in the same way as \revGrammar{in} \cite{meyer2021scalable}. The initial heading, velocity, and turn rate are all set to \(0\). The kinematic birth covariance is \(\operatorname{diag}(P_{\mathrm{init}},P_{\mathrm{init}},\sigma_{\theta,0}^2,\sigma_{v,0}^2,\sigma_{\omega,0}^2)\), with \(P_{\mathrm{init}}=100~\mathrm{m}^2\), \(\sigma_{\theta,0}^2=\pi^2~\mathrm{rad}^2\), \(\sigma_{v,0}^2=100~\mathrm{m}^2/\mathrm{s}^2\), and \(\sigma_{\omega,0}^2=0.25~\mathrm{rad}^2/\mathrm{s}^2\). Targets are pruned below \revGrammar{an} existence \revGrammar{probability of} \(10^{-3}\), and the detection threshold is set to $0.9$.
All non-oracle \grbp{} variants retain one deterministic maximum-likelihood \ac{DBSCAN} partition per scan. Clustering forms the groups. \ac{BP} associates whole groups to legacy or newborn targets. The grouped association problem is solved by loopy \ac{BP} until the maximum message change is below \(10^{-5}\) or $100$ iterations are reached. The handover criterion is set at \(\Lambda_{\mathrm{th}}=2\) for the proposed prior-only variant \grbpv{H}.\footnote{We swept the handover visibility threshold \(\Lambda_{\mathrm{th}}\in[0.5,5]\) per \ac{BS}. No single threshold dominates all metrics, and the threshold should thus be interpreted as an accuracy-versus-conservatism tuning parameter rather than a universal optimum.} Because clutter-only clusters are nearly singleton in this setup, with mean size \(1.02\), we use the approximation \(\mu_g\approx\mu_c\).

\begin{figure}
  \centering
  \resizebox{0.90\linewidth}{!}{\input{Drawings/measurement_model_validation.tikz}}
  \caption{Validation of the measurement-count mean under full visibility. The
  black curve shows the model prediction and the orange curve shows the mean
  and standard deviation of \revGrBP{\acf{OMP}} detections over \(200\) Monte Carlo realizations per distance value.}
  \label{fig:measurement-model-validation}
\end{figure}

\vspace{-1mm}
\subsection{Simulation Study Results}

We first validate the measurement-count model, then compare coordinated and uncoordinated \grbp{} baselines with the proposed handover variants. Next, we isolate grouping error and assess per-node scaling and communication. Finally, \revGrBP{we compare \grbpv{D} and the BP-based ETT method in accuracy and runtime under uncoordinated distributed deployment.}


\begin{table*}[!t]
  \centering
  \caption{Tracking performance over \(100\) trials: mean $\pm$ standard deviation across \acp{BS}, \revGrBP{for the} \grbp{} \revGrBP{deployments and the BP-based ETT method. O denotes oracle grouping.}}
  \label{tab:per-bs-results}
  \begingroup
  \footnotesize
  \setlength{\tabcolsep}{3pt}
  \renewcommand{\arraystretch}{1.05}
  \input{result/per_bs_results_summary.tex}
  \endgroup
\end{table*}

\subsubsection{Validation of the Proposed Measurement Model}
\label{subsec:measurement_model_validation}

We test the measurement-count model of \secref{sec:detection_rate} under full visibility with the simulation parameters above. At each range, dense sub-resolution scatterers fill the continuous footprint \(\mathcal{A}_\kappa(\V{x},\M{E})\), and a noiseless \ac{MIMO}-\ac{OFDM} channel is synthesized. An \ac{OMP} estimator recovers the resolvable detection count for comparison with \eqref{eq:system_measurement_count_mean} using \eqref{eq:rho_from_pd_cell_density}. In \figref{fig:measurement-model-validation}, the model overpredicts counts at short range and underpredicts them near the longest ranges, but reproduces the dominant monotonic decrease under full visibility.

\subsubsection{\grbp{} Baselines and Proposed Handover Results}

\figref{fig:all-bs-pm-summary} averages \ac{GOSPA} over seven \acp{BS} and \(100\) trials. Coordinated \grbpv{CS} and \grbpv{CP} outperform uncoordinated \grbpv{D}, and oracle-grouped \grbpv{CO} further improves accuracy. \grbpv{D} has the largest errors for most frames, consistent with target reinitialization at \ac{FOV} entry. Larger increases occur near changes in the local true-target count (black dashed curve), but these aggregate results do not isolate boundary events or establish causality.

The proposed handover variants narrow the gap to coordinated \revGrBP{GrBP processing}. Among non-oracle variants, \grbpv{HP} has the lowest aggregate \ac{GOSPA}, followed by \grbpv{HL}, while \grbpv{H} and \grbpv{HM} remain closer to \grbpv{D}. Oracle-grouped \grbpv{HMO} further reduces error. \revGrBP{The BP-based ETT comparison follows below.}
\begin{figure}[!t]
  \centering
  \resizebox{0.96\linewidth}{!}{\input{Drawings/all_bs_pm_gospa_timeseries.tikz}}
  \vspace{0.2em}

  \begingroup
    \DocumentLegendFont
    \newcommand{\AllBsPmLegendItem}[2]{\tikz[baseline=-0.55ex]{\draw[#1] (0,0) -- (1.35em,0);}~#2}
    Coordinated \grbp{}:~
    \AllBsPmLegendItem{thick, color=plotC}{\grbpv{CS}}\hspace{0.42em}
    \AllBsPmLegendItem{thick, color=plotCL}{\grbpv{CP}}\hspace{0.42em}
    \AllBsPmLegendItem{thick, dashed, color=plotCO}{\grbpv{CO}}\\[-0.1ex]
    Distributed \grbp{}:~
    \AllBsPmLegendItem{thick, color=plotD}{\grbpv{D}}\hspace{0.42em}
    \AllBsPmLegendItem{thick, dashed, color=plotDO}{\grbpv{DO}}\\[-0.1ex]
    Proposed \grbp{} handover:~
    \AllBsPmLegendItem{thick, color=plotHM}{\grbpv{H}}\hspace{0.42em}
    \AllBsPmLegendItem{thick, color=plotHWM}{\grbpv{HM}}\hspace{0.42em}
    \AllBsPmLegendItem{thick, dashed, color=plotHMO}{\grbpv{HMO}}\\[-0.1ex]
    \AllBsPmLegendItem{thick, color=plotHL}{\grbpv{HL}}\hspace{0.42em}
    \AllBsPmLegendItem{thick, color=plotHP}{\grbpv{HP}}\\[-0.1ex]
    Reference:~\AllBsPmLegendItem{thick, densely dotted, color=plotETT}{\revGrBP{ETT (BP, distributed)}}\hspace{0.42em}
    \AllBsPmLegendItem{thick, dashed, black}{Avg. track count}
  \endgroup
  \caption{GOSPA averaged over all \acp{BS} and \(100\) trials for \(100\) frames. The black dashed curve gives the average true-target count in local surveillance regions. Per-\ac{BS} series are in the repository.}
  \label{fig:all-bs-pm-summary}
\end{figure}
In \tabref{tab:per-bs-results}, \grbpv{CS}, \grbpv{CP}, and \grbpv{CO} achieve \ac{GOSPA} values of \(2.02\), \(1.99\), and \(1.61\), versus \(3.50\) for \grbpv{D}, which also has the most missed targets (\(13.61\) per \ac{BS}-trial). All proposed handover variants improve on \grbpv{D}. Among non-oracle handover variants, \grbpv{HP} has the lowest \ac{GOSPA} (\(2.59\)), missed-target count (\(7.60\)), and false-target count (\(1.18\)). It shares the lowest displayed localization component (\(1.55\)~m) with \grbpv{HL}. Oracle-grouped \grbpv{HMO} achieves \ac{GOSPA} \(2.24\).

\subsubsection{Clustering Error Contribution}
To isolate grouping error, we replace estimated \ac{DBSCAN} groups with oracle groups in matched pairs \grbpv{CS}--\grbpv{CO}, \grbpv{D}--\grbpv{DO}, and \grbpv{HM}--\grbpv{HMO}, retaining each \grbp{} architecture. Unlike \revGrBP{the ETT method's} individual-measurement association, these comparisons retain group-level association and one partition.

Across seven \acp{BS}, \(8.63\%\) of target detections are grouped with clutter and \(1.80\%\) of clutter detections with targets. Mean true-target and algorithm group sizes are similar (\(8.42\) versus \(8.50\)), but this does not imply correct membership. Clutter-only clusters are nearly singleton (\(1.02\)). The true-target cluster split rate is \(1.14\%\). Oracle grouping reduces \ac{GOSPA} from \(2.02\) to \(1.61\), \(3.50\) to \(3.03\), and \(2.79\) to \(2.24\) for the three pairs, respectively. These \(13\)--\(20\%\) improvements quantify grouping error, not changes of protocol or \revGrBP{local method}.

\subsubsection{Scalability Analysis}
Under the stated assumptions, \grbpv{D} and the proposed \grbpv{H}, \grbpv{HM}, \grbpv{HL}, and \grbpv{HP} have bounded expected per-node loads, unlike centralized coordinated \revGrBP{GrBP processing} (Propositions~\ref{prop:centralized_nonscalable}--\ref{prop:handover_scalable}). \figref{fig:scalability-vs-N} illustrates this network-size scaling, distinct from implementation runtime below, using a tiled hexagonal deployment with fixed target density and bounded neighborhoods. Maximum per-node loads follow the fixed payload-byte and runtime accounting model.\footnote{The communication accounting uses payload bytes only and excludes protocol headers. With \(b=8\) bytes per scalar, measurement dimension \(d_m=2\), kinematic dimension \(d_x=5\), and extent-scale dimension \(d_E=3\), one range--bearing measurement packet has \(b_z=d_m b=16\) bytes, and one prior, likelihood/track-level, posterior, or fused-posterior packet has \(b_\pi=b_\gamma=b_Q=[d_x+d_x^2/2+d_E+4]b=196\) bytes. For \grbpv{HM}, the prior-handover byte count should add \(b_z\) times the maximum forwarded-measurement count. The subscript on each runtime proxy identifies the corresponding \grbp{} variant. For example, $T_D$ is the proxy for \grbpv{D}. The runtime proxy uses per-measurement cost \(\tau_z=10^{-4}\)~s and per-track cost \(\tau_x=10^{-3}\)~s. Let \(m_{\max}\) be the maximum frame measurement count and \(c_{\max}\) the maximum number of tracks crossing to another \ac{BS} field of view. With fixed target density \(3\) per \ac{BS}, the plotted computation values are \(T_{CS}=m_{\max}\tau_z+3N\tau_x\), \(T_D=(m_{\max}/N)\tau_z+3\tau_x\), \(T_H=T_D+c_{\max}\tau_x\), \(T_{HM}=T_H+c_{\max}\tau_x\), and \(T_{HL}=T_{HP}=T_H+c_{\max}\tau_z\), converted to ms/frame.} The fusion-node load grows linearly in $N$, whereas the proposed handover variants remain bounded. Prior-only \grbpv{H} has the smallest constant. \grbpv{HM}, \grbpv{HL}, and \grbpv{HP} add evidence-exchange costs. With no inter-\ac{BS} communication, \grbpv{D} appears only as the local-computation reference.
\begin{figure}[!t]
  \centering
  \subfloat[Communication\label{fig:scalability-comm}]{%
    \begin{minipage}[t]{0.48\columnwidth}
      \centering
      \resizebox{\linewidth}{!}{\input{Drawings/scalability_comm_vs_N.tikz}}%
    \end{minipage}%
  }\hfill
  \subfloat[Computation\label{fig:scalability-comp}]{%
    \begin{minipage}[t]{0.48\columnwidth}
      \centering
      \resizebox{\linewidth}{!}{\input{Drawings/scalability_comp_vs_N.tikz}}%
    \end{minipage}%
  }
  \caption{Maximum per-node (a) communication and (b) computation load versus the number of \acp{BS} $N$ at fixed target density. The \grbpv{CS} fusion-node load grows linearly in $N$, while the proposed \grbp{} handover variants remain bounded under the stated assumptions.}
  \label{fig:scalability-vs-N}
\end{figure}

\tabref{tab:handover-counting} applies the same payload accounting to the seven-\ac{BS} network.

\begin{table}[!t]
  \centering
  \caption{Average per-trial transferred quantities and communication load.}
  \label{tab:handover-counting}
  \scriptsize
  \setlength{\tabcolsep}{3pt}
  \renewcommand{\arraystretch}{1.0}
  \input{result/handover_counting_summary.tex}
\end{table}

Coordinated \revGrBP{GrBP processing} requires \(468\)~KB per trial, versus \(32\), \(161\), \(344\), and \(343\)~KB for \grbpv{H}, \grbpv{HM}, \grbpv{HP}, and \grbpv{HL}. Thus, \grbpv{H} reduces localization RMSE by \(17.97\%\) relative to \grbpv{D} using \(6.9\%\) of the coordinated load. Evidence-exchange variants add shadow-to-owner packets and accepted-message fused-posterior returns during eligible co-observation frames. \grbpv{HM} remains cheaper because it forwards 16-byte measurements rather than the 196-byte track-level packets of \grbpv{HP} and \grbpv{HL}.

\subsubsection{\revGrBP{Comparison With State-of-the-Art BP-Based ETT}}
\label{subsec:pm_ett_comparison}
Both runtime panels compare \grbpv{D}, with estimated \ac{DBSCAN} grouping, against \revGrBP{the BP-based ETT method run independently at each BS}. Neither uses handover or oracle grouping. \revGrBP{The timers therefore exclude all GrBP handover variants and coordinated schedules.} ETT improves \ac{GOSPA} from \(3.50\) to \(2.86\) and missed-target count from \(13.61\) to \(12.29\) per \ac{BS}-trial (\tabref{tab:per-bs-results}). \revGrBP{Lower \grbpv{D} runtime does not imply higher accuracy.}

In \figref{fig:runtime-ett-vs-pm}(a), each point averages 700 \ac{BS}-frame samples (100 trials, seven \acp{BS}). Over 100 frames, ETT's mean runtime is 29.8--76.7 times \grbpv{D}'s. Overall means are 4156.67 and 64.83~ms/\ac{BS}-frame, respectively, a factor of 64.12. These are implementation timers, not end-to-end latency or equal-backend kernel timings. \grbpv{D}'s NumPy/SciPy CPU timer includes measurement preparation and clustering. \revGrBP{The ETT implementation} uses PyTorch CUDA and CPU association, timed after measurement extraction and conversion.

Panel~(b) uses the immediately preceding, otherwise matched 100-trial production run, averaging seven \acp{BS}. Mean BP-message times are 1.45~ms/\ac{BS}-frame for \grbpv{D} and 7.66 for ETT, a factor of 5.29. Both timers cover CPU-side operations and exclude likelihood construction, device transfers, and posterior updates. However, \grbpv{D} times the complete BP-association stage, whereas \revGrBP{the ETT implementation} times the inner per-measurement sum-product update: the timed kernels are not identical.

\begin{figure}[!t]
  \centering
  \subfloat[Recorded filter runtime.]{%
    \resizebox{0.96\linewidth}{!}{\input{Drawings/runtime_comparison_eot_vs_pm.tikz}}%
  }\\[-0.5ex]
  \subfloat[BP-message update time.]{%
    \resizebox{0.96\linewidth}{!}{\input{Drawings/bp_message_runtime_eot_vs_pm.tikz}}%
  }
  \caption{Distributed \grbpv{D} (solid orange) versus \revGrBP{the state-of-the-art BP-based ETT method} (dashed blue), without handover or oracle grouping, averaged over 100 trials and seven \acp{BS}. (a) Filter runtime, log scale. (b) BP-message time, linear scale, from archived timers of the preceding matched run.}
  \label{fig:runtime-ett-vs-pm}

\end{figure}

%% file: Drawings/measurement_model_validation.tikz
\begin{tikzpicture}
  \begin{axis}[
    width=0.94\columnwidth,
    height=0.44\columnwidth,
    scale only axis,
    xmin=25, xmax=125,
    ymin=0, ymax=23,
    xtick={30,60,90,120},
    ytick={0,5,10,15,20},
    xlabel={Target distance (m)},
    ylabel={Number of detections},
    label style={font=\footnotesize},
    tick label style={font=\footnotesize},
    grid=major,
    major grid style={draw=gray!28},
    axis on top,
    axis line style={draw=gray!70},
    tick style={draw=gray!70},
    legend style={
      draw=black!18,
      fill=white,
      fill opacity=0.86,
      text opacity=1,
      font=\DocumentLegendFont,
      at={(0.98,0.98)},
      anchor=north east
    },
    legend cell align={left}
  ]
    \addplot+[thick, no marks, color=black] coordinates {
      (30,19.88293168)
      (40,14.91219876)
      (50,11.92975901)
      (60,9.941465838)
      (70,8.521256433)
      (80,7.456099379)
      (90,6.627643892)
      (100,5.964879503)
      (110,5.42261773)
      (120,4.970732919)
    };
    \addlegendentry{Model \(\mu_m\)}

    \addplot+[
      thick,
      color=plotD,
      mark=*,
      mark size=1.7pt,
      mark options={solid, fill=plotD},
      error bars/.cd,
        y dir=both,
        y explicit
    ] coordinates {
      (30,15.07) +- (0,1.41954218)
      (40,12.46) +- (0,1.385063175)
      (50,10.505) +- (0,1.322866206)
      (60,8.985) +- (0,1.129059343)
      (70,8.1) +- (0,1.166190379)
      (80,7.49) +- (0,1.081619157)
      (90,6.93) +- (0,1.262972684)
      (100,6.34) +- (0,0.997196069)
      (110,6.005) +- (0,0.8746284926)
      (120,5.67) +- (0,0.7688302804)
    };
    \addlegendentry{\acs{OMP} detections}
  \end{axis}
\end{tikzpicture}

%% file: result/per_bs_results_summary.tex
\begin{tabular*}{\linewidth}{@{}l@{\extracolsep{\fill}}rrrrrr@{}}
\toprule
Method & GOSPA & Missed/trial & False/trial & \shortstack{Loc. RMSE\\{[m]}} & \shortstack{Extent RMSE\\{[m]}} & \shortstack{Heading RMSE\\{[$^\circ$]}} \\
\midrule
\multicolumn{7}{@{}l}{\textit{Coordinated \grbp{} references}} \\
\grbpv{CS} & $2.02 \pm 0.27$ & $4.41 \pm 1.11$ & $0.52 \pm 0.34$ & $1.28 \pm 0.15$ & $0.32 \pm 0.05$ & $16.41 \pm 6.31$ \\
\grbpv{CP} & $1.99 \pm 0.26$ & $3.71 \pm 0.73$ & $0.93 \pm 0.63$ & $1.29 \pm 0.16$ & $0.33 \pm 0.05$ & $15.55 \pm 6.02$ \\
\grbpv{CO} & $\mathbf{1.61 \pm 0.12}$ & $\mathbf{2.50 \pm 0.46}$ & $\mathbf{0.42 \pm 0.30}$ & $\mathbf{1.06 \pm 0.12}$ & $\mathbf{0.28 \pm 0.05}$ & $\mathbf{13.30 \pm 4.93}$ \\
\midrule
\multicolumn{7}{@{}l}{\textit{Uncoordinated distributed \grbp{}}} \\
\grbpv{D} & $3.50 \pm 0.57$ & $13.61 \pm 5.10$ & $2.27 \pm 0.85$ & $2.12 \pm 0.34$ & $0.52 \pm 0.05$ & $25.76 \pm 5.48$ \\
\grbpv{DO} & $3.03 \pm 0.51$ & $10.32 \pm 4.04$ & $1.27 \pm 0.82$ & $1.89 \pm 0.32$ & $0.53 \pm 0.06$ & $24.40 \pm 4.72$ \\
\midrule
\multicolumn{7}{@{}l}{\textit{Proposed \grbp{} handover}} \\
\grbpv{H} & $2.80 \pm 0.48$ & $8.07 \pm 3.04$ & $1.85 \pm 0.65$ & $1.74 \pm 0.27$ & $0.39 \pm 0.06$ & $18.12 \pm 5.74$ \\
\grbpv{HM} & $2.79 \pm 0.47$ & $9.13 \pm 3.38$ & $1.64 \pm 0.57$ & $1.60 \pm 0.21$ & $0.38 \pm 0.05$ & $18.28 \pm 5.72$ \\
\grbpv{HMO} & $\mathbf{2.24 \pm 0.29}$ & $\mathbf{4.93 \pm 1.59}$ & $\mathbf{1.16 \pm 0.34}$ & $\mathbf{1.42 \pm 0.17}$ & $\mathbf{0.35 \pm 0.04}$ & $\mathbf{14.84 \pm 4.27}$ \\
\grbpv{HP} & $2.59 \pm 0.38$ & $7.60 \pm 2.47$ & $1.18 \pm 0.30$ & $1.55 \pm 0.19$ & $0.37 \pm 0.05$ & $17.51 \pm 5.41$ \\
\grbpv{HL} & $2.65 \pm 0.41$ & $8.08 \pm 2.75$ & $1.35 \pm 0.45$ & $1.55 \pm 0.19$ & $0.38 \pm 0.05$ & $18.12 \pm 5.54$ \\
\midrule
\multicolumn{7}{@{}l}{\textit{\revGrBP{State-of-the-art BP-based ETT method (distributed)}}} \\
ETT & $2.86 \pm 0.46$ & $12.29 \pm 5.20$ & $0.63 \pm 0.39$ & $1.37 \pm 0.16$ & $0.52 \pm 0.03$ & $9.37 \pm 0.37$ \\
\bottomrule
\end{tabular*}

%% file: Drawings/all_bs_pm_gospa_timeseries.tikz
\begin{tikzpicture}
    \begin{axis}[
      width=0.92\columnwidth, height=0.42\columnwidth,
      scale only axis,
      xmin=0, xmax=100, xtick={0,20,40,60,80,100},
      xlabel={Frame}, ylabel={GOSPA}, ymin=0, ymax=7.2,
      ytick={0,2,4,6},
      label style={font=\scriptsize}, tick label style={font=\scriptsize},
      unbounded coords=jump
    ]
      \let\FirstActiveTrackSeen\relax
      \addplot[thick, solid, no marks, smooth, color=plotC, start from first active monte carlo track]
        table[x=frame,y=gospa_root_mean_power_m_centralized,col sep=comma]{result/all_bs_metrics_timeseries_GOSPA.csv};
      \let\FirstActiveTrackSeen\relax
      \addplot[thick, solid, no marks, smooth, color=plotCL, start from first active monte carlo track]
        table[x=frame,y=gospa_root_mean_power_m_centralized_likelihood_fusion,col sep=comma]{result/all_bs_metrics_timeseries_GOSPA.csv};
      \let\FirstActiveTrackSeen\relax
      \addplot[thick, dashed, no marks, smooth, color=plotCO, start from first active monte carlo track]
        table[x=frame,y=gospa_root_mean_power_m_centralized_true_m,col sep=comma]{result/all_bs_metrics_timeseries_GOSPA.csv};
      \let\FirstActiveTrackSeen\relax
      \addplot[thick, solid, no marks, smooth, color=plotD, start from first active monte carlo track]
        table[x=frame,y=gospa_root_mean_power_m_distributed,col sep=comma]{result/all_bs_metrics_timeseries_GOSPA.csv};
      \let\FirstActiveTrackSeen\relax
      \addplot[thick, dashed, no marks, smooth, color=plotDO, start from first active monte carlo track]
        table[x=frame,y=gospa_root_mean_power_m_distributed_true_m,col sep=comma]{result/all_bs_metrics_timeseries_GOSPA.csv};
      \let\FirstActiveTrackSeen\relax
      \addplot[thick, densely dotted, no marks, smooth, color=plotETT, start from first active monte carlo track]
        table[x=frame,y=gospa_root_mean_power_m_distributed_eot_particle,col sep=comma]{result/all_bs_metrics_timeseries_GOSPA.csv};
      \let\FirstActiveTrackSeen\relax
      \addplot[thick, solid, no marks, smooth, color=plotHM, start from first active monte carlo track]
        table[x=frame,y=gospa_root_mean_power_m_handover_without_m,col sep=comma]{result/all_bs_metrics_timeseries_GOSPA.csv};
      \let\FirstActiveTrackSeen\relax
      \addplot[thick, solid, no marks, smooth, color=plotHWM, start from first active monte carlo track]
        table[x=frame,y=gospa_root_mean_power_m_handover_with_m,col sep=comma]{result/all_bs_metrics_timeseries_GOSPA.csv};
      \let\FirstActiveTrackSeen\relax
      \addplot[thick, solid, no marks, smooth, color=plotHP, start from first active monte carlo track]
        table[x=frame,y=gospa_root_mean_power_m_handover_with_posterior,col sep=comma]{result/all_bs_metrics_timeseries_GOSPA.csv};
      \let\FirstActiveTrackSeen\relax
      \addplot[thick, solid, no marks, smooth, color=plotHL, start from first active monte carlo track]
        table[x=frame,y=gospa_root_mean_power_m_handover_with_likelihood,col sep=comma]{result/all_bs_metrics_timeseries_GOSPA.csv};
      \let\FirstActiveTrackSeen\relax
      \addplot[thick, dashed, no marks, smooth, color=plotHMO, start from first active monte carlo track]
        table[x=frame,y=gospa_root_mean_power_m_handover_with_true_m,col sep=comma]{result/all_bs_metrics_timeseries_GOSPA.csv};
    \end{axis}
    \begin{axis}[
      width=0.92\columnwidth, height=0.42\columnwidth,
      scale only axis,
      xmin=0, xmax=100,
      axis y line*=right, axis x line=none,
      ymin=0, ymax=5.2, ylabel={Avg. track count}, ytick pos=right,
      ytick={0,2,4},
      y label style={font=\scriptsize}, yticklabel style={font=\scriptsize},
      unbounded coords=jump
    ]
      \let\FirstActiveTrackSeen\relax
      \addplot[no marks, black, dashed, start from first active monte carlo track]
        table[x=frame,y=track_count_avg,col sep=comma]{result/all_bs_metrics_timeseries_GOSPA.csv};
    \end{axis}
  \end{tikzpicture}

%% file: Drawings/scalability_comm_vs_N.tikz
\begin{tikzpicture}
\begin{semilogyaxis}[
    width=\linewidth,
    height=0.95\linewidth,
    scale only axis,
    xlabel={Number of \acp{BS} $N$},
    ylabel={Max per-node comm.\ [KB/frame]},
    xmin=0, xmax=65,
    xtick={1,7,19,37,61},
    ymin=0.1, ymax=100,
    grid=both,
    grid style={gray!20},
    tick label style={font=\DocumentLegendFont},
    label style={font=\DocumentLegendFont},
    legend style={at={(0.02,0.98)}, anchor=north west, draw=black!30, fill=white, fill opacity=0.78, text opacity=1, font=\DocumentLegendFont},
    legend cell align=left,
]
\addplot[thick, color=plotC, mark=*, mark size=1.2pt] coordinates {
  (1,1.28125) (7,7.89062) (19,20.7188) (37,41.6094) (61,69.1094)
};
\addlegendentry{\grbpv{CS} (fusion node)}
\addplot[thick, color=plotHM, mark=triangle*, mark size=1.4pt] coordinates {
  (7,0.382812) (19,0.574219) (37,0.765625) (61,0.574219)
};
\addlegendentry{\grbpv{H}}
\addplot[thick, color=plotHWM, mark=diamond*, mark size=1.4pt] coordinates {
  (7,0.601562) (19,0.699219) (37,0.96875) (61,0.902344)
};
\addlegendentry{\grbpv{HM}}
\addplot[thick, color=plotHL, mark=square*, mark size=1.2pt] coordinates {
  (7,0.765625) (19,1.14844) (37,1.53125) (61,1.14844)
};
\addlegendentry{\grbpv{HL}/\grbpv{HP}}
\end{semilogyaxis}
\end{tikzpicture}

%% file: Drawings/scalability_comp_vs_N.tikz
\begin{tikzpicture}
\begin{semilogyaxis}[
    width=\linewidth,
    height=0.95\linewidth,
    scale only axis,
    xlabel={Number of \acp{BS} $N$},
    ylabel={Estimated Max per-node runtime [ms/frame]},
    xmin=0, xmax=65,
    xtick={1,7,19,37,61},
    ymax=700,
    grid=both,
    grid style={gray!20},
    tick label style={font=\DocumentLegendFont},
    label style={font=\DocumentLegendFont},
    legend style={at={(0.02,0.98)}, anchor=north west, draw=black!30, fill=white, fill opacity=0.78, text opacity=1, font=\DocumentLegendFont},
    legend cell align=left,
]
\addplot[thick, color=plotC, mark=*, mark size=1.2pt] coordinates {
  (1,11.2) (7,71.5) (19,189.6) (37,377.3) (61,625.3)
};
\addlegendentry{\grbpv{CS} (fusion node)}
\addplot[thick, color=plotHL, mark=square*, mark size=1.2pt] coordinates {
  (1,11.2) (7,12.4143) (19,13.2789) (37,14.5973) (61,13.5508)
};
\addlegendentry{\grbpv{HL}/\grbpv{HP}}
\addplot[thick, color=plotHWM, mark=diamond*, mark size=1.4pt] coordinates {
  (1,11.2) (7,14.2143) (19,15.9789) (37,18.1973) (61,16.2508)
};
\addlegendentry{\grbpv{HM}}
\addplot[thick, color=plotHM, mark=triangle*, mark size=1.4pt] coordinates {
  (1,11.2) (7,12.2143) (19,12.9789) (37,14.1973) (61,13.2508)
};
\addlegendentry{\grbpv{H}}
\addplot[thick, color=plotD, mark=triangle*, mark size=1.4pt] coordinates {
  (1,11.2) (7,10.2143) (19,9.97895) (37,10.1973) (61,10.2508)
};
\addlegendentry{\grbpv{D}}
\end{semilogyaxis}
\end{tikzpicture}

%% file: result/handover_counting_summary.tex
\setlength{\tabcolsep}{2pt}
\begin{tabular}{@{}lrrrrr@{}}
\toprule
Method & \shortstack{Meas.} & \shortstack{Prior} & \shortstack{Evidence\\ sent} & \shortstack{Fused\\ return} & \shortstack{Total comm.} \\
Unit & 1000 & 1000 & 1000 & 1000 & KB \\
\midrule
\grbpv{CS}/\grbpv{CP} & 29.98 & -- & -- & -- & 468 \\
\grbpv{H} & -- & 0.17 & -- & -- & 32 \\
\grbpv{HM} & -- & 0.17 & 4.2 & 0.33 & 161 \\
\grbpv{HP} & -- & 0.17 & 0.99 & 0.64 & 344 \\
\grbpv{HL} & -- & 0.17 & 0.99 & 0.63 & 343 \\
\bottomrule
\end{tabular}

%% file: Drawings/runtime_comparison_eot_vs_pm.tikz
\begin{tikzpicture}
  \begin{semilogyaxis}[
    width=0.90\columnwidth,
    height=0.30\columnwidth,
    scale only axis,
    xmin=0, xmax=99,
    ymin=10, ymax=10000,
    xtick={0,20,40,60,80},
    ytick={10,100,1000,10000},
    xlabel={Frame},
    ylabel={Mean filter runtime [ms/BS-frame]},
    label style={font=\footnotesize},
    tick label style={font=\footnotesize},
    grid=both,
    major grid style={draw=gray!18},
    minor grid style={draw=gray!10},
    axis on top,
    axis line style={draw=gray!60},
    tick style={draw=gray!60},
    legend style={
      draw=none,
      fill=none,
      font=\DocumentLegendFont,
      legend columns=2,
      at={(0.5,1.03)},
      anchor=south,
      /tikz/every even column/.style={column sep=0.8em}
    },
    legend cell align={left},
    unbounded coords=jump
  ]
    \addplot[thick, solid, no marks, color=plotD] coordinates {
      (0,19.0844645295) (1,29.3788808302) (2,29.8983755696) (3,31.2354269069) (4,32.9361495753)
      (5,34.1285973365) (6,34.4667431022) (7,34.6679565734) (8,34.7807101198) (9,35.1974733132)
      (10,36.8774424625) (11,38.1710688698) (12,40.1228918835) (13,39.9329040959) (14,40.4807346523)
      (15,40.3991267126) (16,41.6641858143) (17,41.996330467) (18,42.4912000841) (19,46.9319686274)
      (20,49.8226908842) (21,56.6433742531) (22,57.5394971133) (23,57.9418713616) (24,57.321339225)
      (25,58.2224983636) (26,64.9440415749) (27,67.1790846771) (28,69.239477987) (29,72.7707436164)
      (30,73.2917255467) (31,73.8167754111) (32,75.2245123135) (33,72.3485107974) (34,69.7208686081)
      (35,69.2099622084) (36,73.3146467402) (37,75.7864273496) (38,80.0769452327) (39,80.5216277586)
      (40,80.2229911893) (41,82.5250590817) (42,78.5889167831) (43,80.0507066997) (44,78.4259934525)
      (45,82.0284985967) (46,87.6979849111) (47,89.1811049214) (48,86.2522368206) (49,85.1282683309)
      (50,82.2634681898) (51,80.9460379936) (52,79.4694011412) (53,79.969666327) (54,75.2377350764)
      (55,75.9008264712) (56,80.6518608591) (57,81.9572434964) (58,82.5307606591) (59,85.5946642556)
      (60,85.2784156881) (61,84.6084927037) (62,81.3432479459) (63,81.0581748736) (64,79.2633246483)
      (65,77.3751808489) (66,76.5476156257) (67,79.3221694241) (68,78.5785007932) (69,77.8589528303)
      (70,73.9362398409) (71,71.4772089442) (72,68.6952845674) (73,67.6073541575) (74,67.5694060975)
      (75,65.9527800832) (76,69.4712253932) (77,68.723521283) (78,67.2660920742) (79,66.1240934444)
      (80,67.5760302704) (81,66.3698776456) (82,64.4544745391) (83,65.9186221896) (84,64.802782818)
      (85,61.260141622) (86,60.8934529151) (87,61.8101143745) (88,63.0484230113) (89,64.2682842672)
      (90,62.578688893) (91,66.679629854) (92,62.0598287164) (93,62.6166887733) (94,62.5052078915)
      (95,63.1046802032) (96,63.2105442949) (97,63.6713029605) (98,65.3355644642) (99,68.1798322174)
    };
    \addlegendentry{\grbpv{D}}

    \addplot[thick, dashed, no marks, color=plotC] coordinates {
      (0,883.42952669) (1,895.203183529) (2,891.024073182) (3,982.939654874) (4,1052.79377107)
      (5,1151.35444413) (6,1198.05958374) (7,1245.74669656) (8,1366.32255767) (9,1448.80308274)
      (10,1667.35238624) (11,1887.26351038) (12,2127.6171845) (13,2327.93717628) (14,2450.28159819)
      (15,2585.42105021) (16,2667.50197295) (17,2811.00926708) (18,2945.02551547) (19,2988.87831714)
      (20,3230.87997569) (21,3199.11545582) (22,3226.49834517) (23,3198.55423781) (24,3207.46749817)
      (25,3571.56021067) (26,3828.20293868) (27,4118.63272725) (28,4178.27103484) (29,4261.52589782)
      (30,4347.11339101) (31,4457.00975094) (32,4634.11697635) (33,4579.31383732) (34,4617.63368579)
      (35,4881.78921313) (36,5304.62680314) (37,5433.67169422) (38,5745.62763017) (39,5559.47564594)
      (40,5667.28917598) (41,5514.37736519) (42,5645.80313768) (43,5873.03680985) (44,5864.94469501)
      (45,6208.93908433) (46,6385.62774693) (47,6240.5507983) (48,6109.37176983) (49,6210.57581051)
      (50,6267.31832526) (51,6045.93906361) (52,6093.2824838) (53,5801.37743049) (54,5580.2128521)
      (55,5754.15508097) (56,6049.92419451) (57,5930.61423691) (58,6071.55979164) (59,5931.77948158)
      (60,5834.73986515) (61,5633.03985682) (62,5464.08274265) (63,5459.49273844) (64,5125.6139437)
      (65,4958.27164384) (66,4733.85083563) (67,4869.52263142) (68,4715.77085158) (69,4568.88500314)
      (70,4594.81182833) (71,4591.88331398) (72,4300.5595139) (73,4495.47075395) (74,4286.97361521)
      (75,4498.95604193) (76,4501.40352308) (77,4566.68382551) (78,4571.09955299) (79,4381.64843255)
      (80,4334.2396793) (81,3970.73250712) (82,4027.97618571) (83,4013.02439134) (84,3694.95146186)
      (85,3607.31483105) (86,3569.29183591) (87,3792.72694365) (88,3983.2564579) (89,4030.81528957)
      (90,4120.04098415) (91,4013.04579287) (92,3839.41670309) (93,3908.21972226) (94,3909.26645273)
      (95,3950.4284987) (96,4018.13196282) (97,4055.861776) (98,4112.86219627) (99,4161.08719039)
    };
    \addlegendentry{\revGrBP{ETT (BP, distributed)}}

  \end{semilogyaxis}
\end{tikzpicture}

%% file: Drawings/bp_message_runtime_eot_vs_pm.tikz
\begin{tikzpicture}
  \begin{axis}[
    width=0.90\columnwidth,
    height=0.30\columnwidth,
    scale only axis,
    xmin=0, xmax=99,
    ymin=0, ymax=10,
    xtick={0,20,40,60,80},
    ytick={0,2,4,6,8,10},
    xlabel={Frame},
    ylabel={Mean BP-message time [ms/BS-frame]},
    label style={font=\footnotesize},
    tick label style={font=\footnotesize},
    grid=major,
    major grid style={draw=gray!18},
    axis on top,
    axis line style={draw=gray!60},
    tick style={draw=gray!60},
    legend style={
      draw=none,
      fill=none,
      font=\DocumentLegendFont,
      legend columns=2,
      at={(0.5,1.03)},
      anchor=south,
      /tikz/every even column/.style={column sep=0.8em}
    },
    legend cell align={left},
    unbounded coords=jump
  ]
    \addplot[thick, solid, no marks, color=plotD] coordinates {
      (0,0.0782027968151) (1,0.928429517444) (2,0.969835416654) (3,0.994286139394) (4,1.02238545584)
      (5,1.03620987703) (6,1.01521199976) (7,1.03888709988) (8,1.03398941806) (9,1.03666415291)
      (10,1.05418422705) (11,1.0791372872) (12,1.11072932361) (13,1.08336416905) (14,1.08109307384)
      (15,1.09531086148) (16,1.09730323024) (17,1.11140024314) (18,1.11975412334) (19,1.24931553524)
      (20,1.3145329589) (21,1.3624607625) (22,1.42637728857) (23,1.38952941347) (24,1.35734149378)
      (25,1.38127391381) (26,1.48247581277) (27,1.53718521019) (28,1.54756416981) (29,1.55666248834)
      (30,1.55431394444) (31,1.55406349902) (32,1.56224925459) (33,1.52654576256) (34,1.49345280368)
      (35,1.49870122553) (36,1.53484251481) (37,1.61542150648) (38,1.66094140063) (39,1.67065405183)
      (40,1.68045691093) (41,1.65616947094) (42,1.5678104925) (43,1.68456553759) (44,1.77064649741)
      (45,1.76806884819) (46,1.79669289136) (47,1.76050841438) (48,1.73744551932) (49,1.74514302015)
      (50,1.75397599102) (51,1.62321245306) (52,1.62891456348) (53,1.62252145177) (54,1.56379806115)
      (55,1.55821639599) (56,1.60050681379) (57,1.66593765982) (58,1.66774371606) (59,1.70532971549)
      (60,1.69353650847) (61,1.65308151718) (62,1.63428721026) (63,1.63204070378) (64,1.60018562072)
      (65,1.56332395289) (66,1.59710499139) (67,1.64324934614) (68,1.600175044) (69,1.61402340974)
      (70,1.57782158763) (71,1.52229361709) (72,1.50954057878) (73,1.48906069376) (74,1.46193809225)
      (75,1.47898675145) (76,1.47324737697) (77,1.50738152138) (78,1.49995792633) (79,1.46585119005)
      (80,1.4838355279) (81,1.45383752322) (82,1.4604194752) (83,1.47026089535) (84,1.41502881035)
      (85,1.40168500977) (86,1.4092979724) (87,1.4191648038) (88,1.49531046989) (89,1.53527067644)
      (90,1.51476887331) (91,1.5386035646) (92,1.47005518979) (93,1.57184757968) (94,1.4823719201)
      (95,1.50659185146) (96,1.52767972595) (97,1.53107510195) (98,1.51101508807) (99,1.52069259872)
    };
    \addlegendentry{\grbpv{D}}

    \addplot[thick, dashed, no marks, color=plotC] coordinates {
      (0,3.48058195062) (1,3.53881276385) (2,3.52716341315) (3,3.72712344253) (4,3.85600208345)
      (5,4.06036588023) (6,4.10499991703) (7,4.21721752673) (8,4.45876320939) (9,4.5701900164)
      (10,4.94854451775) (11,5.25354580951) (12,5.59571489525) (13,5.85713294714) (14,5.98174469448)
      (15,6.20050568755) (16,6.32082817359) (17,6.47772939382) (18,6.72903017271) (19,6.92791703522)
      (20,7.13436532565) (21,7.04161164261) (22,7.0559402741) (23,7.00010255695) (24,6.99543717239)
      (25,7.34957666092) (26,7.60095486947) (27,7.88448293172) (28,7.87340521859) (29,7.88110349894)
      (30,8.00696860246) (31,8.04690518735) (32,8.15539042952) (33,7.97356796926) (34,7.96747297349)
      (35,8.21078489548) (36,8.46160427884) (37,8.70033481766) (38,8.9839213713) (39,8.81100690848)
      (40,9.10057943701) (41,9.08534025177) (42,9.21732954924) (43,9.43116294335) (44,9.40708520614)
      (45,9.72251096095) (46,9.8330842471) (47,9.70104344207) (48,9.60866928603) (49,9.62432223544)
      (50,9.71593335738) (51,9.43229090721) (52,9.47875076498) (53,9.17427201303) (54,9.05136164987)
      (55,9.2071578295) (56,9.46542287318) (57,9.37569172158) (58,9.5256125725) (59,9.37899734121)
      (60,9.37966744499) (61,9.17873195233) (62,9.05054808944) (63,9.06184637173) (64,8.84304551254)
      (65,8.68389320649) (66,8.54407805752) (67,8.67522197348) (68,8.47795835655) (69,8.26818985605)
      (70,8.1723367332) (71,8.16275444325) (72,7.76808187412) (73,7.9077851396) (74,7.66332451959)
      (75,7.91731533653) (76,7.82809443689) (77,7.90475022306) (78,7.98800891364) (79,7.77938186404)
      (80,7.8130295892) (81,7.43863189753) (82,7.58748374327) (83,7.64565904202) (84,7.35463355479)
      (85,7.36433613799) (86,7.34422799581) (87,7.56174017435) (88,7.69705314884) (89,7.71260224335)
      (90,7.83267044507) (91,7.70843631782) (92,7.50142383597) (93,7.62522075362) (94,7.65540947399)
      (95,7.72719817695) (96,7.79791141154) (97,7.86729339389) (98,7.83651301642) (99,7.92380015747)
    };
    \addlegendentry{\revGrBP{ETT (BP, distributed)}}

  \end{axis}
\end{tikzpicture}

%% file: sections/sec_conclusion.tex
\vspace{-1mm}
\section{Conclusion}
\label{sec:conclusion}

We proposed \revGrBP{GrBP} and event-triggered owner--shadow handover for \ac{DISAC} with partially overlapping finite \acp{FOV}. Range- and visibility-aware detection and local owner--UID labels avoid continuous centralized fusion and global label synchronization. Prior-only \grbpv{H} reduces boundary track loss and improves on uncoordinated \grbpv{D} with bounded, event-driven communication. \grbpv{HM}, \grbpv{HL}, and \grbpv{HP} improve selected metrics at higher communication cost.

Future work will refine the measurement model and compare random-matrix with alternative extent representations. It will also study tractable multi-partition or sampling-based joint clustering and tracking. Further directions include validating the model with realistic ray tracing and channel-parameter estimation, quantifying accuracy--complexity trade-offs, and deriving sufficient conditions for bounded local workload. Finally, we will develop compressed, asynchronous, failure-resistant handover for realistic sensing, propagation, and mobility.


%% file: sections/sec_appendix_proofs.tex
\vspace{-1mm}
\section{Proof of Proposition~\ref{prop:centralized_nonscalable}}
\label{sec:appendix_scalability}
\setcounter{equation}{0}
\begin{proof}
Consider the centralized coordinated architecture, in which every \ac{BS} forwards its local
measurements to the fusion node \(0\in\Set{V}(\mathcal M)\) that performs the joint
multi-target update at each frame \(k\).
Node~\(0\) receives \(\sum_{s\in\Set{B}}|\V{Z}_{s,k}|\) detections. With \(b_z\)
bytes per detection and \(\mathbb{E}[|\V{Z}_{s,k}|]\ge\mu_{\min}>0\) (each \ac{BS}
contributes at least its clutter), its expected communication load obeys
\(\mathbb{E}[C^{\mathrm{comm}}_{0,k}]\ge b_z\sum_{s}\mathbb E[|\V{Z}_{s,k}|]\ge b_z\mu_{\min}N\),
and any centralized update must read each received detection, so
\(\mathbb{E}[C^{\mathrm{comp}}_{0,k}]=\bl{\Omega(N)}\). Both diverge as \(N\to\infty\),
so no \(N\)-independent constants bound node~\(0\). Since
Definition~\ref{def:scalable} requires bounds at every node and
\(0\in\Set{V}(\mathcal M)\), the centralized coordinated architectures violate
it.
\end{proof}

\vspace{-1mm}
\section{Proof of Proposition~\ref{prop:handover_scalable}}
\label{sec:app_handover_scalability_proofs}

\begin{proof}
Here \(\Set{V}(\mathcal M)=\Set{B}\), with no fusion node. At \ac{BS}~\(s\) the
single-\ac{BS} update costs \(O(n^{t}_{s,k}n^{g}_{s,k})\) \eqref{eq:local_comp_cost}. Evaluating
 the handover score over the \(\le d_{\max}\) neighbors of each track,
maintaining the owner/shadow records, and (for the optional
variants) fusing the \(\le d_{\max}\) packets per shared track add
\(O(n^{t}_{s,k}d_{\max})\), giving
\eqref{eq:handover_comp_bound}. For communication, only owners
send priors and \(\mathcal{R}_o(\ell)\) suppresses repeats within a visibility
episode, so each (track, neighbor) pair sends at most one prior
per threshold upcrossing of
\eqref{eq:prior_handover_trigger}. \ac{BS}~\(s\) participates in at most
\(d_{\max}n^{t}_{s,k}\) such pairs, each carrying \(\le b_{\mathrm{evt}}\) bytes. Accounting
for both directions gives the first term of \eqref{eq:handover_comm_bound}. The
optional variants add at most \(d_{\max}\) active owner--shadow edges per shared
track, each with an uplink and a downlink packet of size \(b_Z+b_Q\),
\(b_\gamma+b_Q\), or \(2b_Q\), giving the second term.
The conditional bounds depend on \(N\) only through the local counts and degree.
Taking expectations, Assumptions~\ref{as:degree}--\ref{as:representation} bound
\(d_{\max}\) and the second moments of \(n^{t}_{s,k},n^{g}_{s,k}\),\footnote{Consistent
with the unbounded-support Poisson models of \secref{sec:detection_rate}: \(r_{\min}\)
caps \(\mu_{m,s}\) via \eqref{eq:rho_from_pd_cell_density} and the bounded region and
clutter intensity cap the clutter mean, so \(m_{s,k}\) (hence \(n^{g}_{s,k}\le m_{s,k}\))
has finite, \(N\)-independent moments despite unbounded support.} and Cauchy--Schwarz
bounds \(\mathbb E[n^{t}_{s,k}n^{g}_{s,k}]\) and \(\mathbb E[n^{t}_{s,k}]\) by
\(N\)-independent constants. Hence all four variants are scalable.
\end{proof}

%% file: sections/sec_appendixI.tex
\section*{S-A. Derivation of the Approximate Joint PDF}
\phantomsection
\makeatletter
\def\@currentlabel{S-A}
\makeatother
\label{sec:appendix_I}

\revGrBP{We derive the single-step joint \ac{PDF} and its factorization for grouped-measurement belief propagation (GrBP), the local multi-target tracking method of \secref{main-sec:single_bs_method}, treating the external partition as given side information.}\acused{GrBP} Approximations are marked by \(\approx\): the clustering-induced group count is modeled as Poisson, so the resulting factorization is an approximate conditional model rather than a generative law over the raw measurements. The derivation follows the point-target message-passing filters \cite{meyer2018messagepassing,meyer2018messagepassingSupplement}, with two differences specific to our extended-object model: measurements are processed in groups rather than one per target, and the number of clutter groups produced by the clustering step is approximated by a Poisson distribution. The key idea is to express the unknown numbers of detected legacy targets, newborn targets, and clutter groups as deterministic functions of the association and existence variables, after which the joint \ac{PDF} factorizes into per-target terms. We omit the time index \(k\) throughout.

\begin{figure*}[!t]
  \centering
  \resizebox{\textwidth}{!}{\input{Drawings/graph_drawing}}
  \caption{Factor graph of \revGrBP{GrBP}, with the time index \(k\) shown explicitly. Circles denote state and association variables. Boxes denote the legacy priors, the group likelihood factors in \eqref{eq:legacy_l_function} and \eqref{eq:new_l_function}, and the association-consistency factor \(\boldsymbol{\psi}\).}
  \label{fig:supp_factor_graph}
\end{figure*}

\newcommand{\Pm}[2]{{}_{#1}P_{#2}}  
\newcommand{\mucgroup}{\mu_{\mathrm{g}}}
\vspace{-1mm}
\subsection{Notation and Setup}

We reuse the notation of \secref{main-sec:system_model} and \secref{main-sec:single_bs_method}. The \(m\) measurements are grouped into \(\mathrm{n}^\mathrm{g}\) groups \(\V{Z}^{G}=\{\V{Z}^1,\dots,\V{Z}^{\mathrm{n}^\mathrm{g}}\}\), where group \(\V{Z}^g\) contains \(|\V{Z}^g|\) measurements and both \(m\) and \(\mathrm{n}^\mathrm{g}\) are determined by \(\V{Z}^{G}\). We use ``group'' rather than ``cluster'' to reserve the superscript \(c\) for clutter. The \(\mathrm{n}^\mathrm{t}\) legacy targets have augmented states \(\underline{\V{y}}^i=[\underline{\V{x}}^{i\T},\underline{\M{E}}^{i\T},\underline{r}^i]^\T\) stacked in \(\underline{\V{Y}}\). We introduce one newborn component per group, \(\overline{\V{y}}^j=[\overline{\V{x}}^{j\T},\overline{\M{E}}^{j\T},\overline{r}^j]^\T\) stacked in \(\overline{\V{Y}}\), write \(\overline{\Set{X}}=((\overline{\V{x}}^1,\overline{\M{E}}^1),\dots,(\overline{\V{x}}^{\mathrm{n}^\mathrm{g}},\overline{\M{E}}^{\mathrm{n}^\mathrm{g}}))\) for the newborn kinematic--extent pairs, and collect the legacy and newborn existence indicators in \(\underline{\V{r}}\) and \(\overline{\V{r}}\).
Assuming, as in \cite[Eq.~(36)]{meyer2018messagepassing} and \cite[Eqs.~(1)--(4)]{meyer2021scalable}, that the time-\((k{-}1)\) posterior and the transition density factorize into single-target terms, the legacy prior is
\begin{equation}
f(\underline{\V{Y}}) = \prod_{i=1}^{\mathrm{n}^\mathrm{t}} f(\underline{\V{y}}^i),
\label{eq:pt_legacy_prior}
\end{equation}
with \(f(\underline{\V{y}}^i)\) the predicted marginal of legacy target \(i\).

Following \cite[Eqs.~(10),(20)]{meyer2018messagepassing}, the target- and measurement-oriented association vectors are \(\V{a}=(a^1,\dots,a^{\mathrm{n}^\mathrm{t}})\) and \(\V{b}=(b^1,\dots,b^{\mathrm{n}^\mathrm{g}})\): \(a^i=g\) if legacy target \(i\) generates group \(g\) and \(a^i=0\) otherwise, while \(b^j=\ell\) if group \(j\) is generated by legacy target \(\ell\) and \(b^j=0\) otherwise. The consistency factor \(\boldsymbol{\psi}(\V{a},\V{b})=\prod_{i,j}\Psi_{i,j}(a^i,b^j)\), with \(\Psi_{i,j}=0\) iff exactly one of \(a^i=j\) and \(b^j=i\) holds and \(\Psi_{i,j}=1\) otherwise, restricts the model to exclusive association hypotheses. Its marginals reproduce the single-sided exclusivity factors, \(\sum_{\V{b}}\boldsymbol{\psi}(\V{a},\V{b})=\boldsymbol{\psi}(\V{a})\) and \(\sum_{\V{a}}\boldsymbol{\psi}(\V{a},\V{b})=\boldsymbol{\psi}(\V{b})\).

Each group originates from a single source. We collect the indices of detected legacy targets, existing newborns, and clutter groups in \(\mathbb{D}(\V{a})=\{i:a^i\neq0\}\), \(\mathbb{N}(\overline{\V{r}})=\{j:\overline{r}^j=1\}\), and \(\mathbb{C}(\V{a},\overline{\V{r}})=\{1,\dots,\mathrm{n}^\mathrm{g}\}\setminus(\{a^i:i\in\mathbb{D}(\V{a})\}\cup\mathbb{N}(\overline{\V{r}}))\), with cardinalities \(\mathrm{n}^\mathrm{d}=|\mathbb{D}(\V{a})|\), \(\mathrm{n}^\mathrm{n}=|\mathbb{N}(\overline{\V{r}})|\), and \(\mathrm{n}^\mathrm{c}=|\mathbb{C}(\V{a},\overline{\V{r}})|=\mathrm{n}^\mathrm{g}-\mathrm{n}^\mathrm{d}-\mathrm{n}^\mathrm{n}\). These three counts are deterministic functions of \((\V{a},\overline{\V{r}})\).
\vspace{-1mm}
\subsection{Joint PDF and its Factorization}

By the chain rule, and inserting the exclusivity factor \(\boldsymbol{\psi}(\V{a},\V{b})\), the joint \ac{PDF} decomposes into the legacy prior, an association model, and a likelihood conditioned on the association,\footnote{\color{black}In this appendix, \(f(\cdot,\mathbf Z^G)\) denotes the density conditional on the fixed partition \(\mathcal P=\mathcal C(\mathbf Z)\) supplied by the external clustering algorithm. Dependence on \(\mathcal P\) is suppressed for readability. Equation~\eqref{eq:joint_pdf_structured} is therefore an exact chain-rule decomposition of this conditional density. The subsequent Poisson group-count and conditional group-independence models approximate its factors. We do not claim to model the probability that the clustering algorithm produces \(\mathcal P\). Moreover, the conditional model assumes that each target generates at most one group and each group contains measurements from at most one target. Clustering splits and cross-target merges are not represented.}
\begin{equation}
\begin{aligned}
f(\underline{\V{Y}},\overline{\V{Y}},\V{a},\V{b},\V{Z}^{G})
={}& f(\underline{\V{Y}})\,
\boldsymbol{\psi}(\V{a},\V{b})\,
f(\V{a},\V{b},\overline{\V{Y}}\mid\underline{\V{Y}})\\
&\times f(\V{Z}^{G}\mid\underline{\V{Y}},\overline{\V{Y}},\V{a},\V{b}).
\end{aligned}
\label{eq:joint_pdf_structured}
\end{equation}
The legacy prior is \eqref{eq:pt_legacy_prior}. We factorize the association model and the likelihood in turn, omitting \(\boldsymbol{\psi}(\V{a},\V{b})\) until the final result.

\subsubsection{Association model} We split the association model into an existence-and-association part and the newborn densities, \(f(\V{a},\V{b},\overline{\V{Y}}\mid\underline{\V{Y}})=f(\overline{\V{r}},\V{a},\V{b}\mid\underline{\V{Y}})\,f(\overline{\Set{X}}\mid\overline{\V{r}})\). The numbers of clutter groups and detected newborns are modeled as Poisson \cite[Eqs.~(3),(18)]{meyer2018messagepassingSupplement}, \(f(\mathrm{n}^\mathrm{c})\approx\mucgroup^{\,\mathrm{n}^\mathrm{c}}\mathrm e^{-\mucgroup}/\mathrm{n}^\mathrm{c}!\) and \(f(\mathrm{n}^\mathrm{n})=\mu_\mathrm{n}^{\,\mathrm{n}^\mathrm{n}}\mathrm e^{-\mu_\mathrm{n}}/\mathrm{n}^\mathrm{n}!\), where \(\mu_\mathrm{n}\) is the mean number of detected newborns. The clustering step turns the clutter \ac{PPP} of mean \(\mu_c\) into groups. Since one group may contain several clutter detections, \(\mucgroup\neq\mu_c\) in general, and we approximate the induced group count as Poisson with mean \(\mucgroup\). Conditioned on the counts, the newborn existence pattern is uniform, \(f(\overline{\V{r}}\mid\cdot)=\binom{\mathrm{n}^\mathrm{c}+\mathrm{n}^\mathrm{n}}{\mathrm{n}^\mathrm{n}}^{-1}\), and the associations are uniform over the \(\Pm{\mathrm{n}^\mathrm{g}}{\mathrm{n}^\mathrm{d}}\) feasible assignments, \(f(\V{a},\V{b}\mid\cdot)=1/\Pm{\mathrm{n}^\mathrm{g}}{\mathrm{n}^\mathrm{d}}=(\mathrm{n}^\mathrm{c}+\mathrm{n}^\mathrm{n})!/\mathrm{n}^\mathrm{g}!\) \cite[Eqs.~(22)--(24)]{meyer2018messagepassingSupplement}. A legacy target contributes \(\underline{r}^\mathrm{p}\) when detected and \(1-\underline{r}^\mathrm{p}(1-\mathrm e^{-\mu_{\mathrm{m}}})\) when not, where \(\mu_{\mathrm{m}}=\mu_{\mathrm{m}}(\V{x},\M{E})\) is the expected number of target measurements \eqref{main-eq:system_measurement_count_mean}. The newborn densities are \(\prod_{j\in\mathbb{N}(\overline{\V{r}})}f_\mathrm{n}(\overline{\V{x}}^j,\overline{\M{E}}^j)\prod_{j\notin\mathbb{N}(\overline{\V{r}})}f_d(\overline{\V{x}}^j,\overline{\M{E}}^j)\), with \(f_\mathrm{n}\) the newborn density and \(f_d\) a dummy density that integrates to one. Multiplying these factors, the \(\mathrm{n}^\mathrm{c}!\), \(\mathrm{n}^\mathrm{n}!\), and \((\mathrm{n}^\mathrm{c}+\mathrm{n}^\mathrm{n})!\) terms cancel, and the association model \revGrammar{is} 
\begin{equation}
\begin{aligned}
&f(\V{a},\V{b},\overline{\V{Y}}\mid\underline{\V{Y}})
\approx \frac{\mucgroup^{\,\mathrm{n}^\mathrm{c}}\mathrm e^{-\mucgroup}\,\mu_\mathrm{n}^{\,\mathrm{n}^\mathrm{n}}\mathrm e^{-\mu_\mathrm{n}}}{\mathrm{n}^\mathrm{g}!}
\\
&\quad\times
\prod_{p\in\mathbb{D}(\V{a})}\underline{r}^{\mathrm{p}}
\prod_{p\notin\mathbb{D}(\V{a})}\bigl[1-\underline{r}^{\mathrm{p}}(1-\mathrm e^{-\mu_{\mathrm{m}}})\bigr]
\\
&\quad\times
\prod_{j\in\mathbb{N}(\overline{\V{r}})}f_\mathrm{n}(\overline{\V{x}}^{j},\overline{\M{E}}^{j})
\prod_{j\notin\mathbb{N}(\overline{\V{r}})}f_d(\overline{\V{x}}^{j},\overline{\M{E}}^{j}).
\end{aligned}
\label{eq:I13}
\end{equation}

\subsubsection{Likelihood} Conditioned on the association, the groups are generated independently,
\begin{equation}
\begin{aligned}
&f(\V{Z}^{G}\mid\underline{\V{Y}},\overline{\V{Y}},\V{a},\V{b})\\
&\quad=\prod_{p\in\mathbb{D}(\V{a})} f(\V{Z}^{a^{p}}\mid\underline{\V{x}}^{p},\underline{\M{E}}^{p})\\
&\qquad\times\prod_{j\in\mathbb{N}(\overline{\V{r}})} f(\V{Z}^{j}\mid\overline{\V{x}}^{j},\overline{\M{E}}^{j})
\prod_{c\in\mathbb{C}(\V{a},\overline{\V{r}})} f_c(\V{Z}^{c}).
\end{aligned}
\label{eq:pt_group_like_count_split}
\end{equation}
A detected legacy target follows the standard extended-target \ac{PPP} model \cite[Eq.~(24)]{Yuxuanxia_standardmodel}, \(f(\V{Z}\mid\V{x},\M{E})=\mathrm e^{-\mu_{\mathrm{m}}}\prod_{\V{z}\in\V{Z}}\mu_{\mathrm{m}} f(\V{z}\mid\V{x},\M{E})\). A newborn is detected by assumption, so its likelihood is the zero-truncated version \(f(\V{Z}\mid\V{x},\M{E})=[\mathrm e^{-\mu_{\mathrm{m}}}/(1-\mathrm e^{-\mu_{\mathrm{m}}})]\prod_{\V{z}\in\V{Z}}\mu_{\mathrm{m}} f(\V{z}\mid\V{x},\M{E})\). A clutter group is i.i.d., \(f_c(\V{Z}^{c})=\prod_{\V{z}\in\V{Z}^{c}}f_c(\V{z})\), and since \(\mathbb{D}\), \(\mathbb{N}\), and \(\mathbb{C}\) partition the groups, \(\prod_{c=1}^{\mathrm{n}^\mathrm{g}}f_c(\V{Z}^{c})=\prod_{i=1}^{m}f_c(\V{z}^{i})\).

\subsubsection{Factorized joint PDF} Combining the prior \eqref{eq:pt_legacy_prior}, the association model \eqref{eq:I13}, and the likelihood \eqref{eq:pt_group_like_count_split}, dividing each detected group by \(f_c\), and redistributing \(\mucgroup^{\,\mathrm{n}^\mathrm{c}}=\mucgroup^{\,\mathrm{n}^\mathrm{g}}\prod_{p\in\mathbb{D}(\V{a})}\mucgroup^{-1}\prod_{j\in\mathbb{N}(\overline{\V{r}})}\mucgroup^{-1}\), all terms independent of the unknowns collect into the constant
\begin{equation}
C(\mucgroup,\mu_\mathrm{n},\V{Z}^{G})
\triangleq
\frac{\mucgroup^{\,\mathrm{n}^\mathrm{g}}\,\mathrm e^{-\mucgroup}\,\mathrm e^{-\mu_\mathrm{n}}}{\mathrm{n}^\mathrm{g}!}
\prod_{i=1}^{m} f_c(\V{z}^{i}),
\label{eq:pt_constant_C}
\end{equation}
where \(\mathrm{n}^\mathrm{g}\) and \(m\) are determined by \(\V{Z}^{G}\). To lighten the per-target factors, we suppress the component state, writing \(\mu_{\mathrm{m}}\) for \(\mu_{\mathrm{m}}(\V{x},\M{E})\), \(f(\V{z})\) for \(f(\V{z}\mid\V{x},\M{E})\), and \(f_\mathrm{n},f_d\) for \(f_\mathrm{n}(\overline{\V{x}}^{j},\overline{\M{E}}^{j}),f_d(\overline{\V{x}}^{j},\overline{\M{E}}^{j})\). The legacy likelihood factor is
\begingroup
\footnotesize
\setlength{\abovedisplayskip}{1pt plus 1pt minus 1pt}
\setlength{\belowdisplayskip}{1pt plus 1pt minus 1pt}
\setlength{\abovedisplayshortskip}{1pt plus 1pt minus 1pt}
\setlength{\belowdisplayshortskip}{1pt plus 1pt minus 1pt}
\setlength{\jot}{0pt}
\renewcommand{\arraystretch}{0.82}
\begin{equation}
\underline{l}(\underline{\V{y}}^{p},a^{p}\mathpunct{,}\V{Z}^{G})
=
\begin{cases}
\begin{aligned}[t]
&\dfrac{\mathrm e^{-\mu_{\mathrm{m}}}}{\mucgroup}\\
&\times\prod_{\V{z}\in\V{Z}^{a^{p}}}\dfrac{\mu_{\mathrm{m}} f(\V{z})}{f_c(\V{z})},
\end{aligned}
& a^{p}\neq 0,\ \underline{r}^{\mathrm{p}}=1,\\[0.3ex]
\mathrm e^{-\mu_{\mathrm{m}}}, & a^{p}=0,\ \underline{r}^{\mathrm{p}}=1,\\[0.2ex]
1, & a^{p}=0,\ \underline{r}^{\mathrm{p}}=0,\\[0.2ex]
0, & a^{p}\neq 0,\ \underline{r}^{\mathrm{p}}=0,
\end{cases}
\label{eq:legacy_l_function}
\end{equation}
\endgroup
and the newborn likelihood factor is
\begingroup
\footnotesize
\setlength{\abovedisplayskip}{1pt plus 1pt minus 1pt}
\setlength{\belowdisplayskip}{1pt plus 1pt minus 1pt}
\setlength{\abovedisplayshortskip}{1pt plus 1pt minus 1pt}
\setlength{\belowdisplayshortskip}{1pt plus 1pt minus 1pt}
\setlength{\jot}{0pt}
\renewcommand{\arraystretch}{0.82}
\begin{equation}
\overline{l}(\overline{\V{y}}^{j},b^{j}\mathpunct{,}\V{Z}^{j})
=
\begin{cases}
\begin{aligned}[t]
&\dfrac{\mu_\mathrm{n} f_\mathrm{n}\,\mathrm e^{-\mu_{\mathrm{m}}}}{\mucgroup(1-\mathrm e^{-\mu_{\mathrm{m}}})}\\
&\times\prod_{\V{z}\in\V{Z}^{j}}\dfrac{\mu_{\mathrm{m}} f(\V{z})}{f_c(\V{z})},
\end{aligned}
& b^{j}=0,\ \overline{r}^{j}=1,\\[0.3ex]
0, & b^{j}\neq 0,\ \overline{r}^{j}=1,\\[0.2ex]
f_d, & \overline{r}^{j}=0.
\end{cases}
\label{eq:new_l_function}
\end{equation}
\endgroup
With these definitions, the joint \ac{PDF} factorizes into the per-target product used in \secref{main-sec:single_bs_method}.
The corresponding factor graph is shown in \figref{fig:supp_factor_graph}.

\subsection{Sum-Product Messages and Beliefs}
\label{sec:appendix_beliefs}
For the detailed message recursions and posterior-belief calculations, we
refer the reader to \cite[Secs.~VI and IX-A]{meyer2018messagepassing}, with
the corrections in \cite{meyer2018messagepassingSupplement}. We apply the
same sum-product schedule to the bipartite association subgraph, with
measurement groups in place of individual measurements and augmented
kinematic, extent, and existence states in place of point-target states.
The legacy prior and the group likelihood factors are given by
\eqref{eq:pt_legacy_prior}, \eqref{eq:legacy_l_function}, and
\eqref{eq:new_l_function}, respectively. There is one newborn component per
group, whose birth density is already included in \(\overline l\).

%% file: Drawings/graph_drawing.tex
\definecolor{predarrow}{HTML}{DDEFF7} 
\definecolor{predline}{HTML}{6BA7BE}  

\newcommand{\predarrowheight}{15mm} 
\newcommand{\predarrowwidth}{20mm}  

\begin{tikzpicture}[
    >=Stealth,
    every node/.style={font=\tiny},
    obs/.style   ={circle,draw=black!70,fill=white,inner sep=1pt,minimum size=7mm,line width=0.35pt},
    latent/.style={circle,draw=black!70,fill=white,inner sep=1pt,minimum size=7mm,line width=0.35pt},
    box/.style   ={draw=black!70,fill=black!2,rounded corners=2pt,inner sep=3pt,minimum height=8mm,line width=0.35pt},
    pot/.style   ={draw=black!75,fill=plotD!10,rounded corners=2pt,inner sep=4pt,line width=0.4pt},
    predarrow/.style={
        single arrow,
        draw=none,
        fill=predarrow,
        fill opacity=0.72,
        minimum height=\predarrowheight,
        minimum width=\predarrowwidth,
        single arrow head extend=0.7\predarrowheight
    }
]

\node[box]   (f1)  at (0,   1) {$f(\underline{\V{y}}_k^{1})$};
\node[box]   (fp)  at (0,  -1.0) {$f(\underline{\V{y}}_k^{n_p})$};

\node[obs]   (y1)  at (1.5,  1) {$\underline{\V{y}}_k^{1}$};
\node[obs]   (yp)  at (1.5, -1.0) {$\underline{\V{y}}_k^{n_p}$};

\node[box]   (q1)  at (3.4,  1) {$\underline{l}(\underline{\V{y}}_k^{1},a_k^{1}\mathpunct{,}\V{Z}_k^{G})$};
\node[box]   (qp)  at (3.4, -1.0) {$\underline{l}(\underline{\V{y}}_k^{n_p},a_k^{n_p}\mathpunct{,}\V{Z}_k^{G})$};

\node[latent] (a1) at (5.0,  1) {$a_k^{1}$};
\node[latent] (ap) at (5.0, -1.0) {$a_k^{n_p}$};

\node[pot]    (psi) at (6.4,0) {$\boldsymbol{\psi}(\V{a}_k,\V{b}_k)$};

\node[latent] (b1)   at (8.2,  1) {$b_k^{1}$};
\node[latent] (bm)   at (8.2, -1.0) {$b_k^{n_g}$};

\node[box]    (v1)   at (10.1,  1) {$\overline{l}(\overline{\V{y}}_k^{1},b_k^{1}\mathpunct{,}\V{Z}_k^{1})$};
\node[box]    (vm)   at (10.1, -1.0) {$\overline{l}(\overline{\V{y}}_k^{n_g},b_k^{n_g}\mathpunct{,}\V{Z}_k^{n_g})$};

\node[obs]    (ybar1) at (12.0,  1) {$\overline{\V{y}}_k^{1}$};
\node[obs]    (ybarm) at (12.0, -1.0) {$\overline{\V{y}}_k^{n_g}$};

\draw[draw=black!65,line cap=round] (f1) -- (y1);
\draw[draw=black!65,line cap=round] (fp) -- (yp);

\draw[draw=black!65,line cap=round] (y1) -- (q1);
\draw[draw=black!65,line cap=round] (yp) -- (qp);

\draw[draw=black!65,line cap=round] (q1) -- (a1);
\draw[draw=black!65,line cap=round] (qp) -- (ap);

\draw[draw=black!65,line cap=round] (a1) -- (psi);
\draw[draw=black!65,line cap=round] (ap) -- (psi);

\draw[draw=black!65,line cap=round] (psi) -- (b1);
\draw[draw=black!65,line cap=round] (psi) -- (bm);

\draw[draw=black!65,line cap=round] (b1) -- (v1);
\draw[draw=black!65,line cap=round] (bm) -- (vm);

\draw[draw=black!65,line cap=round] (v1) -- (ybar1);
\draw[draw=black!65,line cap=round] (vm) -- (ybarm);

\coordinate (midF)    at ($(f1)!0.5!(fp)$);
\coordinate (midY)    at ($(y1)!0.5!(yp)$);
\coordinate (midQ)    at ($(q1)!0.5!(qp)$);
\coordinate (midA)    at ($(a1)!0.5!(ap)$);
\coordinate (midB)    at ($(b1)!0.5!(bm)$);
\coordinate (midV)    at ($(v1)!0.5!(vm)$);
\coordinate (midYbar) at ($(ybar1)!0.5!(ybarm)$);

\foreach \m in {midF,midY,midQ,midA,midB,midV,midYbar}{
  \node[font=\large] at (\m) {$\vdots$};
}

\begin{scope}[on background layer]
\node[predarrow] (predL) at (-2.2,0) {};
\node[predarrow] (predR) at (13.3,0) {};
\end{scope}

\draw[densely dashed, draw=black!60]
  ($(predL.west)+(0,2.2)$) -- ($(predL.west)+(0,-2.2)$);
\node[above] at ($(predL.west)+(0,2.3)$) {$k-1$};

\draw[densely dashed, draw=black!60]
  ($(predR.west)+(0,2.2)$) -- ($(predR.west)+(0,-2.2)$);
\node[above] at ($(predR.west)+(0,2.3)$) {$k$};

\node[font=\tiny] at (predL.center) {prediction};
\node[font=\tiny] at (predR.center) {prediction};

  \draw[very thick,draw=predline]
    (predL.north) -- (predL.north |- f1.west) -- (f1.west);                             

\draw[very thick,draw=predline]
    (predL.south)
    -- (predL.south |- fp.west)
    -- (fp.west);

\draw[very thick,draw=predline]
  (y1.south) -- ++(0,-0.3) coordinate (yL0)
  -- (predR.west |- yL0);

\draw[very thick,draw=predline]
  (yp.north) -- ++(0,0.3) coordinate (yL1)
  -- (predR.west |- yL1);

\draw[very thick,draw=predline]
  (ybar1.south) -- ++(0,-0.5) coordinate (yR0)
  -- (predR.west |- yR0);

\draw[very thick,draw=predline]
  (ybarm.north) -- ++(0,0.5) coordinate (yR1)
  -- (predR.west |- yR1);

\end{tikzpicture}

%% file: main_combined.bbl
\begin{thebibliography}{10}
\providecommand{\url}[1]{#1}
\csname url@samestyle\endcsname
\providecommand{\newblock}{\relax}
\providecommand{\bibinfo}[2]{#2}
\providecommand{\BIBentrySTDinterwordspacing}{\spaceskip=0pt\relax}
\providecommand{\BIBentryALTinterwordstretchfactor}{4}
\providecommand{\BIBentryALTinterwordspacing}{\spaceskip=\fontdimen2\font plus
\BIBentryALTinterwordstretchfactor\fontdimen3\font minus
  \fontdimen4\font\relax}
\providecommand{\BIBforeignlanguage}[2]{{%
\expandafter\ifx\csname l@#1\endcsname\relax
\typeout{** WARNING: IEEEtran.bst: No hyphenation pattern has been}%
\typeout{** loaded for the language `#1'. Using the pattern for}%
\typeout{** the default language instead.}%
\else
\language=\csname l@#1\endcsname
\fi
#2}}
\providecommand{\BIBdecl}{\relax}
\BIBdecl

\bibitem{bai2025beliefpropagationbasedtargethandover}
L.~Bai, Y.~Ge, and H.~Wymeersch, ``Belief propagation-based target handover in
  distributed integrated sensing and communication,'' in \emph{Proc. IEEE
  Global Commun. Conf. (GLOBECOM)}, 2025, pp. 752--757.

\bibitem{ISAC}
N.~Gonz{\'a}lez-Prelcic \emph{et~al.}, ``The integrated sensing and
  communication revolution for {6G}: Vision, techniques, and applications,''
  \emph{Proc. IEEE}, vol. 112, no.~7, pp. 676--723, Jul. 2024.

\bibitem{meng2024cooperativesensing}
K.~Meng and C.~Masouros, ``Cooperative sensing and communication for {ISAC}
  networks: Performance analysis and optimization,'' in \emph{Proc. IEEE 25th
  Int. Workshop Signal Process. Adv. Wireless Commun. (SPAWC)}, 2024, pp.
  446--450.

\bibitem{meng2025cooperativeisacscaling}
K.~Meng, C.~Masouros, A.~P. Petropulu, and L.~Hanzo, ``Cooperative {ISAC}
  networks: Performance analysis, scaling laws, and optimization,'' \emph{IEEE
  Trans. Wireless Commun.}, vol.~24, no.~2, pp. 877--892, Feb. 2025.

\bibitem{strinati2024distributedintelligentintegratedsensing}
E.~C. Strinati \emph{et~al.}, ``Toward distributed and intelligent integrated
  sensing and communications for {6G} networks,'' \emph{IEEE Wireless Commun.},
  vol.~32, no.~1, pp. 60--67, Feb. 2025.

\bibitem{strinati2024disacapproach}
E.~C. Strinati \emph{et~al.}, ``Distributed intelligent integrated sensing and
  communications: The {6G}-{DISAC} approach,'' in \emph{Proc. Joint Eur. Conf.
  Netw. Commun. \& 6G Summit (EuCNC/6G Summit)}, 2024, pp. 392--397.

\bibitem{granstrom2017extendedobjecttrackingintroduction}
K.~Granstr{\"o}m, M.~Baum, and S.~Reuter, ``Extended object tracking:
  Introduction, overview, and applications,'' \emph{J. Adv. Inf. Fusion},
  vol.~12, no.~2, pp. 139--174, Dec. 2017.

\bibitem{granstrom2012gmphd}
K.~Granstr{\"o}m, C.~Lundquist, and U.~Orguner, ``Extended target tracking
  using a {Gaussian}-mixture {PHD} filter,'' \emph{IEEE Trans. Aerosp.
  Electron. Syst.}, vol.~48, no.~4, pp. 3268--3286, Oct. 2012.

\bibitem{lundquist2013cphd}
C.~Lundquist, K.~Granstr{\"o}m, and U.~Orguner, ``An extended target {CPHD}
  filter and a {Gamma} {Gaussian} inverse {Wishart} implementation,''
  \emph{IEEE J. Sel. Topics Signal Process.}, vol.~7, no.~3, pp. 472--483, Jun.
  2013.

\bibitem{granstrom2012phd}
K.~Granstr{\"o}m and U.~Orguner, ``A {PHD} filter for tracking multiple
  extended targets using random matrices,'' \emph{IEEE Trans. Signal Process.},
  vol.~60, no.~11, pp. 5657--5671, Nov. 2012.

\bibitem{beard2016labeledeot}
M.~Beard, S.~Reuter, K.~Granstr{\"o}m, B.-T. Vo, B.-N. Vo, and A.~Scheel,
  ``Multiple extended target tracking with labeled random finite sets,''
  \emph{IEEE Trans. Signal Process.}, vol.~64, no.~7, pp. 1638--1653, Apr.
  2016.

\bibitem{granstrom2020pmbm}
K.~Granstr{\"o}m, M.~Fatemi, and L.~Svensson, ``{Poisson} multi-{Bernoulli}
  mixture conjugate prior for multiple extended target filtering,'' \emph{IEEE
  Trans. Aerosp. Electron. Syst.}, vol.~56, no.~1, pp. 208--225, Feb. 2020.

\bibitem{xia2022pmbapprox}
Y.~Xia, K.~Granstr{\"o}m, L.~Svensson, M.~Fatemi, {\'A}.~F.
  Garc{\'i}a-Fern{\'a}ndez, and J.~L. Williams, ``{Poisson} multi-{Bernoulli}
  approximations for multiple extended object filtering,'' \emph{IEEE Trans.
  Aerosp. Electron. Syst.}, vol.~58, no.~2, pp. 890--906, Apr. 2022.

\bibitem{sampling_method}
K.~Granstr{\"o}m, L.~Svensson, S.~Reuter, Y.~Xia, and M.~Fatemi,
  ``Likelihood-based data association for extended object tracking using
  sampling methods,'' \emph{IEEE Trans. Intell. Veh.}, vol.~3, no.~1, pp.
  30--45, Mar. 2018.

\bibitem{xia2023trajectorybp}
Y.~Xia, {\'A}.~F. Garc{\'i}a-Fern{\'a}ndez, F.~Meyer, J.~L. Williams,
  K.~Granstr{\"o}m, and L.~Svensson, ``Trajectory {PMB} filters for extended
  object tracking using belief propagation,'' \emph{IEEE Trans. Aerosp.
  Electron. Syst.}, vol.~59, no.~6, pp. 9312--9331, Dec. 2023.

\bibitem{meyer2020scalable}
F.~Meyer and M.~Z. Win, ``Scalable data association for extended object
  tracking,'' \emph{IEEE Trans. Signal Inf. Process. Netw.}, vol.~6, pp.
  491--507, 2020.

\bibitem{meyer2021scalable}
F.~Meyer and J.~L. Williams, ``Scalable detection and tracking of geometric
  extended objects,'' \emph{IEEE Trans. Signal Process.}, vol.~69, pp.
  6283--6298, 2021.

\bibitem{fusion}
G.~Koliander, Y.~El-Laham, P.~M. Djuri{\'c}, and F.~Hlawatsch, ``Fusion of
  probability density functions,'' \emph{Proc. IEEE}, vol. 110, no.~4, pp.
  404--453, Apr. 2022.

\bibitem{moratuwage2022multi}
D.~Moratuwage, B.-N. Vo, B.-T. Vo, and C.~Shim, ``Multi-scan multi-sensor
  multi-object state estimation,'' \emph{IEEE Trans. Signal Process.}, vol.~70,
  pp. 5429--5442, 2022.

\bibitem{nannuru2016multisensor}
S.~Nannuru, S.~Blouin, M.~Coates, and M.~Rabbat, ``Multisensor {CPHD} filter,''
  \emph{IEEE Trans. Aerosp. Electron. Syst.}, vol.~52, no.~4, pp. 1834--1854,
  Aug. 2016.

\bibitem{vo2019multisensorglmb}
B.-N. Vo, B.-T. Vo, and M.~Beard, ``Multi-sensor multi-object tracking with the
  generalized labeled multi-{Bernoulli} filter,'' \emph{IEEE Trans. Signal
  Process.}, vol.~67, no.~23, pp. 5952--5967, Dec. 2019.

\bibitem{frohle2019multisensorpmb}
M.~Fr{\"o}hle, C.~Lindberg, K.~Granstr{\"o}m, and H.~Wymeersch, ``Multisensor
  {Poisson} multi-{Bernoulli} filter for joint target--sensor state tracking,''
  \emph{IEEE Trans. Intell. Veh.}, vol.~4, no.~4, pp. 609--621, Dec. 2019.

\bibitem{BraGagSolRicGabLepNicWilBraWin:JSP2022}
M.~Brambilla \emph{et~al.}, ``Cooperative localization and multitarget tracking
  in agent networks with the sum-product algorithm,'' \emph{IEEE Open J. Signal
  Process.}, vol.~3, pp. 169--195, Mar. 2022.

\bibitem{uney2013distributed}
M.~Uney, D.~E. Clark, and S.~J. Julier, ``Distributed fusion of {PHD} filters
  via exponential mixture densities,'' \emph{IEEE J. Sel. Topics Signal
  Process.}, vol.~7, no.~3, pp. 521--531, Jun. 2013.

\bibitem{battistelli2013consensus_cphd}
G.~Battistelli, L.~Chisci, C.~Fantacci, A.~Farina, and A.~Graziano, ``Consensus
  {CPHD} filter for distributed multitarget tracking,'' \emph{IEEE J. Sel.
  Topics Signal Process.}, vol.~7, no.~3, pp. 508--520, Jun. 2013.

\bibitem{li2021distributedmultiviewcphd}
G.~Li, G.~Battistelli, L.~Chisci, W.~Yi, and L.~Kong, ``Distributed multi-view
  multi-target tracking based on {CPHD} filtering,'' \emph{Signal Process.},
  vol. 188, 2021, {Art. no.}~108210.

\bibitem{fantacci2018distributed}
C.~Fantacci, B.-N. Vo, B.-T. Vo, G.~Battistelli, and L.~Chisci, ``Robust fusion
  for multisensor multiobject tracking,'' \emph{IEEE Signal Process. Lett.},
  vol.~25, no.~5, pp. 640--644, May 2018.

\bibitem{saucan2018distributed}
A.-A. Saucan and P.~K. Varshney, ``Distributed cross-entropy $\delta$-{GLMB}
  filter for multi-sensor multi-target tracking,'' in \emph{Proc. 21st Int.
  Conf. Inf. Fusion (FUSION)}, 2018, pp. 1559--1566.

\bibitem{frohle2020decentralizedpmb}
M.~Fr{\"o}hle, K.~Granstr{\"o}m, and H.~Wymeersch, ``Decentralized {Poisson}
  multi-{Bernoulli} filtering for vehicle tracking,'' \emph{IEEE Access},
  vol.~8, pp. 126\,414--126\,427, 2020.

\bibitem{garcia2025distributed}
{\'A}.~F. Garc{\'i}a-Fern{\'a}ndez and G.~Battistelli, ``Distributed {Poisson}
  multi-{Bernoulli} filtering via generalized covariance intersection,''
  \emph{IEEE Trans. Signal Process.}, vol.~74, pp. 246--257, 2026.

\bibitem{van2021distributed}
H.~V. Nguyen, H.~Rezatofighi, B.-N. Vo, and D.~C. Ranasinghe, ``Distributed
  multi-object tracking under limited field of view sensors,'' \emph{IEEE
  Trans. Signal Process.}, vol.~69, pp. 5329--5344, 2021.

\bibitem{chen2025distributed}
F.~Chen, H.~V. Nguyen, A.~S. Leong, S.~Panicker, R.~Baker, and D.~C.
  Ranasinghe, ``Distributed multi-object tracking under limited field of view
  heterogeneous sensors with density clustering,'' \emph{Signal Process.}, vol.
  228, 2025, {Art. no.}~109703.

\bibitem{xiong2024distributedlimitedfov}
C.~Xiong, M.~Hu, H.~Lu, and F.~Zhao, ``Distributed multi-sensor fusion for
  multi-group/extended target tracking with different limited fields of view,''
  \emph{Appl. Sci.}, vol.~14, no.~21, 2024, {Art. no.}~9627.

\bibitem{SharmaTSP2019}
P.~Sharma, A.-A. Saucan, D.~J. Bucci Jr., and P.~K. Varshney, ``Decentralized
  {Gaussian} filters for cooperative self-localization and multi-target
  tracking,'' \emph{IEEE Trans. Signal Process.}, vol.~67, no.~22, pp.
  5896--5911, Nov. 2019.

\bibitem{amosa2023multicamera}
T.~I. Amosa \emph{et~al.}, ``Multi-camera multi-object tracking: A review of
  current trends and future advances,'' \emph{Neurocomputing}, vol. 552, 2023,
  {Art. no.}~126558.

\bibitem{wang2022camerahandoff}
P.-T. Wang, J.-S. Sheu, and J.-H. Lai, ``Camera handoff for multicamera
  multiobject tracking,'' \emph{Sens. Mater.}, vol.~34, no.~2, pp. 563--574,
  Feb. 2022.

\bibitem{tesfaye2019multitarget}
Y.~T. Tesfaye, E.~Zemene, A.~Prati, M.~Pelillo, and M.~Shah, ``Multi-target
  tracking in multiple non-overlapping cameras using fast-constrained dominant
  sets,'' \emph{Int. J. Comput. Vis.}, vol. 127, no.~9, pp. 1303--1320, Sep.
  2019.

\bibitem{ge2024targethandoverdistributedintegrated}
Y.~Ge \emph{et~al.}, ``Target handover in distributed integrated sensing and
  communication,'' in \emph{Proc. IEEE Int. Conf. Commun. (ICC)}, 2025, pp.
  3960--3965.

\bibitem{kim2019optimal}
J.~Kim, D.-H. Cho, W.-C. Lee, S.-S. Park, and H.-L. Choi, ``Optimal target
  assignment with seamless handovers for networked radars,'' \emph{Sensors},
  vol.~19, no.~20, Oct. 2019, {Art. no.}~4555.

\bibitem{fan_handover}
C.~Zhao, Y.~Feng, H.~Luo, F.~Gao, F.~Liu, and S.~Jin, ``Networked {ISAC}-based
  {UAV} tracking and handover toward low-altitude economy,'' \emph{IEEE Trans.
  Wireless Commun.}, vol.~24, no.~9, pp. 7670--7685, Sep. 2025.

\bibitem{ribeiro2025mobilitymanagementintegratedsensing}
\BIBentryALTinterwordspacing
Y.~S. Ribeiro, B.~Makki, A.~L.~F. de~Almeida, {Fazal-E-Asim}, and G.~Fodor,
  ``Mobility management in integrated sensing and communications networks,''
  2025, {arXiv:2501.08159}. [Online]. Available:
  \url{https://arxiv.org/abs/2501.08159}
\BIBentrySTDinterwordspacing

\bibitem{liesegang2026scalableisac}
S.~Liesegang, S.~Buzzi, and C.~D'Andrea, ``Scalable integrated sensing and
  communications for multi-target detection and tracking in cell-free massive
  {MIMO}: A unified framework,'' \emph{IEEE Trans. Commun.}, vol.~74, pp.
  2777--2793, 2026.

\bibitem{venugopal2017channel}
K.~Venugopal, A.~Alkhateeb, N.~Gonz{\'a}lez-Prelcic, and R.~W. Heath Jr.,
  ``Channel estimation for hybrid architecture-based wideband millimeter wave
  systems,'' \emph{IEEE J. Sel. Areas Commun.}, vol.~35, no.~9, pp. 1996--2009,
  Sep. 2017.

\bibitem{jiang2021beamspace}
\BIBentryALTinterwordspacing
F.~Jiang, F.~Wen, Y.~Ge, M.~Zhu, H.~Wymeersch, and F.~Tufvesson, ``Beamspace
  multidimensional {ESPRIT} approaches for simultaneous localization and
  communications,'' 2021, {arXiv:2111.07450}. [Online]. Available:
  \url{https://arxiv.org/abs/2111.07450}
\BIBentrySTDinterwordspacing

\bibitem{ester1996density}
M.~Ester, H.-P. Kriegel, J.~Sander, and X.~Xu, ``A density-based algorithm for
  discovering clusters in large spatial databases with noise,'' in \emph{Proc.
  2nd Int. Conf. Knowl. Discovery Data Mining (KDD)}.\hskip 1em plus 0.5em
  minus 0.4em\relax AAAI Press, 1996, pp. 226--231.

\bibitem{richards2014fundamentals}
M.~A. Richards, \emph{Fundamentals of Radar Signal Processing}, 2nd~ed.\hskip
  1em plus 0.5em minus 0.4em\relax New York, NY, USA: McGraw-Hill Education,
  2014.

\bibitem{meyer2018messagepassing}
F.~Meyer \emph{et~al.}, ``Message passing algorithms for scalable multitarget
  tracking,'' \emph{Proc. IEEE}, vol. 106, no.~2, pp. 221--259, Feb. 2018.

\end{thebibliography}

\begin{thebibliography}{1}
\providecommand{\url}[1]{#1}
\csname url@samestyle\endcsname
\providecommand{\newblock}{\relax}
\providecommand{\bibinfo}[2]{#2}
\providecommand{\BIBentrySTDinterwordspacing}{\spaceskip=0pt\relax}
\providecommand{\BIBentryALTinterwordstretchfactor}{4}
\providecommand{\BIBentryALTinterwordspacing}{\spaceskip=\fontdimen2\font plus
\BIBentryALTinterwordstretchfactor\fontdimen3\font minus
  \fontdimen4\font\relax}
\providecommand{\BIBforeignlanguage}[2]{{%
\expandafter\ifx\csname l@#1\endcsname\relax
\typeout{** WARNING: IEEEtran.bst: No hyphenation pattern has been}%
\typeout{** loaded for the language `#1'. Using the pattern for}%
\typeout{** the default language instead.}%
\else
\language=\csname l@#1\endcsname
\fi
#2}}
\providecommand{\BIBdecl}{\relax}
\BIBdecl

\bibitem{meyer2018messagepassing}
F.~Meyer \emph{et~al.}, ``Message passing algorithms for scalable multitarget
  tracking,'' \emph{Proc. IEEE}, vol. 106, no.~2, pp. 221--259, Feb. 2018.

\bibitem{meyer2018messagepassingSupplement}
\BIBentryALTinterwordspacing
F.~Meyer \emph{et~al.}, ``Message passing algorithms for scalable multitarget
  tracking: Supplementary material,'' Jul.~2, 2019, {Errata and supplementary
  material}. [Online]. Available:
  \url{https://winslab.lids.mit.edu/wp-content/uploads/2019/07/ProcIEEE_MTT_Suppl_Mat.pdf}
\BIBentrySTDinterwordspacing

\bibitem{meyer2021scalable}
F.~Meyer and J.~L. Williams, ``Scalable detection and tracking of geometric
  extended objects,'' \emph{IEEE Trans. Signal Process.}, vol.~69, pp.
  6283--6298, 2021.

\bibitem{Yuxuanxia_standardmodel}
{\'A}.~F. Garc{\'i}a-Fern{\'a}ndez, J.~L. Williams, L.~Svensson, and Y.~Xia,
  ``A {Poisson} multi-{Bernoulli} mixture filter for coexisting point and
  extended targets,'' \emph{IEEE Trans. Signal Process.}, vol.~69, pp.
  2600--2610, 2021.

\end{thebibliography}
